\newtheorem{theorem}{Theorem}[section]
\newtheorem{lemma}[theorem]{Lemma}
\newtheorem{proposition}{Proposition}[section]
\theoremstyle{definition}
\theoremstyle{remark}
\newtheorem{remark}[theorem]{Remark}
\numberwithin{equation}{section}
\begin{document}

\title[Poisson-EQMOM]{Poisson quadrature method of moments for 2D kinetic equations with velocity of constant magnitude}


\author{Yihong Chen}
\address{Department of Mathematical Sciences, Tsinghua University, Beijing 100084, China}
\curraddr{}
\email{chenyiho20@mails.tsinghua.edu.cn}
\thanks{}

\author[Qian Huang]{Qian Huang*}
\address{Department of Energy and Power Engineering, Tsinghua University, Beijing 100084, China}
\curraddr{}
\email{huangqian@tsinghua.edu.cn; hqqh91@qq.com}
\thanks{* Corresponding author}

\author{Wen-An Yong}
\address{Department of Mathematical Sciences, Tsinghua University, Beijing 100084, China \\ Yanqi Lake Beijing Institute of Mathematical Sciences and Applications, Beijing 101408, China}
\curraddr{}
\email{wayong@tsinghua.edu.cn}
\thanks{This work is supported by National Key Research and Development Program of China (Grant no. 2021YFA0719200) and National Natural Science Foundation of China (Grant no. 12071246).}

\author{Ruixi Zhang}
\address{Department of Energy and Power Engineering, Tsinghua University, Beijing 100084, China}
\curraddr{}
\email{1553548358@qq.com}
\thanks{}

\subjclass[2020]{Primary 35F50 · 35Q82 · 82-10}

\keywords{kinetic equation, moment method, Poisson kernel, active matter, Vicsek model}

\date{}

\dedicatory{}

\begin{abstract}

This work is concerned with kinetic equations with velocity of constant magnitude. We propose a quadrature method of moments based on the Poisson kernel, called Poisson-EQMOM.
The derived moment closure systems are well defined for all physically relevant moments and the resultant approximations of the distribution function converge as the number of moments goes to infinity.
The convergence makes our method stand out from most existing moment methods. Moreover, we devise a delicate moment inversion algorithm.
As an application, the Vicsek model is studied for overdamped active particles, and the Poisson-EQMOM is validated with a series of numerical tests including spatially homogeneous, one-dimensional and two-dimensional problems.
\end{abstract}

\maketitle

\section{Introduction}

In this work we are interested in kinetic equations with velocity of constant magnitude (KE-VC).
Such equations can be used to describe the motion of overdamped active particles (e.g., cells) with self-propulsive forces of a constant magnitude \cite{bertin2006,Borsche2019,degond2008,ihle2011,marchetti}.
Among KE-VC are also the radiative transfer equations of photons and neutrons \cite{chanrte,KH_Moment13,mihalas}.
Besides, such kinetic equations arise in the `semi-continuous' discretization of the Boltzmann equation with the velocity space replaced by a series of spheres \cite{PRsemi}.

Like the usual kinetic equations, the KE-VC is computationally costly due to the high dimensionality.
Thus, the method of moments as a model reduction technique attracts much attention \cite{Picard23}.
This method provides a successful bridge connecting the kinetic equation to the hydrodynamic theory.
The resultant systems have clear physical interpretations for the lower-order moments, are computationally more feasible, can be valid over a wider range of flow regimes, and naturally inherit the conservation laws of the kinetic equation.
The application areas include rarefied gas dynamics \cite{dvk21,Grad1949}, plasma physics \cite{Tau23}, multiphase and/or particulate-laden flows \cite{MarFox2013}, radiative transfer \cite{KH_Moment13}, traffic flows \cite{Marq13} and active matter physics \cite{bertin2006,degond2008,ihle2011}.
For the general kinetic equations (with non-constant velocity magnitude), the Grad's closure \cite{Grad1949} is regarded as the first class of modeling approach, and recent years see remarkable progress in its hyperbolicity regularization (see e.g. \cite{Cai13, Picard23} and references therein). Another moment closure strategy is the quadrature-based method \cite{Mc1997,MarFox2013,Huang2020,zhang2023}, which ensures \textit{positivity} of the reconstructed distribution once the moments are realizable.

The existing moment methods for KE-VC mainly include the classical $P_N$ method (using the spherical harmonics as the basis) \cite{pomraning2005} and the maximum entropy (or termed $M_N$) method \cite{minerbo1978}.
However, the $P_N$ method does not guarantee the positivity of the reconstructed distribution and the resultant moment system may lead to unphysical results \cite{brunner2002}, whereas the numerical treatment of the $M_N$-derived systems turns out to be extremely difficult \cite{hauck2011}.
We further refer to \cite{KH_Moment13,Laiu2016} for filtered $P_N$ that is (almost) positivity preserving and to \cite{All2019,McD2013,SchTor2015} for some regularized $M_N$-based methods. Considering the state of the art, developing effective moment methods remains an active research direction.

This work is an attempt in this direction and aims to develop quadrature-based moment methods for KE-VC that are free of those drawbacks. It is inspired by the extended quadrature method of moments (EQMOM) originally proposed for the Boltzmann equation \cite{Chalons2010}.
In EQMOM, the distribution is approximated as a sum of several homoscedastic kernels with independent weights and centers. It thus preserves positivity of the distribution, and the parameters (weights, centers and variance) can be efficiently computed \cite{Pigou2018,MarFox2013}.
For the one-dimensional velocity, the Gaussian distribution is most-widely used as the kernel \cite{Chalons2017}.
The EQMOM has also become a popular method to simulate the evolution of aerosol size and compositional distributions using the lognormal or Gamma distribution (supported on $\mathbb R_+$) and the compactly-supported beta distribution as kernels, respectively \cite{Pigou2018,Huang2020k,Yuan2011}.
Further discussions and multidimensional extensions of EQMOM can be found in \cite{Chalons2017,Chen2024,Huang2020,Huangjsc,zhang2023}.

However, to our best knowledge, the combinations of EQMOM with KE-VC are only reported for the one-dimensional radiative transfer equations where the distribution is supported on a closed interval \cite{ALL2016,vikas2013}.
It remains unclear if (and how) an EQMOM method can be established for angular distributions supported on the sphere. This would require new techniques for function approximation on the sphere and nonlinear inversion algorithms to compute the parameters.
As a first step to tackle the issue, we propose that, for the 2D KE-VC, the distributions supported on the unit circle can be approximated by using the Poisson kernel.
This seems the first time to pick the Poisson kernel as a building block in the moment method.
Previously, the Poisson kernel was adopted as a density function of spherical distributions in the statistical mixture model \cite{yang2004}, aiming to infer a probability distribution from observed data using the maximum likelihood estimator. However, this approach is not relevant for solving kinetic equations.
The Poisson kernel was also involved in a discretized spectral approximation method in the one-dimensional neutron transport theory \cite{Seng1988}, which is different from our attempt.
As such, our method, denoted Poisson-EQMOM, is a novel way to formulate hydrodynamic theories of, e.g., planar flows of overdamped active matter (like cell motions).

The Poisson-EQMOM naturally inherits from EQMOM the advantages of \textit{being positivity-preserving} (which cannot be satisfied by the $P_N$ method) and \textit{having a conservative form}. Surprisingly, we show in Theorem \ref{thm:mainexst} that the derived moment closure systems are well defined for \textit{all physically relevant moments}, while the existing EQMOM variants may not have such a global property \cite{Chalons2017}.
More importantly, the resultant approximations of the distribution function \textit{converge} as the number of moments goes to infinity (see Theorem \ref{thm:lift}). This convergence is desired for any moment method but has not been demonstrated for other quadrature-based method of moments.
These properties indicate that the Poisson kernel is a very right choice for the EQMOM to handle KE-VC.
On the basis of these nice properties, we devise an efficient and robust moment inversion algorithm while the existing ones \cite{Pigou2018,MarFox2013} seem not to work in the current situation.

As an application, we apply the Poisson-EQMOM to a 2D KE-VC derived from the Vicsek dynamics \cite{degond2008} and obtain a new moment closure system. Let us remark that this KE-VC model was analyzed in \cite{Figalli2018} and numerically treated in \cite{filbet2018,GAMBA2015,Griette2019} with various schemes.
The EQMOM-derived system is solved numerically with a detailed calculation of the kinetic-based flux.
The results for five different cases, ranging from spatially homogeneous to 2D, are all quite satisfactory in comparison with analytical results, macroscopic solutions and microscopic particle simulations \cite{GAMBA2015}.

The remainder of the paper is organized as follows. Section~\ref{sec:method} presents the Poisson-EQMOM with its realizability and convergence properties, one of which is proved in Appendix A.
The moment inversion algorithm is developed in Section~\ref{sec:inv}.
Section~\ref{sec:app} is devoted to the application of the Poisson-EQMOM to a KE-VC from the Vicsek model.
A numerical scheme is given in Section~\ref{sec:sch}, whereas the numerical results are reported in Section~\ref{sec:num}. Conclusions are drawn in Section~\ref{sec:con}.

\section{Moment method with Poisson kernel} \label{sec:method}
Consider the 2D KE-VC for an angular distribution $f=f(t,\bm x, \theta)$ with $t>0$, $\bm x \in \mathbb R^2$ and $\theta \in \mathbb R$ (modulo $2\pi$):
\begin{equation} \label{eq:feq}
  \partial_t f + v_0\bm e_{\theta}\cdot \nabla_{\bm x} f = Q(f).
\end{equation}
Here $v_0$ is a constant speed, $\bm e_{\theta} = (\cos \theta, \sin \theta) \in \mathbb S^1$, and $Q(f)$ is a problem-specific collision operator. A typical example for the polar active flow will be detailed in Section \ref{sec:app}.

Our goal is to develop a moment method for solving the above kinetic equations. For this purpose, we define the $k$th angular moments of $f(t,\bm x,\theta)$ as
\begin{equation} \label{eq:csk}
  c_k = c_k(t,\bm x) = \int_{-\pi}^{\pi} f(t,\bm x,\theta) \cos k\theta d\theta, \quad
  s_k = s_k(t,\bm x) = \int_{-\pi}^{\pi} f(t,\bm x,\theta) \sin k\theta d\theta
\end{equation}
for $k\in \mathbb N$.
Thanks to the isomorphism between $\mathbb R^2$ and $\mathbb C$, it is convenient to write $m_k = c_k + i s_k \in \mathbb C$ with $i^2 = -1$.
We then see from Eq.(\ref{eq:csk}) that
\begin{equation}
  m_k = m_k(t,\bm x) = \int_{-\pi}^{\pi} f(t,\bm x,\theta) e^{ik\theta}d\theta
\end{equation}
and hence $m_{-k}=\overline{m_k}$ (the overline denotes the complex conjugate of a complex number).
In this way, the governing equation for $m_k$ can be deduced from Eq.(\ref{eq:feq}) as
\begin{equation} \label{eq:mkeq_unc}
  \partial_t m_k + v_0\left( \frac{1}{2}\partial_x + \frac{1}{2i}\partial_y \right) m_{k+1} + v_0\left( \frac{1}{2}\partial_x - \frac{1}{2i}\partial_y \right) m_{k-1} = \int_{-\pi}^{\pi} Q(f) e^{ik\theta} d\theta,
\end{equation}
where the identities $\cos \theta = (e^{i\theta}+e^{-i\theta})/2$, $\sin\theta=(e^{i\theta}-e^{-i\theta})/2$ have been used.
For the above equations in Eq.(\ref{eq:mkeq_unc}), a finite truncation on $k \le N$ leaves (at least) the highest-order moment $m_{N+1}$ being unclosed. This calls for a moment closure approach.

A classical approach is the $P_N$ method which assumes the distribution $f$ to be a truncated Fourier series and gives $m_{N+1}=0$. This is a convenient closure as higher-order Fourier coefficients are small.
Unfortunately, this approach does not guarantee the positivity of the truncated distribution and the resultant moment system may lead to unphysical results \cite{brunner2002}.
Another well-known approach is the maximum entropy method which can generate well-behaved systems \cite{minerbo1978}. However, the resultant numerical treatment turns out to be extremely difficult \cite{hauck2011}.
In view of such a state of the art, it remains desirable to develop a reasonable closure for the 2D KE-VC (\ref{eq:feq}) that is positivity-preserving and numerically efficient.
This is the very goal of this work.

\subsection{Poisson-EQMOM}

We now propose a new extended quadrature method of moments equipped with the Poisson kernel (denoted Poisson-EQMOM) for Eq.(\ref{eq:mkeq_unc}). The key idea is to approximate the angular distribution $f(\theta)$ with a convex combination of Poisson kernels:
\begin{equation} \label{eq:peans}
  f_N(\theta) = \sum_{\alpha=1}^N \rho_{\alpha} P_r(\phi_{\alpha} - \theta)
\end{equation}
that matches the $(N+1)$ lower-order moments of $f(\theta)$:
\begin{equation} \label{eq:sol}
  m_k = \int_{-\pi}^\pi f_N(\theta) e^{ik\theta} d\theta = \sum_{\alpha=1}^N \rho_{\alpha} \int_{-\pi}^\pi e^{ik\theta}P_r(\phi_\alpha-\theta) d\theta
\end{equation}
for $k=0,1,\dots,N$.

Obviously, such an ansatz introduces $(2N+1)$ unknowns collected as
\begin{equation}
  W=(\rho_1,\dots,\rho_N,\phi_1,\dots,\phi_N,r)
\end{equation}
to be solved by (\ref{eq:sol}) from the transported moments $m_0,\dots,m_N$ at each time $t$ and position $\bm x$.

The Poisson kernel in (\ref{eq:peans}) is defined for $\theta \in [-\pi,\pi)$ and $0<r<1$ as
\begin{equation} \label{eq:poisdef}
  P_r(\theta) = \frac{1}{2\pi}\frac{1-r^2}{1-2r\cos\theta+r^2} = \frac{1}{2\pi} \sum_{k\in\mathbb Z} r^{|k|}e^{ik\theta},
\end{equation}
where the second expression gives the Fourier series of the Poisson kernel. It is seen in Fig.~\ref{fig:pois} that $P_r(\theta)$ converges to a Dirac delta function centered at $\theta=0$ as $r\to 1$.

\begin{figure}[htbp]
\centering
\includegraphics[height=4.5cm]{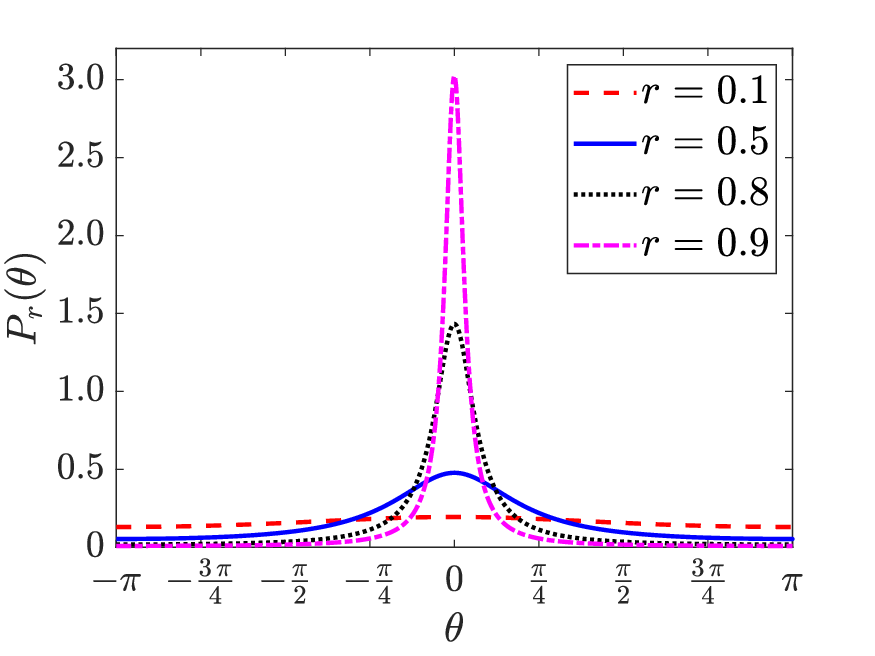}
\caption{The Poisson kernels with different values of $r$.}
\label{fig:pois}
\end{figure}

The Poisson kernel has the following elegant property, which is the consequence of Theorems 5 and 15 in Section 2.2 of \cite{evans}.

\begin{proposition} \label{prop:pois}
  For any complex-valued function $h(z)$ that is harmonic on $\{z\in\mathbb C: \ |z|<1\}$ and continuous on $\{z\in\mathbb C: \ |z|\le 1\}$, we have
  \[
    h(re^{i\phi}) = \int_{-\pi}^{\pi} h(e^{i\theta}) P_r(\phi-\theta) d\theta.
  \]
\end{proposition}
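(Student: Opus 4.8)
The plan is to reduce the statement to the classical Poisson integral formula for harmonic functions on the unit disk. The key observation is that the Poisson kernel $P_r(\phi-\theta)$ defined in \eqref{eq:poisdef} is exactly the standard Poisson kernel for the disk, written in the normalization where $\int_{-\pi}^{\pi} P_r(\theta)\,d\theta = 1$ (this normalization is visible from the Fourier series: only the $k=0$ term contributes to the integral, giving $1$). So the claimed identity is precisely the statement that a function harmonic on $|z|<1$ and continuous on $|z|\le 1$ is recovered at the interior point $z = re^{i\phi}$ by integrating its boundary values against the Poisson kernel. Since the excerpt tells us this is a consequence of Theorems 5 and 15 in Section 2.2 of \cite{evans}, the proof is mainly a matter of assembling those two ingredients.

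Concretely, I would proceed as follows. First, recall that a real-valued harmonic function $u$ on the open unit disk that extends continuously to the closed disk is the solution of the Dirichlet problem with boundary data $g(\theta) = u(e^{i\theta})$; Theorem 15 of \cite{evans} (the Poisson integral formula / solvability and representation for the Dirichlet problem on a ball) then gives
\[
  u(re^{i\phi}) = \int_{-\pi}^{\pi} u(e^{i\theta}) P_r(\phi-\theta)\,d\theta
\]
for $0 \le r < 1$, after identifying the Poisson kernel in \eqref{eq:poisdef} with the kernel appearing in that theorem (a short computation matching the two closed-form expressions, using $|re^{i\phi} - e^{i\theta}|^2 = 1 - 2r\cos(\phi-\theta) + r^2$). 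One also needs that the candidate integral, as a function of $re^{i\phi}$, is harmonic in the interior and attains the boundary values continuously — this is the content of Theorem 5 together with the mean-value property, which is what licenses the identification of the Poisson integral with $u$ via uniqueness for the Dirichlet problem. Second, for a complex-valued $h$, write $h = u + iv$ with $u,v$ real harmonic functions satisfying the same hypotheses, apply the real case to each, and add; linearity of the integral in \eqref{eq:poisdef} gives the complex statement. Finally, the boundary case $r \to 1$ is not needed since the proposition is stated only for $0 < r < 1$ (where $P_r$ is a genuine smooth kernel), so no limiting argument is required.

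The only mild subtlety — and the closest thing to an obstacle — is the bookkeeping that matches the Poisson kernel as normalized in \eqref{eq:poisdef} (integral over $[-\pi,\pi)$, total mass $1$) with whatever normalization and domain of integration \cite{evans} uses for the ball $B(0,1) \subset \mathbb{R}^2$ (typically an integral over the boundary circle with respect to arclength or surface measure, of total length $2\pi$). Parametrizing the boundary circle by $\theta \mapsto e^{i\theta}$ turns the surface integral into $\int_{-\pi}^{\pi}(\cdots)\,d\theta$, and one checks the constant factor $\frac{1}{2\pi}(1-r^2)$ is precisely $\frac{1-|z|^2}{2\pi}$ with $z = re^{i\phi}$, which is the $2$-dimensional Poisson kernel $\frac{1-|z|^2}{2\pi|z-\zeta|^2}$ for $\zeta = e^{i\theta}$ on the unit circle. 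Once this identification is in place, the proposition follows immediately, so I expect the write-up to be short, consisting of the kernel identification, the invocation of the cited theorems for the real case, and the splitting into real and imaginary parts.
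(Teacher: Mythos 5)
Your proposal is correct and matches the paper's (implicit) argument: the paper offers no written proof, only the attribution to Theorems 5 and 15 in Section 2.2 of \cite{evans}, i.e.\ exactly the combination of the Poisson-integral representation for the disk with uniqueness for the Dirichlet problem that you assemble, plus the routine kernel normalization check and the split into real and imaginary parts.
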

In particular, this implies
\begin{equation} \label{eq:pois}
  r^k e^{ik\phi} = \int_{-\pi}^{\pi} e^{ik\theta} P_r(\phi-\theta)d\theta
\end{equation}
for $k=0,1,2,\dots$. Thus, Eq.(\ref{eq:sol}) can be reformulated as
\begin{equation} \label{eq:poeqrm}
  m_k = r^k \sum_{\alpha=1}^N \rho_{\alpha} e^{ik\phi_{\alpha}}, \quad k=0,\dots,N,
\end{equation}
or equivalently,
\[
\begin{aligned}
  c_k &= r^k \sum_{\alpha=1}^N \rho_{\alpha} \cos k\phi_{\alpha}, \quad k=0,1,\dots,N, \\
  s_k &= r^k \sum_{\alpha=1}^N \rho_{\alpha} \sin k\phi_{\alpha}, \quad k=1,\dots,N.
\end{aligned}
\]

If $W$ is solved from Eq.(\ref{eq:poeqrm}), then all the unclosed terms in Eq.(\ref{eq:mkeq_unc}) can be evaluated according to the Ansatz (\ref{eq:peans}).
For instance, the highest-order moment $m_{N+1}$ is reconstructed as
\begin{equation} \label{eq:mnp1}
  \hat m_{N+1} = r^{N+1} \sum_{\alpha=1}^N \rho_{\alpha} e^{i(N+1)\phi_{\alpha}}.
\end{equation}
As a consequence, we derive a closed moment system for $\bm M_N=(m_0,\dots,m_N)^T \in \mathbb C^{N+1}$ which reads as
\begin{equation} \label{eq:momsys}
  \partial_t \bm M_N + \frac{v_0}{2}\partial_x \left(F_1(\bm M_N)+F_2(\bm M_N) \right) + \frac{v_0}{2i} \partial_y \left(F_2(\bm M_N)-F_1(\bm M_N) \right) = S(\bm M_N)
\end{equation}
with
\[
  F_1(\bm M_N) = (\overline{m_1}, m_0, \dots, m_{N-1} )^T, \quad
  F_2(\bm M_N) = (m_1, \dots, m_N, \hat m_{N+1})^T,
\]
and a model-specific source term $S(\bm M_N) \in \mathbb C^{N+1}$.

Clearly, solving the unknowns $W$ from Eq.(\ref{eq:poeqrm}) plays a vital role in the Poisson-EQMOM.
In the next subsection we show that such a $W$ exists and the resultant approximation $f_N=f_N(\theta)$ converges to $f$ as $N$ approaches infinity. This subsection is closed with the remark on possible extensions of our model.

\begin{remark}
  The Poisson-EQMOM may be extended to handle kinetic equations with non-constant speeds by discretizing the velocity space $\mathbb R^2$ into a series of concentric 1-spheres. Note that such discretization of the velocity space was studied in \cite{PRsemi} (called the `semi-continuous' method therein) for the Boltzmann equation.
  Furthermore, it would be important to develop analogous models in three dimensions, though directly assuming the distribution to be a convex combination of Poisson kernels on $\mathbb S^n$ ($n\ge 2$) could lead to difficulties: First, the number of unknowns is generally not the same as that of the moments; Second, the realizability of moments on $\mathbb S^n$ is incomplete for $n \ge 2$ (see an exposition in \cite{gtm277}). Detailed investigations are left for future work.
\end{remark}

\subsection{Existence and convergence of $f_N$}
Denote $\mathbb T=\{ z\in \mathbb C: \ |z|=1 \}$ as the unit circle.
Our main results in this subsection are stated as follows.
\begin{theorem} \label{thm:mainexst}
  Assume that $\bm M_N \in \mathbb C^{N+1}$ is realizable with respect to a certain measure $\nu$ on $\mathbb T$ (that is, $m_k=\int_{\mathbb T} z^k d\nu(z)$ for $k=0,\dots,N$).
  Then there exists a unique $r\in[0,1]$ and the corresponding $\rho_{\alpha}>0$, $\phi_{\alpha}\in [-\pi,\pi)$ ($\alpha=1,\dots,N$) solving Eq.(\ref{eq:poeqrm}), and any closure by the Poisson-EQMOM is well-defined.
\end{theorem}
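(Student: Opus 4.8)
The plan is to separate the problem into the easy degenerate cases and the generic case, and in the generic case to reduce the nonlinear system (\ref{eq:poeqrm}) to a classical trigonometric moment (Carathéodory--Toeplitz) problem parametrized by $r$. First I would dispose of the boundary cases $r=0$ and $r=1$: if $m_1=\dots=m_N=0$ (with $m_0>0$) then $r=0$ works with arbitrary $\phi_\alpha$ and $\rho_\alpha$ summing to $m_0$, handled as a limiting kernel (the Dirac-uniform degeneracy), while if $\bm M_N$ lies on the boundary of the realizability cone --- i.e. the associated Toeplitz matrix is singular --- then $\nu$ is supported on finitely many points of $\mathbb T$ and one takes $r=1$ with the $\phi_\alpha$ at those atoms. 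So assume $\bm M_N$ is strictly realizable (positive-definite Toeplitz matrix) and $m_1,\dots,m_N$ not all zero.

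Next, for a fixed candidate $r\in(0,1)$, rescale: set $\tilde m_k = m_k/r^k$ for $k=0,\dots,N$. Then (\ref{eq:poeqrm}) asks exactly that $(\tilde m_0,\dots,\tilde m_N)$ be the trigonometric moments of the purely atomic measure $\sum_{\alpha=1}^N \rho_\alpha \delta_{e^{i\phi_\alpha}}$ on $\mathbb T$, with $N$ atoms and positive weights. By the classical theory of the truncated trigonometric moment problem (Carathéodory's theorem / the Gauss-type quadrature associated with Szegő polynomials), such a representation with at most $N$ atoms and positive weights exists and is unique as soon as the Toeplitz matrix $T_N(\tilde m) = (\tilde m_{j-k})_{0\le j,k\le N}$ is positive definite; moreover it uses exactly $N$ atoms precisely when $T_N(\tilde m)$ is positive definite (strictly fewer when it degenerates). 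Thus the whole problem collapses to choosing $r$ so that $T_N(\tilde m)$ is positive semidefinite and singular, i.e. to making $\det T_N(\tilde m) = 0$ while $T_{N-1}(\tilde m)>0$.

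Therefore I would study the scalar function $D(r) := \det T_N\big(m_k/r^k\big)$ on $(0,1]$. At $r=1$ we have $D(1)=\det T_N(\bm M_N)>0$ under the strict-realizability assumption (it is $\ge 0$ in general, and $=0$ exactly in the boundary case already handled). As $r\to 0^+$, the rescaled moments $\tilde m_k = m_k/r^k$ blow up for any $k$ with $m_k\neq 0$; a degree count in $1/r$ of $\det T_N$ shows $D(r)\to -\infty$ (the leading term comes from the anti-diagonal product $\prod \tilde m_{\text{top}}$ and has the wrong sign for $N\ge1$), so by the intermediate value theorem $D$ has a zero in $(0,1)$. For uniqueness and for the positivity of the weights, I would show that on the interval $(0, r^*]$, where $r^*$ is the \emph{largest} root of $D$ in $(0,1)$, the matrix $T_{N-1}(m_k/r^k)$ stays positive definite --- this follows because the same rescaling argument applied to the size-$N$ leading principal minors shows each lower-order determinant is positive on a slightly larger interval, using that strict realizability of $\bm M_N$ forces strict realizability of $(m_0,\dots,m_{N-1})$ --- and that $D$ actually changes sign (does not merely touch zero) at $r^*$, so $r^*$ is the unique admissible value; at $r=r^*$, $T_N$ is positive semidefinite of rank $N$, which is exactly the Carathéodory configuration giving $N$ distinct atoms with strictly positive weights. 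Finally, since all unclosed quantities in (\ref{eq:mkeq_unc}), in particular $\hat m_{N+1}$ in (\ref{eq:mnp1}), are explicit continuous functions of the now-determined $W$, the closure is well-defined; this last sentence is immediate once $W$ is in hand.

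The main obstacle I expect is the monotonicity/sign-change analysis of $D(r)$ near its largest root: one must rule out the possibility that $D$ has a double root (a tangency to $0$ from below) before reaching a simple crossing, which would break uniqueness of $r$. I would handle this by working with the ratio $D(r)/\det T_{N-1}(m_k/r^k)$, which by Sylvester/Jacobi identities equals (up to a positive factor) the $(N{+}1,N{+}1)$ entry of the inverse Toeplitz matrix and is therefore strictly decreasing in $r$ on the region where $T_{N-1}>0$ --- intuitively because increasing $r$ "tightens" the constraints and shrinks that Schur complement monotonically. Making this monotonicity statement precise (e.g. differentiating the Schur complement in $r$ and showing the derivative is negative definite, or invoking the strict interlacing of Szegő parameters) is the technical heart of the argument.
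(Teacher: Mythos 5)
Your overall skeleton mirrors the paper's: split off the degenerate cases ($r=0$ when $m_1=\dots=m_N=0$, $r=1$ when $\nu$ has at most $N$ atoms), and in the generic case rescale $\tilde m_k=m_k r^{-k}$ so that Eq.~(\ref{eq:poeqrm}) becomes a truncated trigonometric moment problem solved by the Carath\'eodory--Toeplitz theory (the paper's Lemma~\ref{thm:toep}). However, the two quantitative steps that carry the proof are not sound as proposed. First, the existence step: the claim that $D(r)=\det H_N(\bm M_N^*(r))\to-\infty$ as $r\to0^+$ is false in general. The "anti-diagonal" degree count only controls the sign generically; already for $N=2$ the explicit expansion in the paper gives the leading coefficient $2\Re(\overline{m_1}^2m_2)-m_0|m_2|^2$ (in $t=1/r$), which is strictly positive e.g.\ for moments of any measure with at least three support points concentrated near $z=1$ (where $m_0=1$, $m_1,m_2\approx1$), so $D(r)\to+\infty$ there even though the matrix is indefinite (it has an even number of negative eigenvalues for small $r$). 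Your intermediate-value argument therefore collapses. The paper avoids this by tracking the \emph{minimum eigenvalue} $\lambda(r;N)$ rather than the determinant: it picks the first nonzero $m_k$ and shows $\det H_k(\bm M_N^*(r))\to-\infty$ for that principal minor, which forces a negative eigenvalue of $H_N$ for small $r$ regardless of the parity issue.

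Second, the uniqueness step is exactly the part you defer ("the technical heart"), and it is also where the paper's real work lies; moreover your heuristic has the direction reversed: since $\tilde m_k=m_kr^{-k}$ shrinks toward $m_k$ as $r\uparrow1$, the matrix becomes \emph{more} positive, so the minimum eigenvalue (and the Schur complement $\det H_N/\det H_{N-1}$, cf.\ the $N=1$ case $m_0-|m_1|^2/(r^2m_0)$) is \emph{increasing} in $r$, not decreasing. The missing idea is the paper's Proposition~\ref{prop:eiginc}: for $r_1<r$, subtract $\nu=\lambda(r_1;N)$ from $m_0$, use Lemma~\ref{thm:toep} to represent the shifted, rescaled moments by an atomic measure at scale $r_1$, and then rewrite the quadratic form of $H_N(\bm{\tilde M}_N^*(r))$ as $\sum_\alpha\rho_\alpha\int\bigl|\sum_l a_l e^{il\theta}\bigr|^2P_{r_1/r}(\phi_\alpha-\theta)\,d\theta>0$, which yields strict monotone increase of $\lambda(\cdot;N)$. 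That single monotonicity statement simultaneously gives existence (one sign change between $r\to0^+$ and $r=1$, where positive definiteness follows from $\nu$ having at least $N+1$ support points, Proposition~\ref{prop:pd1}), uniqueness of $r$ in all three cases, and automatically the positive semi-definiteness at the root that you would otherwise have to extract from an unproven interlacing of the roots of $\det H_{N-1}$ and $\det H_N$. Without a proof of some such monotonicity (with the correct sign), your argument does not rule out multiple admissible $r$, nor does it guarantee $H_{N-1}>0$ at the largest root of $D$.
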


\begin{theorem} \label{thm:lift}
  Assume that $f(\theta) \ge c>0$. The following statements are true.

  (i). If $f$ is absolutely continuous and $f':=df/d\theta \in L^2(-\pi,\pi)$, then we have $f_N \to f$ in $L^2(-\pi,\pi)$ as $N\to\infty$.

  (ii). If $f$ is Lipschitz, then we have $f_N \to f$ in $L^p(-\pi,\pi)$ for $2 \le p < \infty$.

  (iii). If $f$ is Lipschitz and $f'$ is of bounded variation, then we have $f_N \to f$ uniformly.
\end{theorem}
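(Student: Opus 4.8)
The plan is to use Proposition~\ref{prop:pois} to rewrite the reconstruction as $f_N=\nu_N*P_{r_N}$, where $\nu_N=\sum_{\alpha=1}^{N}\rho_\alpha\delta_{\phi_\alpha}$ is the nonnegative atomic measure of total mass $m_0$ furnished by Theorem~\ref{thm:mainexst}; in particular $f_N$ and $f$ have identical Fourier coefficients of order $|k|\le N$. Writing $S_N$ for the $N$th Fourier partial sum, I would decompose
\[
  f_N-f=(I-S_N)f_N-(f-S_Nf)
\]
and treat the two pieces separately. The second piece is classical: $f-S_Nf\to0$ in $L^2(-\pi,\pi)$ when $f\in L^2$, in every $L^p$ with $1<p<\infty$ when $f\in L^p$ (so, in particular, for Lipschitz $f$), and uniformly when $f'$ has bounded variation (then the Fourier series converges absolutely). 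Hence the whole problem reduces to estimating $(I-S_N)f_N=\nu_N*g_N$, where $g_N:=(I-S_N)P_{r_N}=\tfrac1{2\pi}\sum_{|k|>N}r_N^{|k|}e^{ik\theta}$ is the ``Poisson tail''.

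Summing the geometric series gives $\|g_N\|_{L^1}\lesssim r_N^{N}\log\tfrac1{1-r_N}$, $\|g_N\|_{L^2}\lesssim r_N^{N}(1-r_N)^{-1/2}$ and $\|g_N\|_{L^\infty}\lesssim r_N^{N}(1-r_N)^{-1}$, so by Young's inequality $\|(I-S_N)f_N\|_{L^q}\le m_0\|g_N\|_{L^q}$; the task is then to force $r_N^{N}$ to beat these blow-up factors. Here the hypothesis $f\ge c>0$ enters twice. First, for the Toeplitz matrix $G_N=[m_{j-k}]_{j,k=0}^{N}$ one has $v^{*}G_Nv=\int_{-\pi}^{\pi}f(\theta)\bigl|\sum_{j}v_je^{ij\theta}\bigr|^{2}\,d\theta\ge 2\pi c\,|v|^{2}$, so $\lambda_{\min}(G_N)\ge 2\pi c$ uniformly in $N$; since (by the proof of Theorem~\ref{thm:mainexst}) $r_N$ is the smallest $r$ for which the scaled Toeplitz matrix $[m_{j-k}r^{-|j-k|}]_{j,k=0}^{N}$ is positive semidefinite, and this matrix differs from $G_N$ by a Toeplitz perturbation of operator norm $\le 2\sum_{l=1}^{N}|m_l|(r^{-l}-1)$, splitting that sum at $l\sim(1-r)^{-1}$ and using $\sum_l l^{2}|m_l|^{2}<\infty$ (the weakest of the three regularity assumptions) shows the perturbation stays below $2\pi c$ as long as $1-r$ exceeds a threshold $\kappa_N\to0$ with $N\kappa_N\to\infty$; as $r_N$ is nondecreasing in $N$ by principal-submatrix interlacing, this forces $1-r_N\ge\kappa_N$ eventually, hence $r_N\to1$ at a controlled rate. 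Second, because $\nu_N$ shares its low-order moments with $f\,d\theta$ and $f\ge c>0$, the nodes $\phi_\alpha$ cannot cluster — they are separated on scale $\gtrsim 1-r_N$ with weights $\rho_\alpha=O(1-r_N)$, a Christoffel-type estimate — which removes the $(1-r_N)^{-1/2}$, resp.\ $(1-r_N)^{-1}$, loss in the $L^2$, resp.\ $L^{p}/L^{\infty}$, estimate of $\nu_N*g_N$ at the cost of only a logarithm. Combining, $(I-S_N)f_N\to0$ in $L^2$ under $f'\in L^2$, in every $L^p$ ($p<\infty$) under the Lipschitz assumption, and uniformly under the bounded-variation assumption, which with the classical convergence of $S_Nf$ yields (i)--(iii).

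I expect the genuinely hard part to be the quantitative control of $r_N$ and of the geometry of the quadrature $\{\rho_\alpha,\phi_\alpha\}$: the trivial bound $|\widehat{\nu_N}(k)|\le m_0/2\pi$ is far too crude, so one has to convert the pointwise lower bound $f\ge c>0$ — together with the precise modulus of continuity implied by each regularity hypothesis — into a lower bound on $1-r_N$ and into dispersion of the atoms of $\nu_N$. The remaining ingredients (the kernel bounds for $g_N$, Young's and interpolation inequalities, and the $L^p$/uniform convergence of Fourier partial sums) are routine; and if $f$ happens to extend analytically then $r_N$ stays bounded away from $1$ and the convergence is geometric, so one may assume $r_N\to1$ throughout.
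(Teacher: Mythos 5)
Your reduction is essentially the paper's: write $f_N=\nu_N*P_{r_N}$, note its Fourier coefficients agree with those of $f$ for $|k|\le N$ and are bounded by $m_0r_N^{|k|}$ beyond, split off $S_Nf$ and use classical Fourier convergence, so that everything hinges on tail estimates of the form $r_N^{2N}/(1-r_N)\to0$ for (i), $r_N^{Np'}/(1-r_N)\to0$ for (ii), and $r_N^{N}/(1-r_N)\to0$ for (iii). You rightly call the control of $r_N$ the hard part, but the two mechanisms you propose for it do not hold up. The perturbation claim is stated with the inequality backwards and is in any case too weak: the Toeplitz perturbation $2\sum_{l=1}^{N}|m_l|(r^{-l}-1)$ \emph{increases} as $1-r$ increases (it contains $|m_N|r^{-N}\approx|m_N|e^{N(1-r)}$), so it is small when $1-r$ is \emph{below} a threshold, and with only $\sum_l l^2|m_l|^2<\infty$ that threshold is pinned down by $e^{N\kappa_N}\lesssim N$-type constraints, i.e.\ $N\kappa_N=O(\log N)$ after a delicate interplay with the tail of $\sum l^2|m_l|^2$ — nothing like a free choice with $N\kappa_N\to\infty$. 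More importantly, even granting $1-r_N\ge\kappa_N$ with $N\kappa_N\to\infty$, this does not imply $r_N^{2N}/(1-r_N)\to0$ (take $\kappa_N=\log\log N/N$), let alone the stronger rates needed in (ii) and (iii); and your sketch never explains how the three different hypotheses ($f'\in L^2$, Lipschitz, $f'$ of bounded variation) yield the three progressively stronger tail bounds. This is exactly where the paper's Appendix A spends its effort: it exploits the \emph{singularity} of $H_N(\bm M_N^*(r_N))$ through the quadratic form $q_N(r;a)=\bar a^TH_N(\bm M_N^*(r))a$, proving (i) via a singular-vector/Cauchy--Schwarz splitting plus a bootstrap showing $\{r_N^N\log N\}$ is bounded, (ii) via the two-sided bound $c\le\int_{r_N}^{1}|\partial_rq_N(r;a)|\,dr\le Cr_N^{-N}(1-r_N)^{1/p'+\epsilon}$ obtained from conjugate-Poisson-type kernel estimates using $f'\in L^\infty$, and (iii) via a contradiction argument using $|\int f'\sin kt\,dt|\le C/k$ (bounded variation) and a comparison of $q_T$ with $q_{[\alpha T]}$.

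The second ingredient you invoke, the ``Christoffel-type'' assertion that the atoms of $\nu_N$ are pairwise separated at scale $\gtrsim 1-r_N$ with weights $\rho_\alpha=O(1-r_N)$, is not proved and, as stated, is inconsistent with your first claim: $N$ nodes on a circle of circumference $2\pi$ with pairwise gaps $\gtrsim 1-r_N$ force $N(1-r_N)\lesssim 1$, whereas you simultaneously want $N(1-r_N)\to\infty$. It is also doing essential work in your argument for (ii)--(iii), since without it the Young-inequality losses $(1-r_N)^{-1/2}$ and $(1-r_N)^{-1}$ are not removed. The paper needs no geometric information about $(\rho_\alpha,\phi_\alpha)$ beyond $\sum_\alpha\rho_\alpha=m_0$; the losses are beaten purely by the sharp, hypothesis-dependent bounds on $r_N$ described above. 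So the proposal, as written, has a genuine gap at its central step, and what would be required to close it is essentially the paper's analysis of $q_N(r;a)$.
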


\begin{remark}
  Theorem \ref{thm:mainexst} reveals that the Poisson-EQMOM enjoys a global property that all physically relevant moments are solvable.
  By contrast, for the 1-D velocity distribution, the Gaussian-EQMOM cannot be defined for certain moments generated with reasonable distributions (see Proposition 3.1 in \cite{Chalons2017}). It is also worth mentioning that for the $M_N$ (maximum entropy) method \cite{minerbo1978,Levermore1996}, the realizability domain has a clear physical interpretation and even its singularity (along the so-called Junk line \cite{junk1998}) is linked to a feature, rather than an insufficiency. This would cause trouble in numerical computation, while our Poisson-EQMOM is free of such a problem.
\end{remark}

\begin{remark}
  Theorem \ref{thm:lift} demonstrates that the reconstructed distribution does converge to the original one if more moments are used, as long as the distribution has a positive lower bound.
  This property is definitely desired for any moment method, but seems not be verified for other kinds of quadrature-based method of moments.
  Let us also mention the standard $P_N$ method with the reconstructed $f_N(\theta)=(2\pi)^{-1}\sum_{k=-N}^N m_k e^{ik\theta}$ being the truncated Fourier series. It has `stronger' convergence results and does not require $f\ge c>0$. In particular, we have (i) $f\in L^2(-\pi,\pi) \Rightarrow f_N \to f$ in $L^2(-\pi,\pi)$ and (ii) $f\in C^\gamma (\mathbb T) \Rightarrow f_N \to f$ in $L^p(-\pi,\pi)$ with $2<p\le \infty$ and $\frac{p-2}{2p}<\gamma<1$. However, unlike our Poisson-EQMOM, the $P_N$ method cannot guarantee $f_N\ge 0$.
\end{remark}

The proof of Theorem \ref{thm:lift} is quite technical and is therefore presented in Appendix A.
In what follows we proceed to prove Theorem \ref{thm:mainexst}.

The main tool in our analysis is the Toeplitz matrix defined as \cite{gtm277}:
\begin{equation}\label{eq:2.13}
  H_k = H_k(\bm M_N) =
  \begin{bmatrix}
    m_0 & m_1 & \cdots & m_k \\
    \overline{m_1} & m_0 & \cdots & m_{k-1} \\
    \vdots & \vdots & \ddots & \vdots \\
    \overline{m_k} & \overline{m_{k-1}} & \cdots & m_0
  \end{bmatrix}
  \in \mathbb C^{(k+1)\times (k+1)}
\end{equation}
for $k=0, 1, \dots, N$. The $H_k$'s are Hermitian matrices with real eigenvalues.
We quote the following result, which is a consequence of Theorem 11.5, Proposition 11.15 and Proposition 11.16 of \cite{gtm277}.
\begin{lemma} \label{thm:toep}
  Given $\bm M_N\in\mathbb C^{N+1}$, the following two statements are equivalent.

  (i). The Toeplitz matrix $H_N(\bm M_N)$ is positive semi-definite and its rank $n$ is no larger than $N$.

  (ii). There exists an $n$-atomic measure $\mu = \sum_{\alpha=1}^n \rho_{\alpha} \delta_{z_{\alpha}}$ on $\mathbb T$ with $\rho_{\alpha}>0$ for all $\alpha$ such that $m_k = \int_{\mathbb T} z^k d\mu(z)$ for $k=0, \dots, N$. Such a measure $\mu$ is unique.
\end{lemma}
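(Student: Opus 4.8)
The plan is to prove the two implications separately: \textup{(ii)}$\Rightarrow$\textup{(i)} is elementary, while \textup{(i)}$\Rightarrow$\textup{(ii)} carries all the content. Throughout I identify a vector $\xi=(\xi_0,\dots,\xi_k)^{T}\in\mathbb C^{k+1}$ with the polynomial $p_\xi(z)=\sum_l \xi_l z^l$, and equip the space $\mathcal P_k$ of polynomials of degree $\le k$ with the Hermitian form $\langle z^a,z^b\rangle:=m_{a-b}$, whose Gram matrix on the basis $1,z,\dots,z^k$ is exactly $H_k$. The one structural fact I will use repeatedly, immediate from the Toeplitz identity $m_{(a+1)-(b+1)}=m_{a-b}$, is that multiplication by $z$ is an \emph{isometry} for this form: $\langle zp,zp'\rangle=\langle p,p'\rangle$ for all $p,p'\in\mathcal P_{k-1}$. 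For \textup{(ii)}$\Rightarrow$\textup{(i)} I would expand
\[
  \xi^{*}H_N\xi=\sum_{a,b}\overline{\xi_b}\,m_{a-b}\,\xi_a=\int_{\mathbb T}|p_\xi(z)|^2\,d\mu(z)=\sum_{\alpha=1}^{n}\rho_\alpha\,|p_\xi(z_\alpha)|^2\ge 0 ,
\]
so $H_N$ is positive semi-definite; its kernel is the set of $\xi$ with $p_\xi$ vanishing at the $n$ distinct atoms, and since the evaluation map $\mathcal P_N\to\mathbb C^{n}$ is onto (as $n\le N$), this kernel has dimension $N+1-n$ and the rank equals $n$.

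For \textup{(i)}$\Rightarrow$\textup{(ii)} I would first pin down the rank structure (the degenerate case $n=0$ forces $\bm M_N=0$ and is trivial, so assume $m_0>0$). Let $r\le N$ be the smallest index for which $H_r$ is singular, so that $H_{r-1}$ is positive definite; choose a kernel vector $u$ of $H_r$, which must have $u_r\neq0$ (otherwise it would lie in the kernel of the nonsingular $H_{r-1}$), normalize it to be monic, and set $q(z)=\sum_{j=0}^{r}u_j z^{j}$. The persistence of the rank follows from the observation that \emph{every} $(r+1)\times(r+1)$ principal window of a Toeplitz matrix again equals $H_r$: for each shifted embedding $u^{(s)}$ of $u$ into $\mathbb C^{\,r+p+1}$ one has $(u^{(s)})^{*}H_{r+p}\,u^{(s)}=u^{*}H_r u=0$, and positive semi-definiteness upgrades this to $u^{(s)}\in\ker H_{r+p}$. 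Counting the linearly independent shifts gives $\operatorname{rank}H_{r+p}=r$ for every $p$, so $r=n$, and a suitable row of these kernel relations yields the forward recurrence $\sum_{j=0}^{n}u_j\,m_{p+j}=0$ for $0\le p\le N-n$.

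The step I expect to be the main obstacle is showing that $q$ has $n$ \emph{distinct} roots, all on $\mathbb T$. This is precisely where positive semi-definiteness is indispensable—a generic singular Toeplitz matrix has a null polynomial with roots anywhere—and the isometry of multiplication by $z$ is the lever. For any root $z_0$, writing $q=(z-z_0)h$ with $h\in\mathcal P_{n-1}$ nonzero, the nullity of $q$ gives $\langle q,h\rangle=0$, i.e.\ $\langle zh,h\rangle=z_0\langle h,h\rangle$; substituting this together with $\langle zh,zh\rangle=\langle h,h\rangle$ into the expansion of $0=\langle q,q\rangle$ collapses everything to $(1-|z_0|^2)\langle h,h\rangle=0$, and since $H_{n-1}$ is positive definite $\langle h,h\rangle>0$, forcing $|z_0|=1$. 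Simplicity is obtained in the same spirit: a double root $q=(z-z_0)^2 r$ makes $\langle q,zr\rangle=0$ reduce, via the isometry, to $\operatorname{Re}\!\big(\overline{z_0}\langle zr,r\rangle\big)=\langle r,r\rangle=\lVert zr\rVert\,\lVert r\rVert$, the equality case of Cauchy--Schwarz, which would force $zr$ and $r$ to be proportional in $\mathcal P_{n-1}$; this is impossible for a nonzero polynomial $r$. Hence $q=\prod_{\alpha=1}^{n}(z-z_\alpha)$ with the $z_\alpha\in\mathbb T$ distinct.

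It remains to build the weights and close the argument. I would define $\rho_1,\dots,\rho_n$ by solving the nonsingular Vandermonde system $m_k=\sum_{\alpha}\rho_\alpha z_\alpha^{k}$ for $k=0,\dots,n-1$; testing the form against the Lagrange basis $\ell_\alpha$ (with $\ell_\alpha(z_\beta)=\delta_{\alpha\beta}$ and $\deg\ell_\alpha\le n-1$) gives $\rho_\alpha=\langle\ell_\alpha,\ell_\alpha\rangle=\mathbf c_\alpha^{*}H_{n-1}\mathbf c_\alpha>0$ because $H_{n-1}$ is positive definite, so the weights come out automatically positive. Since $q(z_\alpha)=0$, the sequence $k\mapsto\sum_\alpha\rho_\alpha z_\alpha^{k}$ satisfies the same order-$n$ recurrence found above and agrees with $m_0,\dots,m_{n-1}$, hence with $m_k$ for all $k\le N$; therefore $\mu=\sum_\alpha\rho_\alpha\delta_{z_\alpha}$ represents $\bm M_N$. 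For uniqueness, any positive measure $\nu$ on $\mathbb T$ representing $\bm M_N$ satisfies $0=\langle q,q\rangle=\int_{\mathbb T}|q(z)|^2\,d\nu(z)$, so $\nu$ is supported on the zero set $\{z_\alpha\}$ of $q$; the same Vandermonde system then fixes the masses, giving $\nu=\mu$.
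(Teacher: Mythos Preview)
The paper does not actually prove this lemma: it is quoted as a consequence of Theorem~11.5 and Propositions~11.15--11.16 in Schm\"udgen's moment-problem text, with no argument supplied. Your proposal therefore goes well beyond what the paper offers, giving a self-contained proof of the trigonometric moment problem in the degenerate (finitely-atomic) case. The overall architecture---identifying the rank-drop index, using the Toeplitz shift structure together with positive semi-definiteness to propagate the kernel, exploiting the isometry $\langle zp,zp'\rangle=\langle p,p'\rangle$ to force the roots of the null polynomial onto $\mathbb T$ and make them simple, and then a Vandermonde/recurrence argument to recover the measure---is sound and is essentially the classical route.

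One step deserves more care. You assert that ``testing the form against the Lagrange basis gives $\rho_\alpha=\langle\ell_\alpha,\ell_\alpha\rangle$'', but what follows directly from the Vandermonde definition is only $\rho_\alpha=\langle\ell_\alpha,1\rangle=\sum_\beta\langle\ell_\alpha,\ell_\beta\rangle$; to collapse this you need the orthogonality $\langle\ell_\alpha,\ell_\beta\rangle=0$ for $\alpha\neq\beta$, which you have not established. It does hold, and the tools you already have suffice: since $(z-z_\beta)\ell_\beta$ is a scalar multiple of $q$ one gets $\langle p,z\ell_\beta\rangle=\overline{z_\beta}\,\langle p,\ell_\beta\rangle$ for all $p$, and writing $\ell_\alpha=(z-z_\beta)g$ and combining this with the isometry $\langle zg,z\ell_\beta\rangle=\langle g,\ell_\beta\rangle$ gives $\langle\ell_\alpha,\ell_\beta\rangle=0$ after a two-line computation. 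Without this verification the positivity claim is circular, since the naive substitution $m_{a-b}=\sum_\gamma\rho_\gamma z_\gamma^{a-b}$ for $a<b$ already presupposes that the $\rho_\gamma$ are real. A minor stylistic point: you reuse the symbol $r$ for both the rank index and the quotient polynomial in the simplicity argument.
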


As a first step, for $r\in(0,1]$, Eq.(\ref{eq:poeqrm}) can be rewritten as
\[
  m_k r^{-k} = \int_{\mathbb T} z^k d\mu
\]
with $\mu = \sum_{\alpha=1}^N \rho_{\alpha} \delta_{z_{\alpha}}$ and $z_{\alpha}=e^{i\phi_\alpha}\in\mathbb T$ for $k=0,\dots,N$.
According to Lemma \ref{thm:toep}, this requires the Toeplitz matrix corresponding to
\begin{equation} \label{eq:MNstar}
  \bm M_N^*(r)=(m_0, m_1 r^{-1},\dots,m_N r^{-N}) \in \mathbb C^{N+1},
\end{equation}
denoted by $H_N(\bm M_N^*(r))$, to be positive semi-definite with $\det H_N(\bm M_N^*(r)) = 0$.
In other words, the minimum eigenvalue of $H_N(\bm M_N^*(r))$ must be zero.

Denote by $\lambda(r;N)$ the minimum eigenvalue of $H_N(\bm M_N^*(r))$, which is a continuous function on $r$. We further show that
\begin{proposition} \label{prop:eiginc}
  $\lambda(r;N)$ is strictly increasing on $r\in(0,1]$.
\end{proposition}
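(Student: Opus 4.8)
The plan is to compute the derivative of the minimum eigenvalue $\lambda(r;N)$ with respect to $r$ and show it is positive on $(0,1]$. The first issue is that $\lambda(r;N)$ may not be differentiable where the minimal eigenvalue has multiplicity greater than one; to sidestep this I would either work with one-sided derivatives (Danskin-type argument) or, more cleanly, use the variational characterization $\lambda(r;N) = \min_{\|v\|=1} v^* H_N(\bm M_N^*(r)) v$ directly. Observe first that $H_N(\bm M_N^*(r))$ is a congruence transform of a positive object: writing $D_r = \mathrm{diag}(1, r^{-1}, \dots, r^{-N})$, we have $H_N(\bm M_N^*(r)) = \overline{D_r}\, G(r)\, D_r$ for a suitable Hermitian matrix, but it is cleaner to note that the entry in position $(j,k)$ of $H_N(\bm M_N^*(r))$ is $m_{k-j} r^{-(k-j)}$ for $k \ge j$ (and its conjugate for $k<j$), so that $H_N(\bm M_N^*(r))_{jk} = r^{-(k-j)} (H_N(\bm M_N))_{jk}$; in other words $H_N(\bm M_N^*(r)) = D_r^{-1} H_N(\bm M_N) D_r^{-1}$ up to the diagonal scaling, which is exactly the congruence $v \mapsto D_r^{-1} v$ applied to $H_N(\bm M_N)$ — wait, this needs care because $D_r^{-1}$ is real so congruence and similarity-by-diagonal coincide here. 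The upshot I want is a clean expression for $\frac{d}{dr} \big( v^* H_N(\bm M_N^*(r)) v \big)$.

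Concretely, for a fixed unit vector $v = (v_0,\dots,v_N)^T$, expand
\[
  v^* H_N(\bm M_N^*(r)) v = \sum_{j,k} \overline{v_j} v_k \, m_{k-j} r^{j-k} = m_0 \sum_j |v_j|^2 + 2\,\mathrm{Re}\sum_{\ell=1}^N m_\ell r^{-\ell} \sum_{j} \overline{v_j} v_{j+\ell}.
\]
Here I would substitute the integral representation $m_\ell = \int_{\mathbb T} z^\ell \, d\nu(z)$ (using realizability), giving
\[
  v^* H_N(\bm M_N^*(r)) v = \int_{\mathbb T} \Big| \sum_{j=0}^N v_j (z/r)^{j} \Big|^2 \, d\mu_r(z)?
\]
— more precisely, since $m_\ell r^{-\ell} = \int_{\mathbb T} (z/r)^\ell d\nu$, one gets $v^* H_N(\bm M_N^*(r)) v = \int_{\mathbb T} \big| \sum_j \overline{v_j} (\bar z/r)^{j}\big|^2 d\nu(z)$ once the double sum is recognized as a squared modulus (this is the standard positive-definiteness proof for $H_N(\bm M_N^*(r))$). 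Differentiating in $r$ under the integral sign then yields a manifestly signed integrand: each term $(z/r)^j (\bar z/r)^{k}$ contributes a factor involving $-(j+k)/r$, and after reorganizing one obtains $\frac{d}{dr} v^* H_N(\bm M_N^*(r)) v = -\frac{2}{r} \int_{\mathbb T} \big(\text{something nonnegative, strictly positive unless } v \text{ is constant}\big)$, which actually has the wrong sign — so I must instead evaluate this at the \emph{minimizing} $v^* = v^*(r)$ and show the monotonicity goes the right way, perhaps by re-examining whether the relevant quadratic form is increasing or decreasing; the correct bookkeeping (which direction $r^{-\ell}$ versus $r^{\ell}$ appears) is precisely the step I expect to be delicate.

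The main obstacle, then, is getting the sign right and handling the eigenvalue crossing. My intended resolution: use that $\lambda(r;N) = \min_{\|v\|=1} Q_r(v)$ where $Q_r(v) := v^* H_N(\bm M_N^*(r)) v$; for $r_1 < r_2$ in $(0,1]$, take $v_1$ the minimizer at $r_1$, then $\lambda(r_2;N) \le Q_{r_2}(v_1)$, so it suffices to show $Q_{r_2}(v_1) \le Q_{r_1}(v_1)$ — wait, I need the reverse, $\lambda$ \emph{increasing}. So instead take $v_2$ minimizing at $r_2$ and show $Q_{r_1}(v_2) < Q_{r_2}(v_2)$, i.e. that for every fixed nonzero $v$ the map $r \mapsto Q_r(v)$ is nondecreasing, with strict increase for the minimizer. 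Now $Q_r(v) = \int_{\mathbb T} |p_v(\bar z / r)|^2 \, |r|^{?} d\nu$ does not quite have this monotonicity pointwise in $v$; the honest statement is that $H_N(\bm M_N) - r^{\text{(shift)}}$-type comparisons via the \emph{Schur complement / Naimark} structure of Toeplitz matrices give interlacing. An alternative and perhaps cleaner route is: by Lemma \ref{thm:toep}, $\lambda(r;N) \ge 0$ iff $\bm M_N^*(r)$ is realizable on $\mathbb T$, and one shows the set of such $r$ is an interval $[r_*, r^{**}]$ with $r^{**} = 1$ (realizability of $\bm M_N$ itself forces $r=1$ to work with $\lambda(1;N) \ge 0$, and $\lambda(1;N) = 0$ would need degeneracy). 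Then strict monotonicity is equivalent to $\lambda$ having no interior local extremum, which follows from analyticity of $\lambda$ on the set where it is simple plus the boundary behavior $\lambda(0^+;N) < 0$ (since $\bm M_N^*(r)$ blows up) and $\lambda(1;N) \ge 0$. I would combine the congruence/integral representation to pin down the derivative's sign away from crossings and a perturbation argument (or the fact that a real-analytic function with nonvanishing derivative on a dense open set is strictly monotone) to conclude on all of $(0,1]$; nailing down the derivative sign computation is where I expect to spend the real effort.
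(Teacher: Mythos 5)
Your proposal does not close the key step, and two of its intermediate claims are wrong. First, the representation you want, $v^{*}H_N(\bm M_N^{*}(r))v=\int_{\mathbb T}\bigl|\sum_j v_j (z/r)^j\bigr|^2 d\nu$, is false: the $(j,k)$ entry of $H_N(\bm M_N^{*}(r))$ is $m_{k-j}\,r^{-|k-j|}$, i.e.\ the matrix is the \emph{Hadamard} (entrywise) product of $H_N(\bm M_N)$ with $\bigl(r^{-|j-k|}\bigr)_{j,k}$, not the diagonal congruence $D_r H_N(\bm M_N) D_r$, whose entries would scale like $r^{-(j+k)}$. Because $z/r\notin\mathbb T$, the convention $m^{*}_{-k}=\overline{m^{*}_k}$ does not coincide with $\int (z/r)^{-k}d\nu$, so the double sum does not collapse into a squared modulus against $\nu$; the derivative computation built on it (and the puzzling ``wrong sign'') inherits this error. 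Second, the fallback you state — that $r\mapsto Q_r(v)$ is nondecreasing for every fixed $v$ — is false: for $N=1$ and $v=(1,1)^T/\sqrt2$ one has $Q_r(v)=m_0+\Re(m_1)/r$, strictly decreasing in $r$ whenever $\Re(m_1)>0$. Your final suggestion (analyticity away from eigenvalue crossings plus the boundary signs $\lambda(0^+;N)<0$, $\lambda(1;N)\ge 0$) cannot substitute for the missing sign computation: it neither excludes local constancy or non-monotonicity of $\lambda$ nor delivers \emph{strict} increase on all of $(0,1]$, and the derivative sign is precisely what you leave open. Note also that the proposition is used (and proved in the paper) without assuming realizability of $\bm M_N$ itself, whereas your argument invokes $m_k=\int z^k d\nu$ from the outset.

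The idea you are missing is a comparison argument that avoids derivatives entirely. Fix $0<r_1<r\le 1$, put $\nu=\lambda(r_1;N)$, and lift only the zeroth moment: $\tilde m_0=m_0-\nu$, $\tilde m_k=m_k$ for $k\ge 1$. Then $H_N(\bm{\tilde M}_N^{*}(r_1))=H_N(\bm M_N^{*}(r_1))-\nu I_{N+1}$ is positive semi-definite and singular, so Lemma \ref{thm:toep} provides an atomic measure $\sum_\alpha\rho_\alpha\delta_{e^{i\phi_\alpha}}$ on $\mathbb T$ generating $\tilde m_k r_1^{-k}$. Passing from $r_1$ to $r$ multiplies $\tilde m_k r^{-k}$ by $(r_1/r)^k$, which by Eq.(\ref{eq:pois}) amounts to smearing each atom with the strictly positive Poisson kernel $P_{r_1/r}$; hence for every nonzero $a\in\mathbb C^{N+1}$,
\[
\bar a^{T}H_N(\bm{\tilde M}_N^{*}(r))a=\sum_{\alpha}\rho_\alpha\int_{-\pi}^{\pi}\Bigl|\sum_{l=0}^{N}a_l e^{il\theta}\Bigr|^{2}P_{r_1/r}(\phi_\alpha-\theta)\,d\theta>0,
\]
so $H_N(\bm M_N^{*}(r))-\nu I_{N+1}$ is positive definite and $\lambda(r;N)>\lambda(r_1;N)$. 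Equivalently, this is a strict Schur-product argument: Hadamard multiplication by $\bigl((r_1/r)^{|j-k|}\bigr)$, the moment matrix of the Poisson kernel $P_{r_1/r}$, turns a PSD Toeplitz matrix with an atomic representing measure into a positive definite one. No eigenvalue-crossing or differentiability issues arise in this route.
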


\begin{proof}
  For any $0<r_1<r\le 1$, we denote $\nu=\lambda(r_1;N)$ and $\bm{\tilde M}_N = (\tilde m_0,\dots,\tilde m_N)\in \mathbb C^{N+1}$ with
  \[
    \tilde m_0 = m_0 -\nu \ge 0 \text{    and  } \tilde m_k = m_k \quad \text{for }k=1,\dots,N.
  \]
  It is straightforward to verify that
  \[
    H_N(\bm{\tilde M}_N^*(r_1)) = H_N(\bm M_N^*(r_1)) - \nu I_{N+1}
  \]
  with $I_{N+1}$ being the unit matrix of order $N+1$, and hence the Toeplitz matrix $H_N(\bm{\tilde M}_N^*(r_1))$ is positive semi-definite with zero determinant.
  According to Lemma \ref{thm:toep}, there exits a unique $n$-atomic measure $\mu = \sum_{\alpha=1}^n \rho_{\alpha} \delta_{z_{\alpha}}$ for some $n\le N$ such that
  \[
    \tilde m_k r_1^{-k} = \int_{\mathbb T} z^k d\mu(z) = \sum_{\alpha=1}^n \rho_{\alpha} z_{\alpha}^k
  \]
  holds for $k=0,\dots,N$.
  Writing $z_{\alpha} = e^{i\phi_{\alpha}}$, we obtain
  \begin{equation} \label{eq:mkrmk}
    \tilde m_k r^{-k} = \left( r_1/r \right)^k \sum_{\alpha=1}^n \rho_{\alpha} e^{ik\phi_{\alpha}}
    = \sum_{\alpha=1}^n \rho_{\alpha} \int_{-\pi}^{\pi} e^{ik\theta} P_{r_1/r}(\phi_{\alpha}-\theta)d\theta,
  \end{equation}
  where the second equality results from Eq.(\ref{eq:pois}).

  Now we claim that the Toeplitz matrix $H_N(\bm{\tilde M}_N^*(r))$ is positive definite. To see this, take any nonzero $a=(a_0,\dots,a_N)^T\in\mathbb C^{N+1}$ and compute
  \[
  \begin{split}
    \bar a^T H_N(\bm{\tilde M}_N^*(r)) a
    &= \sum_{k,l=0}^N \overline{a_k} a_l \tilde m_{l-k} r^{-|l-k|} \\
    &= \sum_{\alpha=1}^n \rho_{\alpha} \int_{-\pi}^{\pi} \left[ \sum_{k,l=0}^N \overline{a_k}a_l e^{i(l-k)\theta} \right] P_{r_1/r}(\phi_{\alpha} - \theta) d\theta \\
    &= \sum_{\alpha=1}^n \rho_{\alpha} \int_{-\pi}^{\pi} \left | \sum_{l=0}^N a_l e^{il\theta} \right |^2 \ P_{r_1/r}(\phi_{\alpha} - \theta) d\theta
    > 0.
  \end{split}
  \]
  Here we have used Eq.(\ref{eq:mkrmk}) to obtain the second equality. Note that for $l<k$, Eq.(\ref{eq:mkrmk}) should be used with the complex conjugate taken on the both sides.

  Therefore, the minimum eigenvalue of $H_N(\bm{\tilde M}_N^*(r))$ is strictly positive.
  Further noticing
  \[
    H_N(\bm{\tilde M}_N^*(r))= H_N(\bm M_N^*(r)) - \nu I_{N+1},
  \]
  we conclude $\lambda(r;N) > \nu = \lambda(r_1;N)$, which completes the proof.
\end{proof}

The following fact will also be needed.
\begin{proposition} \label{prop:pd1}
  Suppose that $\nu$ is a measure on $\mathbb T$ with at least $(N+1)$ points of support. If $m_k=\int z^k d\nu$ for $k=0,\dots,N$, then the Toeplitz matrix $H_N(\bm M_N^*(1))=H_N(\bm M_N)$ is positive definite
\end{proposition}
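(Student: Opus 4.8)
The plan is to reuse the positive-definiteness computation already performed in the proof of Proposition~\ref{prop:eiginc}, with the auxiliary Poisson kernel there replaced by the measure $\nu$ itself. First I would fix an arbitrary nonzero vector $a=(a_0,\dots,a_N)^T\in\mathbb C^{N+1}$ and attach to it the polynomial $p(z)=\sum_{l=0}^N a_l z^l$. Since $\nu$ lives on $\mathbb T$, where $\bar z=z^{-1}$, one has $m_{-j}=\overline{m_j}=\overline{\int_{\mathbb T}z^j\,d\nu}=\int_{\mathbb T}z^{-j}\,d\nu$, so the representation $m_j=\int_{\mathbb T}z^j\,d\nu$ holds for every integer $j$ with $|j|\le N$. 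Expanding the Hermitian form and factoring then yields
\[
  \bar a^T H_N(\bm M_N)\,a=\sum_{k,l=0}^N \overline{a_k}a_l\,m_{l-k}
  =\int_{\mathbb T}\Bigl|\sum_{l=0}^N a_l z^l\Bigr|^2 d\nu(z)=\int_{\mathbb T}|p(z)|^2\,d\nu(z)\ge 0,
\]
so $H_N(\bm M_N)$ is at least positive semi-definite.

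Next I would upgrade this to strict positivity. Because $\nu$ is a nonnegative measure and $|p|^2$ is continuous and nonnegative, the vanishing of $\int_{\mathbb T}|p|^2\,d\nu$ would force $p$ to vanish at every point of $\operatorname{supp}\nu$: if $p(z_0)\neq 0$ for some $z_0\in\operatorname{supp}\nu$, then $|p|^2$ is bounded below by a positive constant on a neighborhood of $z_0$, and that neighborhood carries positive $\nu$-mass by the definition of the support, a contradiction. Hence $p$ would have at least $N+1$ distinct zeros on $\mathbb T$. But $p$ has degree at most $N$, so this is possible only if $p\equiv 0$, i.e.\ $a=0$, contradicting the choice of $a$. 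Therefore $\bar a^T H_N(\bm M_N)\,a>0$ for every nonzero $a$, which is exactly the claim, and in particular this covers the case $\bm M_N^*(1)=\bm M_N$ named in the statement.

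The computation is routine; the only point requiring a little care is the step from ``$\int_{\mathbb T}|p|^2\,d\nu=0$'' to ``$p$ has $N+1$ \emph{distinct} zeros'', that is, converting the hypothesis on the cardinality of the support into genuinely distinct roots of $p$. The elementary fact that the support of a positive measure meets every open set of positive measure settles this cleanly, after which the fundamental-theorem-of-algebra bound on the number of roots finishes the argument. I do not anticipate any genuine obstacle beyond bookkeeping with the negative-index moments via $\bar z=1/z$ on $\mathbb T$.
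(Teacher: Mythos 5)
Your proposal is correct and follows essentially the same route as the paper: expand $\bar a^T H_N(\bm M_N)a$ into $\int_{\mathbb T}|p(z)|^2\,d\nu$ and conclude strict positivity because a degree-$\le N$ polynomial cannot vanish at all of the at least $N+1$ support points of $\nu$. The extra detail you supply (continuity of $|p|^2$ plus positive mass of neighborhoods of support points) just makes explicit the step the paper states as the existence of $z_0\in\operatorname{supp}\nu$ with $p(z_0)\neq 0$.
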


\begin{proof}
  For any nonzero $a=(a_0,\dots,a_N)^T\in\mathbb C^{N+1}$, we define a polynomial $p(z)=\sum_{j=0}^N a_j z^j$ and compute
  \[
    \bar a^T H_N(\bm M_N^*(1)) a = \sum_{j,k=0}^N \overline{a_j} a_k m_{k-j}
    = \int \sum_{j,k=0}^N \overline{a_j} a_k z^{k-j} d\nu
    =\int |p(z)|^2 d\nu.
  \]
  Since there exists $z_0 \in \text{supp}\nu$ such that $p(z_0) \ne 0$, we have $\int |p(z)|^2 d\nu >0$ and hence $H_N(\bm M_N^*(1))$ is positive definite.
\end{proof}

With these preparations at hand, we are ready to prove Theorem \ref{thm:mainexst}.
\begin{proof}[Proof of Theorem \ref{thm:mainexst}]
  If the measure $\nu$ has no more than $N$ points of support, clearly $r=1$ is a solution to Eq.(\ref{eq:poeqrm}) with $\det H_N(\bm M_N^*(1))=0$. The solution is unique because for any $r'<1$, we see from Proposition \ref{prop:eiginc} that $\lambda(r';N)<0$ and hence no measure can generate $\bm M_N^*(r')$, as illustrated in Lemma \ref{thm:toep}.

  Now suppose that $\nu$ has at least $(N+1)$ points of support. If $m_1 = \dots = m_N = 0$, it can be verified that $r=0$, $\sum_{\alpha}\rho_\alpha=m_0$ and arbitrary $\phi_\alpha$ constitute a solution to Eq.(\ref{eq:poeqrm}).
  On the other hand, for any $r'>0$, the equations $\sum_\alpha \rho_\alpha z_\alpha^k = 0$ with $|z_\alpha|=1$ for $k=1,\dots,N$ can only lead to $\rho_\alpha=0$ for all $\alpha$, a contradiction to $m_0>0$.
  Thus, $r=0$ is the only solution in this case.

  If $(m_1,\dots,m_N) \ne \bm 0$, then we claim
  \[
    \lim_{r\to 0^+}\lambda(r;N)<0.
  \]
  To see this, let $k$ be that $m_1=\dots=m_{k-1}=0$ and $m_k\ne 0$. Then the determinant
  \[
    \det(H_k(\bm M_N^*(r)) = \det
    \begin{bmatrix}
      m_0 & 0 & \cdots & 0 & m_k r^{-k} \\
          & m_0 &&& \\
          &   & \ddots && \\
          &   & & m_0 & \\
      \overline{m_k}r^{-k} & 0 & \cdots & 0 & m_0
    \end{bmatrix}
  \]
  has a leading term $-m_0^{k-1}|m_k|^2 r^{-2k}$ and hence goes to $-\infty$ as $r\to 0^+$. This implies that $H_k(\bm M_N^*(r))$ has negative eigenvalues as $r\to 0^+$. Our claim thus follows because the minimum eigenvalue of $H_N(\bm M_N^*(r))$ can never be greater than that of its principal minor.
  Since $\lambda(r;N)$ is strictly increasing on $r\in(0,1]$ (Proposition \ref{prop:eiginc}) and $\lambda(1;N)>0$ (Proposition \ref{prop:pd1}),
  it becomes clear that there exists a unique $r\in (0,1)$, together with a corresponding $n$-atomic measure $\mu = \sum_\alpha \rho_\alpha \delta_{z_\alpha}$ on $\mathbb T$ with $n\le N$, that solves Eq.(\ref{eq:poeqrm}).

  In all three cases, it is straightforward to show that the unclosed terms (for instance, $m_{N+1}$) can be uniquely determined from $m_0,\dots,m_N$. Thus, the Poisson-EQMOM is well-defined.
\end{proof}

\section{Moment inversion algorithm} \label{sec:inv}
Our analysis in the previous section not only reveals the nice property of the Poisson-EQMOM, but also gives clue to a practical algorithm solving Eq.(\ref{eq:poeqrm}). Generally, given any realizable moment $\bm M_N$, the algorithm consists of the following three steps:

(\textit{i}) Find $r\in[0,1]$, which will be detailed in Section \ref{subsec:findr};

(\textit{ii}) Solve $\phi_1,\dots,\phi_N$, as presented in Section \ref{subsec:phik};

(\textit{iii}) Solve $\rho_1,\dots,\rho_N$ from the linear equation
\begin{equation} \label{eq:solverho}
  \begin{bmatrix}
    1 & 1 & \cdots & 1 \\
    z_1 & z_2 & \cdots & z_N \\
    \vdots & \vdots & & \vdots \\
    z_1^{N-1} & z_2^{N-1} & \cdots & z_N^{N-1}
  \end{bmatrix}
  \begin{bmatrix}
    \rho_1 \\ \rho_2 \\ \vdots \\ \rho_N
  \end{bmatrix} =
  \begin{bmatrix}
    m_0 \\ m_1 / r \\ \vdots \\ m_{N-1} / r^{N-1}
  \end{bmatrix}.
\end{equation}
Here we denote $z_\alpha=e^{i\phi_\alpha}$ for $\alpha=1,\dots,N$. In practice, $\rho_\alpha$ can be solved from the real part of the above equation, that is, $\sum_{\alpha=1}^N \rho_\alpha \cos k \phi_\alpha = c_k$ for $k=0,1,\dots,N-1$.

\subsection{Locating $r$} \label{subsec:findr}
Two approaches are available for this purpose. The more effective strategy is to numerically compute the minimum eigenvalue of $H_N(M_N^*(r))$, which is expected to be zero for the desired $r$. As the minimum eigenvalue increases monotonically on $r\in[0,1]$, a simple bisection algorithm on the interval $[0,1]$ can be applied to locate $r$.
In each iteration step, the minimum eigenvalue is computed by the MATLAB function $\tt{eig}$ \cite{MATLAB}.
This is the approach implemented in our simulations (see Table \ref{tab:inversion_time_cost} with $\ell=0$ for the runtime).

On the other hand, $r$ can also be determined with a $t$-polynomial
\[
  p_N(t):=\det H_N(M_N^*(1/t))
\]
of degree $\frac{N^2+2N+\mod(N,2)}{2}$.
Let $t_m$ be the smallest root of $p_N$ on the interval $[1,\infty]$. Then we have $r=t_m^{-1}$, corresponding to the greatest root of $\det H_N(M_N^*(r))$ on $r\in[0,1]$.
To see this, notice that for any smaller root $r'$ (if it ever exists), $H_N(M_N^*(r'))$ must have negative eigenvalues, as indicated by Proposition \ref{prop:eiginc}. Consequently, according to Lemma \ref{thm:toep}, there is no measure with exactly $N$ points of support that generates $M_N^*(r')$.

The polynomials $p_N(t)$ can be explicitly computed for small values of $N$:
\[
\begin{aligned}
  p_1(t) = &m_0^2 - |m_1|^2 t^2, \\
  p_2(t) = &m_0^3 - 2m_0|m_1|^2 t^2 + \left[ 2 \Re (\overline{m_1}^2 m_2) - m_0|m_2|^2 \right] t^4, \\
  p_3(t) = &m_0^4 - 3m_0^2|m_1|^2 t^2 + \left[ 4 \Re (m_0 \overline{m_1}^2 m_2) - 2m_0^2|m_2|^2 + |m_1|^4 \right] t^4 \\
  &+ \left[ 4\Re (m_0\overline{m_1m_2}m_3) - 2 \Re (\overline{m_1}^3m_3) - m_0^2|m_3|^2 - 2|m_1m_2|^2 \right] t^6 \\
  &+ \left[ |m_2|^4 + |m_1m_3|^2 -2\Re (\overline{m_1}m_2^2\overline{m_3}) \right] t^8,
\end{aligned}
\]
with $\Re (z)$ denoting the real part of $z\in\mathbb C$.
Thus, the smallest root $t_m$ on $[1,\infty]$ can be found by existing algorithms.
However, as $N$ grows larger, the explicit forms of $p_N(t)$ will soon become intractable.

\subsection{Solving $\phi_\alpha$'s} \label{subsec:phik}
The key idea to solve the $\phi_\alpha$'s is to seek the roots of orthogonal polynomials.
Given a moment set $\bm M_N \in \mathbb C^{N+1}$, assume that the Toeplitz matrix $H_N(\bm M_N)$ is positive semi-definite with its rank being $n \le N$.
Define $P_0(z)=1$ and
\begin{equation} \label{eq:Pkz}
  P_k(z) = \frac{1}{\det H_{k-1}} \det
  \begin{bmatrix}
    m_0 & m_1 & \cdots & m_{k-1} & m_k \\
    \overline{m_1} & m_0 & \cdots & m_{k-2} & m_{k-1} \\
    \vdots & \vdots & & \vdots & \vdots \\
    \overline{m_{k-1}} & \overline{m_{k-2}} & \cdots & m_0 & m_1 \\
    1 & z & \cdots & z^{k-1} & z^k
  \end{bmatrix}
\end{equation}
for $k=1,\dots,n$. These are monic polynomials of degree $k$.

To see the `orthogonality', we introduce a sesquilinear form $\langle \cdot,\cdot \rangle$ on the polynomial space $\mathbb C[z]_n$ as
\begin{equation} \label{eq:hemform}
  \langle z^j,z^k \rangle = m_{k-j}, \quad j,k=0,1,\dots,n.
\end{equation}
This is actually a complex Hermitian form due to our convention $m_{-k}=\overline{m_k}$.
It is not difficult to verify that
\[
  \langle P_k(z), z^j \rangle = \frac{1}{\det H_{k-1}} \det
  \begin{bmatrix}
    m_0 & \overline{m_1} & \cdots & \overline{m_{k-1}} & \overline{m_k} \\
    m_1 & m_0 & \cdots & \overline{m_{k-2}} & \overline{m_{k-1}} \\
    \vdots & \vdots & & \vdots & \vdots \\
    m_{k-1} & m_{k-2} & \cdots & m_0 & \overline{m_1} \\
    m_j & m_{j-1} & \dots & m_{j-k+1} & m_{j-k}
  \end{bmatrix}
  = \frac{\det H_k}{\det H_{k-1}} \delta_{jk}
\]
for $j=0,1,\dots,k$. Here $\delta_{jk}$ denotes the Kronecker delta. Obviously, this implies
\[
  \langle P_k(z), P_j(z) \rangle = \frac{\det H_k}{\det H_{k-1}} \delta_{jk} \quad \text{and} \quad
  \langle P_n(z), P_n(z) \rangle = 0.
\]

Another ingredient we need is the recursive formula of the orthogonal polynomials.
Define $P_0^*(z)=1$ and
\[
  P_k^*(z) = \frac{1}{\det H_{k-1}}\det
  \begin{bmatrix}
    m_0 & \overline{m_1} & \cdots & \overline{m_{k-1}} & \overline{m_k} \\
    m_1 & m_0 & \cdots & \overline{m_{k-2}} & \overline{m_{k-1}} \\
    \vdots & \vdots & & \vdots & \vdots \\
    m_{k-1} & m_{k-2} & \cdots & m_0 & \overline{m_1} \\
    z^k & z^{k-1} & \cdots & z & 1
  \end{bmatrix}
\]
for $k=1,\dots,n$.
Similarly, one can check that
\[
  \langle P_k^*(z),z^j \rangle = \frac{\det H_k}{\det H_{k-1}}\delta_{0j}
\]
for $j=0,1,\dots,k$.
Then the following relations hold \cite{gtm277}:
\begin{equation} \label{eq:Pkrec}
  P_{k+1}(z) = zP_k(z) - \overline{a_k} P_k^*(z), \quad
  P_{k+1}^*(z) = P_k^*(z) - a_k z P_k(z)
\end{equation}
for $k=0,1,\dots,n-1$ with
\[
  a_k = -\overline{P_{k+1}(0)} = \frac{\det H_{k-1}}{\det H_k} \langle zP_k,1 \rangle.
\]

Now let us specify the arbitrary moment set $\bm M_N$ to be $\bm M_N^*(r)$ in Eq.(\ref{eq:MNstar}), so obviously our assumption on the Toeplitz matrix $H_N(\bm M_N^*(r))$ holds.
If the polynomials $P_k(z)$ in Eq.(\ref{eq:Pkz}) and the Hermitian form in Eq.(\ref{eq:hemform}) are both defined based on $\bm M_N^*(r)$, then we claim the following.
\begin{proposition}
  The roots of $P_n(z)$ are just $z_{\alpha}=e^{i\phi_\alpha}$ for $\alpha=1,\dots,n$.
\end{proposition}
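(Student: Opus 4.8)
The plan is to identify the measure $\mu$ recovered by Lemma~\ref{thm:toep} from the moments $\bm M_N^*(r)$ and then show its support points are precisely the roots of $P_n(z)$. Since $H_N(\bm M_N^*(r))$ is positive semi-definite with rank $n$, Lemma~\ref{thm:toep} gives a unique $n$-atomic measure $\mu=\sum_{\alpha=1}^n\rho_\alpha\delta_{z_\alpha}$ on $\mathbb T$ with $m_k r^{-k}=\int_{\mathbb T}z^k\,d\mu$ for $k=0,\dots,N$; these $z_\alpha=e^{i\phi_\alpha}$ are exactly the unknowns we want. So the proposition is really the statement that $P_n$, the orthogonal polynomial associated to the Hermitian form built from $\bm M_N^*(r)$, is the monic polynomial whose zero set is $\mathrm{supp}\,\mu=\{z_1,\dots,z_n\}$.

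First I would observe that for any polynomial $q\in\mathbb C[z]_n$ the Hermitian form can be rewritten as an integral against $\mu$: writing $q(z)=\sum_j b_j z^j$, one has $\langle q,q\rangle = \sum_{j,k}\overline{b_j}b_k m_{k-j} r^{-(k-j)}=\int_{\mathbb T}|q(z)|^2\,d\mu(z)$ (using $m_{-k}=\overline{m_k}$ and the reality of $r$). Hence $\langle\cdot,\cdot\rangle$ is a genuine positive semi-definite inner product on $\mathbb C[z]_n$ with one-dimensional kernel, and $\langle q,q\rangle=0$ if and only if $q$ vanishes on every support point $z_\alpha$. Then I would invoke the identity established just above the proposition, namely $\langle P_n(z),P_n(z)\rangle=\det H_n/\det H_{n-1}\cdot\delta_{nn}$ evaluated in the rank-$n$ case: since $H_n=H_N$ here has rank $n$, $\det H_n=0$ while $\det H_{n-1}>0$ (the leading $n\times n$ minor is positive definite by the rank hypothesis, or by the argument in the proof of Proposition~\ref{prop:pd1}), so $\langle P_n,P_n\rangle=0$. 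Therefore $P_n$ lies in the kernel of the form, which forces $\int_{\mathbb T}|P_n(z)|^2\,d\mu=0$, i.e. $P_n(z_\alpha)=0$ for all $\alpha=1,\dots,n$.

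To finish I would note that $P_n$ is monic of degree exactly $n$ — this follows from the determinantal formula Eq.(\ref{eq:Pkz}), where the coefficient of $z^n$ is $\det H_{n-1}/\det H_{n-1}=1$, and $\det H_{n-1}\ne0$ because the $n\times n$ principal submatrix $H_{n-1}$ of the rank-$n$ Hermitian positive semi-definite matrix $H_N$ is invertible. A monic degree-$n$ polynomial with $n$ distinct roots $z_1,\dots,z_n$ (the support points of a measure on $\mathbb T$ are distinct by definition of an atomic measure) is uniquely determined as $\prod_{\alpha=1}^n(z-z_\alpha)$, so $P_n(z)=\prod_{\alpha=1}^n(z-z_\alpha)$ and its root set is exactly $\{z_\alpha\}_{\alpha=1}^n$. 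One should also remark that $n$ is the same integer appearing in $f_N$: the $N$-term ansatz collapses to an $n$-atomic closure, with the remaining $N-n$ weights set to zero (or the points merged), which is consistent with the construction.

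The main obstacle, and the only point requiring genuine care rather than bookkeeping, is justifying that $\det H_{n-1}\ne 0$ so that $P_n$ is well-defined and genuinely of degree $n$, and dually that $\langle P_n,P_n\rangle=0$ rather than merely $\le 0$. Both hinge on the precise rank structure of a Hermitian positive semi-definite Toeplitz matrix of rank $n\le N$: its leading $k\times k$ principal minors are strictly positive for $k\le n$ and vanish for $k>n$. This is exactly the content packaged in the cited Propositions 11.15–11.16 of \cite{gtm277} (equivalently, it can be seen from the integral representation $\langle\cdot,\cdot\rangle=\int_{\mathbb T}|\cdot|^2\,d\mu$ together with the fact that polynomials of degree $<n$ cannot all vanish on $n$ points), so I would cite it and move on rather than reprove it.
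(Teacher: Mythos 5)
Your proposal is correct and follows essentially the same route as the paper: invoke Lemma~\ref{thm:toep} to obtain the unique $n$-atomic measure $\mu$ generating $\bm M_N^*(r)$, rewrite the Hermitian form as $\langle p,q\rangle=\int\bar p\,q\,d\mu$, and deduce from $\langle P_n,P_n\rangle=0$ that $P_n$ vanishes at every support point $z_\alpha$. Your additional remarks (that $\det H_{n-1}>0$ makes $P_n$ well-defined and monic of degree $n$, so its root set is exactly $\{z_\alpha\}_{\alpha=1}^n$) merely spell out details the paper leaves implicit, not a different argument.
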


\begin{proof}
  Denote $m_k^*=m_k r^{-k}$.
  According to Lemma \ref{thm:toep}, there exists a unique $n$-atomic measure $\mu = \sum_{\alpha=1}^n \rho_\alpha \delta_{z_\alpha}$ on $\mathbb T$ such that $m_k^* = \int z^k d\mu$, where $z_\alpha=e^{i\phi_\alpha}$ are just what we need.
  Then the Hermitian form Eq.(\ref{eq:hemform}) is rewritten as $\langle z^j, z^k \rangle = m^*_{k-j} = \int z^{k-j} d\mu$.
  It is straightforward to verify that for any $p,q\in \mathbb C[z]_n$, we have
  \[
    \langle p,q\rangle = \int \bar p q d\mu.
  \]
  In particular, it is seen that
  \[
    \langle P_n(z),P_n(z) \rangle = \int |P_n(z)|^2 d\mu = 0,
  \]
  which implies $P(z_\alpha)=0$ for all $\alpha$.
\end{proof}

For convenience, define $Q_0(z) = 1$ and $Q_k(z) = \sqrt{\det H_{k-1}/\det H_k} P_k(z)$ for $k=1,\dots,n-1$. Clearly, we have $\langle Q_k(z), Q_j(z) \rangle = \delta_{kj}$, indicating that $\{Q_k\}_{k=0}^{n-1}$ is the (unique) orthonormal base of $\mathbb C[z]_{n-1}$.
As a consequence, we can express $zQ_k(z)$ as
\[
  zQ_k(z) = \sum_{j=0}^{k+1} a_{kj} Q_j(z)
\]
for $k=0,1,\dots,n-2$. A similar relation for $zQ_{n-1}(z)$ can be obtained from Eq.(\ref{eq:Pkrec}) as
\[
  zQ_{n-1}(z) = \sum_{j=0}^{n-1}a_{n-1,j}Q_j(z) + \sqrt{\frac{\det H_{n-2}}{\det H_{n-1}}} P_n(z),
\]
because $P_{n-1}^*(z)$ can also be linearly generated by $\{Q_k\}_{k=0}^{n-1}$.
The last two relations can be concisely organized as
\[
  z
  \begin{bmatrix}
    Q_0(z) \\ \vdots \\ Q_{n-2}(z) \\ Q_{n-1}(z)
  \end{bmatrix}
  = \bm J
  \begin{bmatrix}
    Q_0(z) \\ \vdots \\ Q_{n-2}(z) \\ Q_{n-1}(z)
  \end{bmatrix}
  +
  \begin{bmatrix}
    0 \\ \vdots \\ 0 \\ \sqrt{\frac{\det H_{n-2}}{\det H_{n-1}}} P_n(z)
  \end{bmatrix}
\]
with $\bm J=(a_{kj}) \in \mathbb C^{n \times n}$ and $a_{kj}=0$ for $j\ge k+2$.
It thus becomes clear that the roots of $P_n(z)$ are the eigenvalues of $\bm J$.

The elements of $\bm J$ can be computed iteratively. If $Q_k(z)$ is already known, then we have
\begin{equation} \label{eq:akj}
  a_{kj} = \overline{\langle zQ_k(z),Q_j(z)\rangle} \quad \text{for } j=0,\dots,k.
\end{equation}
Compute
\begin{equation}
  R_{k+1}(z) = zQ_k(z) - \sum_{j=0}^k a_{kj} Q_j(z),
\end{equation}
which has a degree $k+1$ and is orthogonal to $Q_j(z)$ for $j=0,\dots,k$.
Therefore, $Q_{k+1}(z)$ can be obtained via
\begin{align}
  a_{k,k+1} &= \sqrt{\langle R_{k+1}(z),R_{k+1}(z) \rangle}, \\
  Q_{k+1}(z) &= \frac{R_{k+1}(z)}{a_{k,k+1}}. \label{eq:Qkp1}
\end{align}
This gives a practical algorithm for solving the $\phi_\alpha$'s with a given moment set $M_N^*(r)$.

\begin{remark}
  For the degenerated case with $n = \text{rank } M_N^*(r)<N$, only $n$ nodes $z_{\alpha}=e^{i\phi_\alpha}$ ($\alpha=1,\dots,n$) can be solved from the above procedure, and the rest nodes can be chosen arbitrarily with zero weights.
\end{remark}

\subsection{Lifting $m_0$} \label{subsec:lift}
Two issues remain for the moment inversion process presented in the previous subsections. First, it may be computationally costly to solve $r$ iteratively as in subsection \ref{subsec:findr}. Second, the approximation $f_N$ may deviate significantly from the `correct' distribution $f$ especially when the number of nodes $N$ is not too large. The latter is clearly illustrated in Fig.~\ref{fig:lift}, where the original von Mises distribution reads as
\[
  f_{vM}(\theta) = C\exp\left( \frac{\cos\theta}{d} \right)
\]
with $C$ being a scaling factor such that $\int_{-\pi}^\pi f_{vM}d\theta = 1$.
As seen in Fig.~\ref{fig:lift}(A), if $d=0.8$ (corresponding to a `wide' distribution), the 8-node Poisson-EQMOM (with $\ell=0$) generates a reasonable approximation to $f_{vM}$.
In contrast, for the `narrower' distribution with $d=0.4$, the 8-node reconstruction is very different from the von Mises distribution; see Fig.~\ref{fig:lift}(B) and the large $L^2$-error in Fig.~\ref{fig:lift}(D) for $\ell=0$.
While simply increasing the number of nodes does imply a trend for $f_N$ to be closer to $f_{vM}$, there remains a non-negligible error even when $N$ reaches 32.
This poses challenges to solving the moment closure system (\ref{eq:momsys}) as the closed moment $\hat m_{N+1}$ may become inaccurate, as demonstrated in Table~\ref{tab:err} for the von Mises case with $d=0.4$.

\begin{figure}[htbp]
  \centering
  \subfloat[$d=0.8$ and $N=8$]
  {\includegraphics[height=4.5cm]{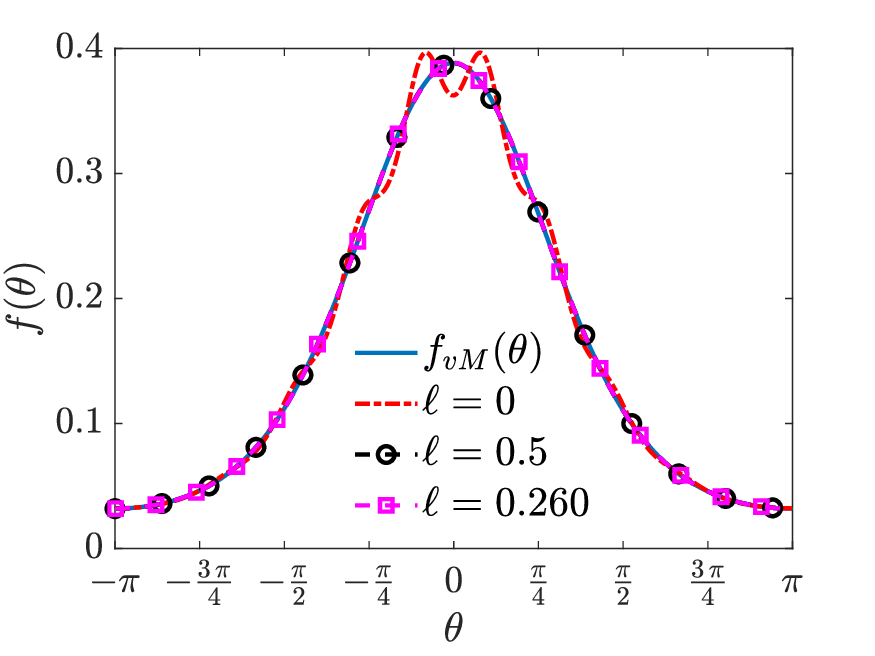}}
  \subfloat[$d=0.4$ and $\ell=0$]
  {\includegraphics[height=4.5cm]{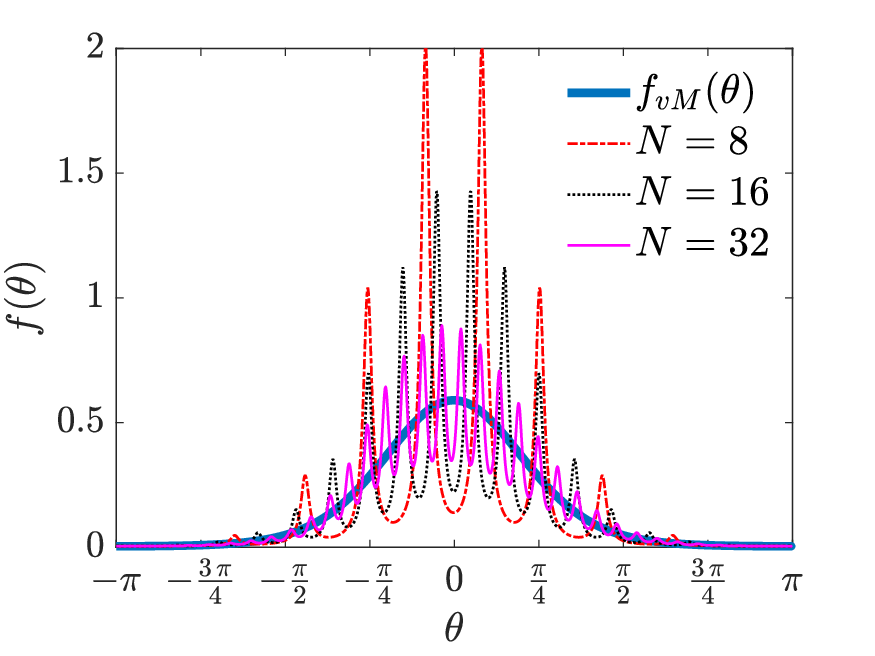}}
  \vspace{0.5pt}
  \subfloat[$d=0.4$ and $N=8$]
  {\includegraphics[height=4.5cm]{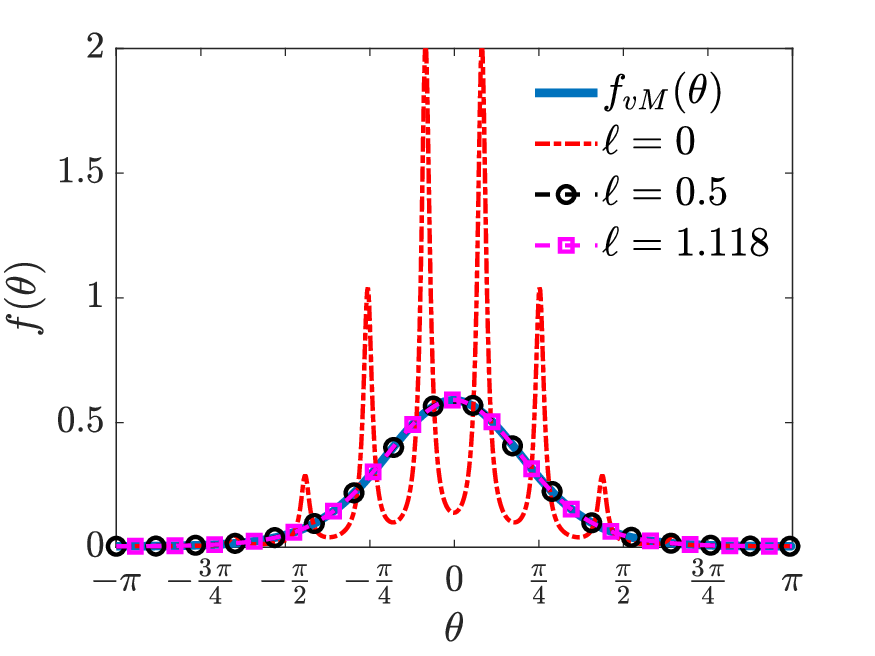}}
  \subfloat[$L^2$-error for the cases with $d=0.4$]
  {\includegraphics[height=4.5cm]{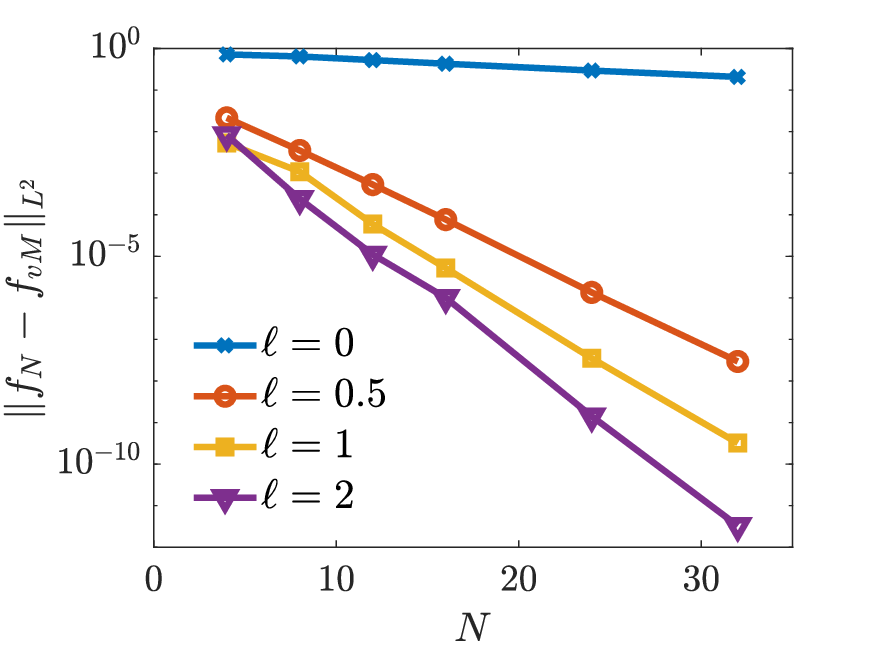}}
  \caption{Reconstruction of the von Mises distribution by the Poisson-EQMOM.}
  \label{fig:lift}
\end{figure}

\begin{table}[htbp]
  \centering
  \caption{$|\hat m_{N+1} - m_{N+1}|$ for the von Mises distribution with $d=0.4$}
  \label{tab:err}
  \begin{tabular}{ccccc}
    \hline\noalign{\smallskip}
    $N$ & $\ell=0$ & $\ell=0.5$ & $\ell=1$ & $\ell=2$ \\
    \noalign{\smallskip}\hline\noalign{\smallskip}
    4 & 0.1154 & 0.01698 & 0.005697 &	0.01389 \\
    8 & 0.09176 &	0.001776 & 0.001535 &	$3.694\times 10^{-4}$ \\
    12 & 0.07687 & $1.599\times 10^{-4}$ & $1.011\times 10^{-6}$ & $3.685\times 10^{-6}$ \\
    16 & 0.06484 & $8.005\times 10^{-6}$ & $8.880\times 10^{-6}$ & $1.727\times 10^{-6}$ \\
    24 & 0.04627 & $6.069\times 10^{-7}$ & $4.172\times 10^{-8}$ & $7.589\times 10^{-10}$ \\
    32 & 0.04420 & $3.727\times 10^{-8}$ & $1.018\times 10^{-10}$	& $4.056 \times 10^{-12}$ \\
    \noalign{\smallskip}\hline
  \end{tabular}
\end{table}

Inspired by the condition in Theorem \ref{thm:lift} (that $f$ has a positive lower bound), we propose a remedy, which is simply raising $m_0$ by some quantity $\ell$ (with other moments unchanged) before performing moment inversion. This is equivalent to inverting the moments generated by $f(\theta)+\ell$ because the constant function on $\mathbb T$ has zero moments $m_k=0$ for $k\ge 1$.
With the lifted moments $(m_0+\ell,m_1,\dots,m_N)$, the inverted parameters are denoted as $(\rho_\alpha^{(\ell)},\phi_\alpha^{(\ell)},r^{(\ell)})_{\alpha=1}^N$, all relying on $\ell$.
The Poisson-EQMOM ansatz Eq.(\ref{eq:peans}) now becomes
\begin{equation} \label{eq:peanslift}
  f_{N,\ell}(\theta) = \sum_{\alpha=1}^N \rho_\alpha^{(\ell)} P_{r^{(\ell)}} (\phi_\alpha^{(\ell)}-\theta) - \ell
\end{equation}
for moment closure. In particular, $\hat m_{N+1}$ has the same form as Eq.(\ref{eq:mnp1}).

It is seen from Figs.~\ref{fig:lift}(A\&C) that with certain values of lifting $\ell$, the recovered $f_{N,\ell}$'s are much more precise especially for the case with $d=0.4$. The improvement is more clearly manifested in Fig.~\ref{fig:lift}(D) and Table~\ref{tab:err}: with the presence of lifting, the $L^2$-error is reduced by several orders for a fixed $N$, and can be more effectively eliminated by adding $N$ as compared with the zero lifting cases.
Such an observation is consistent with the pointwise convergence of $f_N$ to $f$ in Theorem \ref{thm:lift}.
Therefore, we believe that lifting $m_0$ is an indispensable part of the moment inversion process.

The mechanism of lifting $m_0$ can be understood as follows. According to Proposition \ref{prop:eiginc}, the minimum eigenvalue $\lambda(r;N)$ of the Toeplitz matrix $H_N(\bm M_N^*(r))$ (see (\ref{eq:2.13}) \& (\ref{eq:MNstar})) increases on $r$. In some cases, $r$ may have to be so close to $1$ to ensure $\lambda(r;N)=0$ and cause the resultant Poisson kernel akin to a Delta function. This could interpret the problem observed in Fig.~\ref{fig:lift}(B). Through lifting $m_0$ to $m_0+\ell$ (while keeping the other moments unchanged), it is easy to show that $r$ solves $\lambda(r;N)=-\ell$. Thus, $r$ becomes smaller and the accuracy of function approximation is promoted, as seen in Fig.~\ref{fig:lift}(C).
As such, the lifting can be seen as a numerical regularization technique that overcomes some ill-conditioning of the moment inversion. This may be compared to approaches to regularize/approximate standard $M_N$ methods \cite{All2019,McD2013,SchTor2015}. It also shares common features with the filter method in the literature \cite{fan2020} as both methods modify the moments and the computation of the moments becomes more stable in the modified setting.

Based on this reasoning, it is possible to choose $\ell$ flexibly to accelerate the calculation of $r$. Given $\bm M_N = (m_0,\dots,m_N)\in\mathbb C^{N+1}$, we set $r_1=|m_1|/m_0$ and the eigenvalues of $H_1(\bm M^*_N(r_1)) = \begin{bmatrix} m_0 & m_1/r_1 \\ \overline{m_1}/r_1 & m_0 \end{bmatrix}$ are 0 and $2m_0$. Then for $H_N(\bm M^*_N(r_1))$, we have the minimum eigenvalue $\lambda(r_1;N) \le 0$ and take
\begin{equation} \label{eq:liftup}
    \ell= \min \{ -\lambda(r_1;N), \ \ell_M \}.
\end{equation}
Here $\ell_M$ ($=10^6$ in this work) is a prescribed limit value.
It is easy to see that for $\bm M_{N\ell}=(m_0+\ell,m_1,\dots,m_N)$, the minimum eigenvalue of $H_N(\bm M_{N\ell}^*(r_1))$ is zero, so the remaining task is to solve $\phi_\alpha$ and $\rho_\alpha$ based on $\bm M^*_{N\ell}(r_1)$.
Clearly, this approach computes the minimum eigenvalue of a Toeplitz matrix only once.
The performance is examined in Figs.~\ref{fig:lift}(A) (the case $\ell=0.260$) and (C) (the case $\ell=1.118$).

Table~\ref{tab:inversion_time_cost} quantifies the runtime of performing moment inversion by 10 000 times with no lifting ($\ell=0$) and with the lifting based on (\ref{eq:liftup}). Here the moments $m_k$ ($k=0,\dots,N$) are generated by the parametric distributions
\[
    f_r(\theta) = \frac{1}{4\pi}\left[ 1 + \sum_{j = 1}^6 P_j \sin(p_j \theta) \right] + \frac{1}{4\pi} \frac{1 - R^2}{1 - 2R\cos\theta + R^2}
\]
with $\{p_j\}_{1}^6 = \{1,2,3,5,7,11\}$, $P_j$'s uniformly sampled from $[-0.5,0.5]$ and $R$ uniformly sampled from $[0,1]$. Note that for $\ell=0$, the first approach in Section \ref{subsec:findr} is applied to locate $r$ through iteration. The superlinear growth of the runtime with respect to $N$ can be observed for both values of $\ell$. As expected, the lifting (\ref{eq:liftup}) substantially reduces the runtime. It is worth pointing out that the moment inversion without lifting could fail for $N=32$, which is caused by the numerical ill-conditioning of computing the roots of orthogonal polynomials.
In contrast, the procedure with lifting is more robust against the increment of $N$, which may be attributed to the fact that the higher moments on $\mathbb S^1$ (Fourier coefficients) are small.
While the stability of the lifted method is improved and we did not observe any stability problems up to $N = 32$, the stability of the lifted method is not guaranteed for arbitrary $N$.

Finally, the established moment inversion algorithm is summarized in Table~\ref{tab:sum}. In particular, the choice of $\ell$ by (\ref{eq:liftup}) is applied in our numerical tests in Section \ref{sec:num} and yields satisfactory results therein.

\begin{table}[htbp]
    \caption{The runtime (in seconds) of performing moment inversion by 10 000 times without lifting ($\ell=0$) and with lifting (\ref{eq:liftup}). The results are obtained by the MATLAB code running on an Intel i7-1065G7 CPU.}
    \label{tab:inversion_time_cost}
    \centering
    \begin{threeparttable}
    \begin{tabular}{c|cccccccc}
    \hline
    $N$ & 4 & 8 & 12 & 16 & 20 & 24 & 28 & 32 \\
    \hline
    $\ell=0$ & 14.4332 & 17.1844 & 20.5417 & 24.6429 & 154.5748 & 189.8994 & 226.0661 & --\tnote{1} \\
    $\ell$ by (\ref{eq:liftup}) & 1.5987 & 2.9869 & 4.6814 & 7.4622 & 15.6160 & 19.9326 & 26.1912 & 36.0146 \\
    \hline
    \end{tabular}

    \begin{tablenotes}
        \footnotesize
        \item[1] Computation could crash for some of the cases.
    \end{tablenotes}
    \end{threeparttable}
\end{table}

\begin{table}[htbp]
  \centering
  \caption{A summary on the moment inversion algorithm}
  \label{tab:sum}
  \begin{tabular}{ll}
    \hline\noalign{\smallskip}
    Step 0 & Given $\bm M_N=(m_0,\dots,m_N)\in\mathbb C^{N+1}$. \\
    Step 1 & Compute $r = |m_1|/m_0$ and $\ell$ by Eq.(\ref{eq:liftup}). \\
    Step 2 & $\bm M_N \gets (m_0+\ell, \frac{m_1}{r},\dots, \frac{m_N}{r^N})$.\\
    Step 3 & Define the Hermitian form $\langle z^j,z^k\rangle = m_{k-j}$. \\
    Step 4 & Compute the elements of $\bm J$ iteratively by Eqs.(\ref{eq:akj})-(\ref{eq:Qkp1}). \\
    Step 5 & Solve $z_\alpha=e^{i\phi_\alpha}$ ($\alpha=1,\dots,N$) as the eigenvalues of $\bm J$. \\
    Step 6 & Solve $\rho_\alpha$ ($\alpha=1,\dots,N$) by Eq.(\ref{eq:solverho}). \\
    \noalign{\smallskip}\hline
  \end{tabular}
\end{table}

\section{Application to polar active flows} \label{sec:app}

We now apply the Poisson-EQMOM to a 2D kinetic equation associated with the celebrated Vicsek model \cite{Vicsek95}. The evolution of $f=f(t,\bm x,\theta)$ is governed by \cite{degond2008}
\begin{equation} \label{eq:vic}
  \epsilon (\partial_t f + \bm e_{\theta} \cdot \nabla_{\bm x} f) = Q(f):= \mathtt d \partial_\theta^2 f + \nu \partial_\theta \left((\Omega \times \bm e_\theta)f \right).
\end{equation}
In the Vicsek dynamics, each polar active particle continuously adjusts its velocity direction to the neighboring mean velocity direction
\begin{equation} \label{eq:Omega}
  \Omega = \Omega[f](t,\bm x) = \frac{j[f](t,\bm x)}{|j[f](t,\bm x)|}\in S^1 \quad
  \text{with }
  j[f](t,\bm x) = \int_{-\pi}^{\pi} \bm e_{\theta} f(\theta) d\theta.
\end{equation}
Writing $\Omega=(\Omega_x,\Omega_y)=(\cos\bar\theta,\sin\bar\theta)$, we see that the cross product
\[
  \Omega \times \bm e_\theta = \Omega_x \sin \theta - \Omega_y \cos \theta = \sin(\theta-\bar\theta)
\]
quantifies the angle between the single particle velocity $\bm e_\theta$ and the neighboring mean direction $\Omega$. $\nu>0$ is the constant intensity of velocity alignment.
But the alignment is not perfect. The competing mechanism is the Gaussian (or uniform) noise on $[-\pi,\pi]$, with $\mathtt d>0$ characterizing the noise strength.
The whole system can be rescaled and produces a scaling parameter $\epsilon$ as the ratio between micro and macro variables. The hydrodynamic limit is derived with $\epsilon \to 0$.

As a remarkable feature of Eq.(\ref{eq:vic}), the collision operator $Q(f)$ can be rewritten as \cite{degond2008}
\[
  Q(f) = \frac{\mathtt d}{\nu} \partial_\theta \left( M_{\bar\theta} \partial_\theta \left( \frac{f}{M_{\bar\theta}} \right) \right)
\]
with $M_{\bar\theta}=M_{\bar\theta}(\theta)$ being the von Mises distribution:
\begin{equation} \label{eq:vmeq}
  M_{\bar\theta}(\theta) = C\exp \left( \frac{\nu \cos (\theta-\bar\theta)}{\mathtt d} \right).
\end{equation}
It holds that
\[
  H(f):=\int_{-\pi}^\pi Q(f) \frac{f}{M_{\bar\theta}} d\theta = -\int_{-\pi}^\pi M_{\bar\theta} \partial_\theta^2 \left( \frac{f}{M_{\bar\theta}} \right) d\theta \le 0.
\]
Therefore, the equilibrium that vanishes $Q(f)$ can only be von Mises distributions.

The moment system derived from the $N$-node Poisson-EQMOM has the form of Eq.(\ref{eq:momsys}) (with $v_0=1$ and $\epsilon$ included).
The source term  $S(\bm M_N)=(q_0,q_1,\dots,q_N)$ can now be determined by a direct calculation:
\[
  q_k:=\int_{-\pi}^\pi e^{ik\theta} Q(f) d\theta = \mathtt d \int_{-\pi}^{\pi} e^{ik\theta} \partial_\theta^2 f d\theta
  + \nu \int_{-\pi}^{\pi} e^{ik\theta} \partial_\theta \left((\Omega \times \bm e_\theta)f \right) d\theta.
\]
Using the integration by parts gives
\[
  \int_{-\pi}^{\pi} e^{ik\theta} \partial_\theta^2 f d\theta = (ik)^2 \int_{-\pi}^{\pi} f e^{ik\theta} d\theta = -k^2 m_k
\]
and
\[
\begin{aligned}
  \int_{-\pi}^{\pi} e^{ik\theta} \partial_\theta \left((\Omega \times \bm e_\theta)f \right) d\theta &= -ik \int_{-\pi}^{\pi} e^{ik\theta} (\Omega_x \sin \theta - \Omega_y \cos \theta) f d\theta \\
  &= -ik \left[ \frac{\Omega_x}{2i}(m_{k+1}-m_{k-1}) - \frac{\Omega_y}{2}(m_{k+1}+m_{k-1}) \right] \\
  &= \frac{k}{2} \left[ (\Omega_x+i\Omega_y)m_{k-1} - (\Omega_x-i\Omega_y)m_{k+1} \right] \\
  &= \frac{k}{2|m_1|} (m_1 m_{k-1} - \overline{m_1} m_{k+1}).
\end{aligned}
\]
Here we have used the identity $\Omega_x+i\Omega_y=m_1/|m_1|$, which can be easily seen from Eq.(\ref{eq:Omega}).
Therefore, the source term $S(\bm M_N)$ is obtained with
\begin{equation}
  q_k = -\mathtt d k^2 m_k + \mathbf 1_{|m_1|>0} \cdot \frac{\nu k}{2|m_1|}(m_1 m_{k-1} - \overline{m_1} m_{k+1})
\end{equation}
for $k=0,\dots,N$. Clearly, no other term except for $m_{N+1}$ needs to be closed, and $q_0=0$ indicates that the system respects mass conservation.

\subsection{Hyperbolicity with $N=1$}
We briefly discuss the hyperbolicity of the resultant moment system as first-order PDEs.
For the single-node system, the unknown is $\bm c = (c_0,c_1,s_1)^T$, and the closure is performed such that $\rho_1=m_0$, $r=|m_1|/m_0$, $e^{i\phi_1}=m_1/|m_1|$ and hence
\[
  m_2 = r^2\rho_1 e^{2i\phi_1} = \frac{m_1^2}{m_0} \quad \Leftrightarrow \quad
  c_2 = \frac{c_1^2-s_1^2}{c_0}, \
  s_2 = \frac{2c_1s_1}{c_0}.
\]
The resultant moment system is explicitly presented as
\begin{equation} \label{eq:momN1}
  \partial \bm c + A_x \partial_x \bm c + A_y \partial_y \bm c = \frac{1}{\epsilon}(0,\Re(q_1), \Im(q_1))^T
\end{equation}
with
\[
  A_x =
  \begin{bmatrix}
    0 & 1 & 0 \\
    \frac{1}{2}-\frac{c_1^2-s_1^2}{2c_0^2} & \frac{c_1}{c_0} & -\frac{s_1}{c_0} \\
    -\frac{c_1s_1}{c_0^2} & \frac{s_1}{c_0} & \frac{c_1}{c_0}
  \end{bmatrix},
  \quad A_y =
  \begin{bmatrix}
    0 & 0 & 1\\
    -\frac{c_1s_1}{c_0^2} & \frac{s_1}{c_0} & \frac{c_1}{c_0} \\
    \frac{1}{2} + \frac{c_1^2-s_1^2}{2c_0^2} & -\frac{c_1}{c_0} & \frac{s_1}{c_0}
  \end{bmatrix}
\]
and
\[
\begin{aligned}
  \Re(q_1) &= -\mathtt d c_1 + \mathbf 1_{|m_1|>0} \cdot \frac{\nu}{2|m_1|} (c_0c_1 - c_1c_2 - s_1s_2), \\
  \Im(q_1) &= -\mathtt d s_1 + \mathbf 1_{|m_1|>0} \cdot \frac{\nu}{2|m_1|} (c_0s_1 - c_1s_2 + s_1c_2)
\end{aligned}
\]
being the real and imaginary parts of $q_1$, respectively.
Our result is stated as follows.
\begin{proposition}
  The moment system (\ref{eq:momN1}) is strictly hyperbolic if $0 \le r < 1$.
\end{proposition}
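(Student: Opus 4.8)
The plan is to show that for any real unit vector $\bm n=(n_x,n_y)$ the matrix $A(\bm n):=n_x A_x + n_y A_y$ has three distinct real eigenvalues. Because the two matrices depend on $(c_0,c_1,s_1)$ only through $c_0>0$ and the combination $r e^{i\phi_1}=(c_1+is_1)/c_0$, and because the system is equivariant under rotations in the $(x,y)$-plane (a rotation by angle $\psi$ in physical space corresponds to $\phi_1\mapsto\phi_1+\psi$ together with the induced rotation on $\bm c$), I would first reduce to a canonical configuration. Concretely, rotate so that $\bm n=(1,0)$, i.e. it suffices to analyze $A_x$ alone; and simultaneously use the rotational freedom to set $\phi_1=0$, so that $s_1=0$ and $c_1=r c_0$ with $0\le r<1$. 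The claim is invariant under these reductions, so strict hyperbolicity for all directions follows from strict hyperbolicity of the single reduced matrix.

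With $s_1=0$, $c_1=rc_0$ the matrix $A_x$ becomes block structured: the $s_1$-row/column decouples, giving one eigenvalue directly and a $2\times2$ block for the $(c_0,c_1)$ variables. Specifically $A_x$ reduces to
\[
A_x\big|_{s_1=0} =
\begin{bmatrix}
0 & 1 & 0\\
\tfrac12-\tfrac{r^2}{2} & r & 0\\
0 & 0 & r
\end{bmatrix},
\]
so one eigenvalue is $\lambda_3=r$, and the remaining two are the roots of $\lambda^2 - r\lambda - \tfrac12(1-r^2)=0$, namely $\lambda_{1,2}=\tfrac{r}{2}\pm\tfrac12\sqrt{r^2+2(1-r^2)}=\tfrac{r}{2}\pm\tfrac12\sqrt{2-r^2}$. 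These are real and distinct for all $r$ (the discriminant $2-r^2$ is positive for $|r|<1$, indeed for $|r|<\sqrt2$), so it remains only to check $\lambda_1,\lambda_2\neq\lambda_3=r$; substituting $\lambda=r$ into the quadratic gives $r^2-r^2-\tfrac12(1-r^2)=-\tfrac12(1-r^2)\neq0$ when $r<1$. Hence the three eigenvalues are pairwise distinct and real, and $A_x$ is diagonalizable with a real spectrum; by the rotational reduction the same holds for every $A(\bm n)$, which is precisely strict hyperbolicity.

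The one step that needs to be spelled out carefully, rather than waved through, is the rotational equivariance used in the reduction: one must verify that the map $(c_0,c_1,s_1)\mapsto(c_0,\ c_1\cos\psi - s_1\sin\psi,\ c_1\sin\psi+s_1\cos\psi)$ together with the spatial rotation of $(\partial_x,\partial_y)$ conjugates the system $(\ref{eq:momN1})$ to itself — equivalently that $R_\psi A(\bm n) R_\psi^{-1}=A(R_\psi\bm n)$ for the appropriate $3\times3$ representation $R_\psi$. This is a direct computation using the explicit $A_x,A_y$ above and the closure $c_2=(c_1^2-s_1^2)/c_0$, $s_2=2c_1s_1/c_0$ (which is itself rotation-covariant since $m_2=m_1^2/m_0$). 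I expect this bookkeeping to be the main obstacle, but it is routine; an alternative that avoids it entirely is to compute the characteristic polynomial of the general $A(\bm n)$ directly and observe that, because the entries are built from the rotation-invariant quantities $c_0$ and $|m_1|^2=c_1^2+s_1^2$ (and the direction of $\bm n$ relative to $m_1$), the polynomial depends only on $r=|m_1|/c_0$ and a single angle, reducing to the quadratic-times-linear factorization above. Either route concludes the proof.
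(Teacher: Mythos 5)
There is a genuine gap in the reduction step. The symmetry you invoke is a \emph{single} one-parameter rotation acting simultaneously on physical space and on the velocity angle (this is the only rotation that leaves $\bm e_\theta\cdot\nabla_{\bm x}$ invariant, and it is the one under which the closure $m_2=m_1^2/m_0$ is covariant). Such a rotation by $\psi$ sends the pair $(\bm n,\phi_1)$ to $(R_\psi\bm n,\phi_1+\psi)$, so the relative angle $\delta=\phi_1-\arg\bm n$ is an invariant: you cannot normalize $\bm n=(1,0)$ \emph{and} $\phi_1=0$ with the same rotation unless $\delta=0$ to begin with. This is not a bookkeeping issue — the eigenvalues genuinely depend on $\delta$: for general $(a,b)$ the characteristic polynomial of $aA_x+bA_y$ is
\begin{equation*}
p(\lambda)=\Bigl(\lambda-u\Bigr)\Bigl(\lambda^2-u\lambda-\tfrac{a^2+b^2}{2}(1-r^2)\Bigr),\qquad u=\frac{ac_1+bs_1}{c_0}=r\sqrt{a^2+b^2}\,\cos\delta ,
\end{equation*}
so the ``transport'' eigenvalue is $u=r\cos\delta$ (for $|\bm n|=1$), not $r$. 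Your reduced matrix $A_x|_{s_1=0,\,c_1=rc_0}$ is therefore only the aligned case $\delta=0$, and proving distinctness of its spectrum does not establish strict hyperbolicity for all directions; indeed no similarity transform can relate the $\delta=0$ matrix to a $\delta\neq0$ one, since their spectra differ.

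The repair is exactly your ``alternative route,'' which is the paper's proof: compute $p(\lambda)$ for general $(a,b)$ and $(c_1,s_1)$ as above, observe that the quadratic factor has discriminant $u^2+2(a^2+b^2)(1-r^2)>0$, and that $\lambda=u$ is not a root of it because substituting gives $-\tfrac{a^2+b^2}{2}(1-r^2)\neq0$ when $r<1$. These are the same two checks you perform in the special case, but carried out with $u\in[-r\sqrt{a^2+b^2},\,r\sqrt{a^2+b^2}]$ arbitrary rather than $u=r$; note also that, contrary to the last sentence of your sketch, the general polynomial does \emph{not} reduce to your aligned factorization — the linear root moves with $\delta$ — it merely has the same linear-times-quadratic structure, which is enough. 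Your verification that the eigenvalues of the $\delta=0$ matrix are distinct is correct as far as it goes, and your observation that $r=1$ makes $u$ collide with a quadratic root is consistent with the paper's remark that the system loses hyperbolicity at $r=1$.
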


\begin{proof}
  For any $a,b\in \mathbb R$ with $a^2+b^2>0$, the characteristic polynomial $p(\lambda)$ of $aA_x+bA_y$ can be derived through a straightforward calculation as
  \[
    p(\lambda) = \left( \lambda - \frac{ac_1+bs_1}{c_0} \right) \left( \lambda^2 - \frac{ac_1+bs_1}{c_0}\lambda - \frac{a^2+b^2}{2}(1-r^2) \right),
  \]
  where we have used the relation $r^2=(c_1^2+s_1^2)/c_0^2$.
  Since $(a^2+b^2)(1-r^2)>0$, it is not difficult to verify that $p(\lambda)$ has three distinct real roots.
\end{proof}

  If $r=1$, $f$ has a single support $f(\theta)=c_0\delta(\theta-\theta_0)$ (see the proof of Theorem \ref{thm:mainexst}), implying $c_1=c_0\cos\theta_0$ and $s_1=c_0\sin\theta_0$.
  In this case, the system (\ref{eq:momN1}) is not hyperbolic.

  To see this, just take $a=1$, $b=0$ and notice from the expression of $p(\lambda)$ that $c_1/c_0=\cos\theta_0$ is an eigenvalue of $A_x$ with the algebraic multiplicity 2.
  Then the corresponding eigenvector $u=(u_1,u_2,u_3) \in \mathbb R^3$ can be solved from $A_x u = u\cos\theta_0$, yielding $u_2=u_1\cos\theta_0$ and $u_3=u_1\sin\theta_0$ (if $\sin\theta_0\ne 0$).
  This shows that the eigenvalue $\cos\theta_0$ has the geometric multiplicity 1, and hence $A_x$ is not diagonalizable if $\sin\theta_0\ne 0$.
  But the case $\sin\theta_0=0$ (that is, $s_1=0$ and $c_1=c_0$) simply renders $A_y$ not diagonalizable.

\begin{remark}
  The non-hyperbolicity with the use of delta functions for moment closure, a method sometimes termed the quadrature method of moments (QMOM), has also been observed in the treatment of Boltzmann-like equations \cite{Chalons2012,Huang2020}.
  The hyperbolicity for $N>1$ remains unclear for the Poisson-EQMOM system, though it is strongly believed to be true since our numerical results with multiple nodes never exhibit any non-hyperbolic shocks (see Section \ref{sec:num}). An analytic proof is left for future work.
\end{remark}

\section{Numerical scheme} \label{sec:sch}
The numerical scheme for the moment closure system involves a splitting method which solves separately
\begin{align}
  \partial_t m_k + \frac{v_0}{2}\partial_x(m_{k+1}+m_{k-1})
  + \frac{v_0}{2i}\partial_y (m_{k+1}-m_{k-1}) &= 0, \label{eq:trans} \\
  \partial_t m_k &= \frac{1}{\epsilon} q_k(\bm M_N) \label{eq:collision}
\end{align}
for $k=0,\dots,N$. The splitting is helpful to handle the small relaxation times (see, e.g., \cite{GAMBA2015}). While the scheme proposed in this section is believed to be asymptotically stable, a rigorous analysis is beyond the scope of this paper and is thus left for future investigations.

\subsection{Transport and kinetic-based flux}
Eq.(\ref{eq:trans}) is treated with the first-order schemes (to simplify the notation, only the spatial 1D case is illustrated here):
\begin{equation}
  \frac{m_{k,j}^*-m_{k,j}^n}{\Delta t} + \frac{v_0}{2} \frac{\mathcal F^n_{k+1,j+\frac{1}{2}} - \mathcal F^n_{k+1,j-\frac{1}{2}}}{\Delta x}
  + \frac{v_0}{2} \frac{\mathcal F^n_{k-1,j+\frac{1}{2}} - \mathcal F^n_{k-1,j-\frac{1}{2}}}{\Delta x} = 0,
\end{equation}
in which the numerical `kinetic-based' fluxes are evaluated based on the approximation $f_N$ \cite{deshpande1986,MarFox2013}. For instance, we have
\[
  \mathcal F_{k,j+\frac{1}{2}} = \int_{(-\frac{\pi}{2},\frac{\pi}{2})} e^{ik\theta} f_j(\theta) d\theta
  + \int_{(-\pi,-\frac{\pi}{2}) \cup (\frac{\pi}{2},\pi)} e^{ik\theta} f_{j+1}(\theta) d\theta.
\]
Substituting the ansatz Eq.(\ref{eq:peans}) or (\ref{eq:peanslift}) for $f(\theta)$ to the above integrands, we see that only the integrals of the form $\int e^{ik\theta} P_r(\phi-\theta)d\theta$ and $\ell \int e^{ik\theta}d\theta$ need to be computed.
For the latter type, it is easy to derive
\[
  \ell \int_{-\frac{\pi}{2}}^{\frac{\pi}{2}} e^{ik\theta}d\theta
  = \frac{\ell}{ik} \left( e^{ik\frac{\pi}{2}} - e^{-ik\frac{\pi}{2}} \right)
  = \frac{2\ell}{k} \sin \left( \frac{k\pi}{2} \right).
\]

In what follows we deduce an analytical expression for
\[
  I_k:=\int_{-\frac{\pi}{2}}^{\frac{\pi}{2}} e^{ik\theta} P_r(\phi-\theta)d\theta,
\]
whereas the integration on other types of intervals can be handled similarly.

With the Fourier series of the Poisson kernel $P_r(\phi-\theta) = \frac{1}{2\pi} \sum\limits_{m\in\mathbb Z} r^{|m|}e^{-im(\phi-\theta)}$, we obtain
\[
\begin{aligned}
  I_k &= \frac{1}{2\pi}\sum_{m\in\mathbb Z} r^{|m|}e^{-im\phi} \int_{-\frac{\pi}{2}}^{\frac{\pi}{2}} e^{i(k+m)\theta} d\theta \\
  &= \frac{r^{|k|}e^{ik\phi}}{2} + \frac{1}{2\pi}\sum_{m\ne-k} \frac{r^{|m|}e^{-im\phi}\left( e^{i(k+m)\frac{\pi}{2}} - e^{-i(k+m)\frac{\pi}{2}} \right)}{i(k+m)} \\
  &= \frac{r^{|k|}e^{ik\phi}}{2} + \frac{e^{ik\phi}}{2\pi i}\sum_{m\ne 0}\frac{r^{|m-k|}e^{-im\phi} \left( e^{im\frac{\pi}{2}} - e^{-im\frac{\pi}{2}} \right) }{m}.
\end{aligned}
\]
With $k>0$, it is clear that we need to compute summations of the form
\[
\begin{aligned}
  \sum_{m\ne 0} \frac{r^{|m-k|}z^m}{m}
  &= \sum_{m<0} \frac{r^{k-m}z^m}{m} + \sum_{m=1}^k \frac{r^{k-m}z^m}{m} + \sum_{m\ge k+1} \frac{r^{m-k}z^m}{m} \\
  &= \sum_{m<0} \frac{r^{k-m}z^m}{m} + \sum_{m=1}^k \frac{\left( r^{k-m}-r^{m-k} \right) z^m}{m} + \sum_{m\ge 1} \frac{r^{m-k}z^m}{m} \\
  &= r^k \ln(1-rz^{-1}) + \sum_{m=1}^k \frac{\left( r^{k-m}-r^{m-k} \right) z^m}{m} - r^{-k} \ln(1-rz).
\end{aligned}
\]
The last equality follows from the formal series for $|z|<1$:
\[
  G(z) = \sum_{m\ge 1} \frac{z^m}{m} = -\ln(1-z).
\]
To see this, just notice $G'(z) = \sum_{m\ge 0} z^m = (1-z)^{-1}$.

As a consequence, we further proceed $I_k$ as
\[
\begin{aligned}
  &\sum_{m\ne 0}\frac{r^{|m-k|}e^{-im\phi} \left( e^{im\frac{\pi}{2}} - e^{-im\frac{\pi}{2}} \right) }{m} \\
  = & -r^k \ln \frac{1-ire^{i\phi}}{1+ire^{i\phi}} + r^{-k} \ln \frac{1+ire^{-i\phi}}{1-ire^{-i\phi}} + \sum_{m=1}^k \left( i^m - (-i)^m \right)  \frac{r^{k-m}-r^{m-k}}{m} e^{-im\phi}.
\end{aligned}
\]

If $k<0$ (note that $m_{-1}$ is involved in the system), $I_k$ can be computed by the same technique, with the results presented below.
\[
\begin{aligned}
  &\sum_{m\ne 0}\frac{r^{|m-k|}e^{-im\phi} \left( e^{im\frac{\pi}{2}} - e^{-im\frac{\pi}{2}} \right) }{m} \\
  = & -r^k \ln \frac{1-ire^{i\phi}}{1+ire^{i\phi}} + r^{-k} \ln \frac{1+ire^{-i\phi}}{1-ire^{-i\phi}} + \sum_{m=1}^{-k} \left( i^m - (-i)^m \right)  \frac{r^{-k-m}-r^{m+k}}{m} e^{im\phi}.
\end{aligned}
\]

\subsection{Collision term} \label{subsec:col}
Eq.(\ref{eq:collision}) is treated with a first-order semi-implicit scheme, which reads as
\[
  \frac{\epsilon}{\Delta t} \left( m_k^{n+1} - m_k^* \right) = -\mathtt d k^2 m_k^{n+1} + \mathbf 1_{|m_1^*|>0} \cdot \frac{\nu k}{2|m_1^*|} \left(m_1^* m_{k-1}^{n+1} - \overline{m_1}^* m_{k+1}^{n+1} \right)
\]
for $k=0,1,\dots,N-1$, and
\[
  \frac{\epsilon}{\Delta t} \left( m_N^{n+1} - m_N^* \right) = -\mathtt d N^2 m_N^{n+1} + \mathbf 1_{|m_1^*|>0} \cdot \frac{\nu N}{2|m_1^*|} \left(m_1^* m_{N-1}^{n+1} - \overline{m_1}^* \hat m_{N+1}^n \right)
\]
for the last moment $m_N$.
The above equations are concisely collected as
\[
  \begin{bmatrix}
    \gamma & 0 &&& \\
    \ddots & \ddots & \ddots && \\
    & -\alpha k & \gamma + dk^2 & \bar\alpha k & \\
    && \ddots & \ddots & \ddots \\
    &&& -\alpha N & \gamma + dN^2
  \end{bmatrix}
  \begin{bmatrix}
    m_0^{n+1} \\ \vdots \\ m_k^{n+1} \\ \vdots \\ m_N^{n+1}
  \end{bmatrix}
  = \gamma
  \begin{bmatrix}
    m_0^* \\ \vdots \\ m_k^* \\ \vdots \\ m_N^*
  \end{bmatrix}
  +
  \begin{bmatrix}
    0 \\ \vdots \\ 0 \\ \vdots \\ -\bar\alpha N \hat m_{N+1}^n
  \end{bmatrix}
\]
with $\alpha=\frac{\nu m_1^*}{2|m_1^*|}$ and $\gamma=\frac{\epsilon}{\Delta t}$.

During the simulation, the time step $\Delta t$ is chosen to satisfy the CFL condition while ensuring that the above coefficient matrix is strictly diagonally dominant.

\subsection{Boundary conditions}
Three kinds of boundary conditions are used in this work, including the Neumann condition, the periodic boundary and the reflexive wall boundary.
For the last one, it is assumed that all particles rebound elastically with no flux into the wall. Let $x_b$ be the 1-D wall position and $x>x_b$ the physical domain.
Then at the kinetic level, the reflexive condition reads as
\[
  f(x_b,\theta) = f(x_b,\pi-\theta) \quad \text{for} \quad -\frac{\pi}{2}\le \theta \le \frac{\pi}{2}.
\]
In this way, the kinetic-based flux can be evaluated as
\[
  \int_{-\frac{\pi}{2}}^{\frac{\pi}{2}} f(\theta) e^{ik\theta}d\theta
  = -\int_{\frac{\pi}{2}}^{\frac{3\pi}{2}} f(\theta) e^{ik(\pi-\theta)}d\theta
  = - e^{ik\pi} \overline{\int_{\frac{\pi}{2}}^{\frac{3\pi}{2}} f(\theta) e^{-ik\theta} d\theta}.
\]

\section{Numerical results} \label{sec:num}
This section is devoted to a series of numerical tests including spatially homogeneous,1D and 2D cases. Comparison with the analytical solution, the macroscopic limit and the microscopic particle simulation is provided to show the validity of our Poisson-EQMOM.
In the subsequent subsections, we set $\nu=1$ and $\mathtt d=0.2$ in Eq.(\ref{eq:vic}) for all cases.
Overall, it is demonstrated that the Poisson-EQMOM produces reasonable results in all cases.

\subsection{Spatially homogeneous case}
Inspired by \cite{GAMBA2015}, we first validate our computational model by a spatially homogeneous case $\partial_t f = Q(f)$ to test the convergence of the numerical results towards the known equilibrium $M_{\bar\theta}(\theta)$ of $Q(f)$, a von Mises distribution in Eq.(\ref{eq:vmeq}).
For such case, the initial distribution is specified as
\[
    f_0(\theta) = (1 + \cos(4 \theta)) \cdot \exp\left( -\cos\left(\pi \left(\frac{\theta}{2\pi} + \left(\frac{\theta}{2\pi}\right)^ 4\right)\right)\right)
\]
for $\theta \in [-\pi,\pi)$
with its moments $m_k(t=0)$ approximated by
\[
  m_k(0) = \sum_{j=0}^{1023} f_0(-\pi+j\Delta \theta) e^{ikj\Delta\theta}\Delta\theta, \quad \Delta\theta=\frac{2\pi}{1024}.
\]

Using the schemes in subsection \ref{subsec:col} and a CFL number of 0.5, we simulate the time evolution of $f(\theta)$ with the Poisson-EQMOM.
It is worth mentioning that the use of moment methods preserves mass conservation naturally.
As plotted in Fig.~\ref{fig:0D}(A), after a sufficiently long duration time ($t=20$), the reconstructed distribution $f$ is in good agreement with the von Mises distribution peaked at $\bar\theta=1.3477\pi$.
Fig.~\ref{fig:0D}(B) quantifies the time-evolved $L^2$-errors $\|f(\theta)-M_{\bar\theta}(\theta)\|_{L^2(-\pi,\pi)}$ and Fig.~\ref{fig:0D}(C) shows the evolution of the relative error of moments $|\bm m[f] - \bm m[M_{\bar{\theta}}]|/|\bm m[M_{\bar{\theta}}]|$. Here we denote $\bm m[f] = (m_0,m_1,...,m_N)$ with $m_k = m_k[f] = \int e^{ik \theta}f(\theta) d\theta$ for $k=0,1,...,N$.
We see exponential convergence at the initial stage. With an increase of $N$ from 4 to 32, the two errors reduce substantially. Tabel~\ref{tab:0D_time_cost} further shows the runtime to solve this case with different values of $N$.

\begin{figure}[htbp]
  \centering
  \subfloat[$N=8$]{\includegraphics[width=0.33\linewidth]{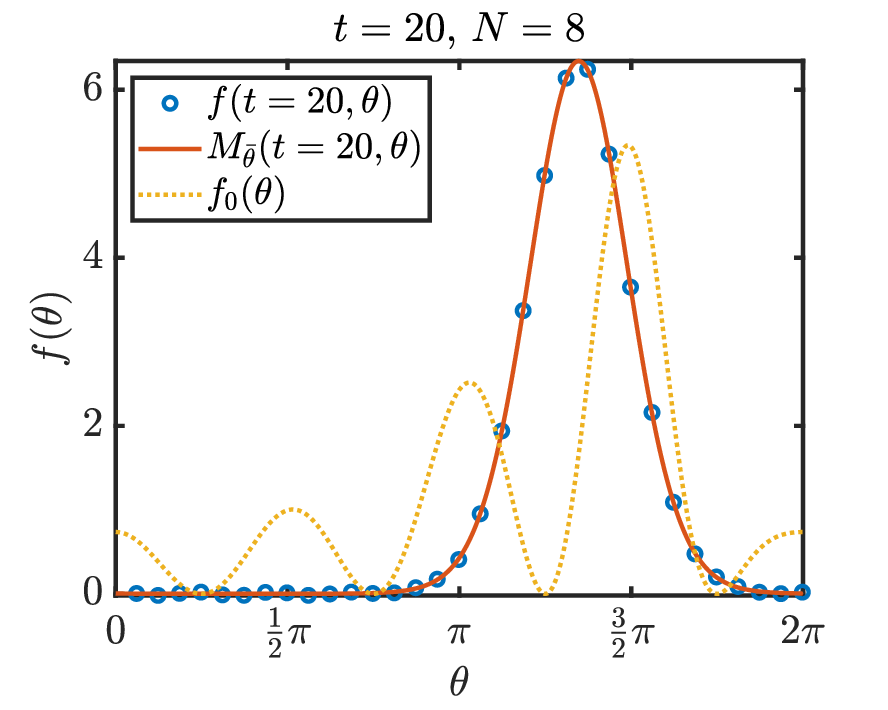}}
  \subfloat[$L^2$-errors]{\includegraphics[width=0.33\linewidth]{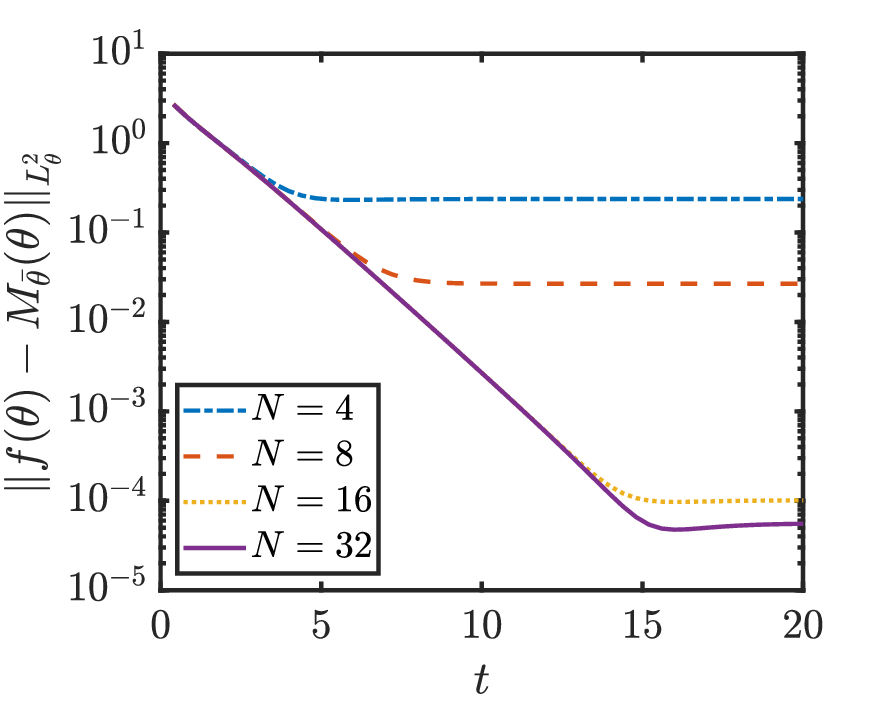}}
  \subfloat[Relative errors of moments]{\includegraphics[width=0.33\linewidth]{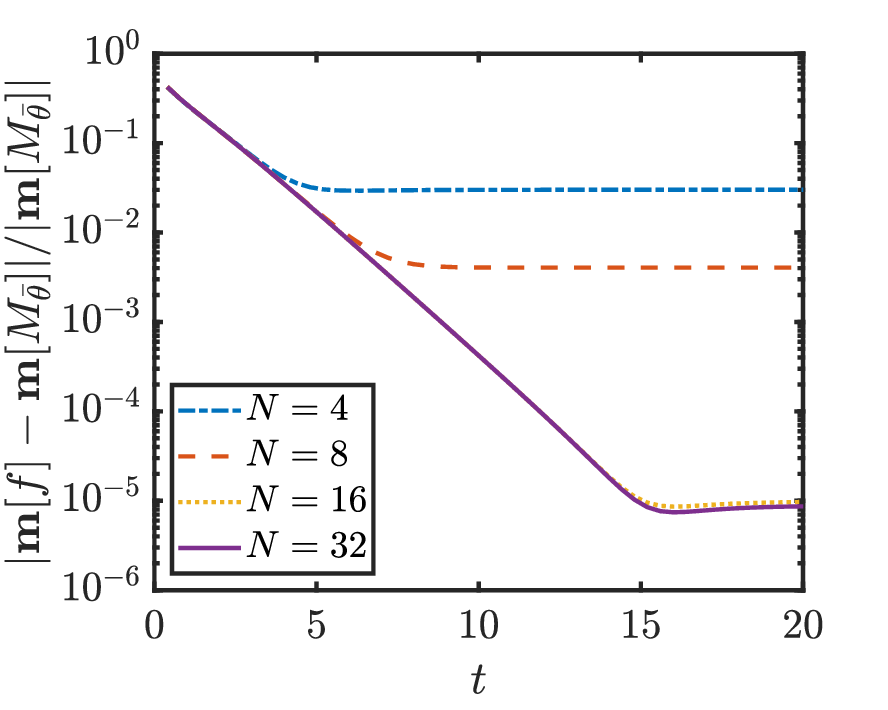}}
  \caption{The spatially homogeneous case.}
  \label{fig:0D}
\end{figure}

\begin{table}[htbp]
    \caption{The runtime to solve the spatially homogeneous case 100 times with different $N$. The results are obtained by the MATLAB code running on an Intel i7-1065G7 CPU.}
    \label{tab:0D_time_cost}
    \centering
    \begin{tabular}{c|cccc}
        \hline
        $N$ & 4 & 8 & 16 & 32 \\
        \hline
        Runtime (s) & 1.5151 & 2.1530 & 5.0664
& 21.1626 \\
        \hline
    \end{tabular}
\end{table}

\subsection{1D Riemann problems} \label{subsec:1DRie}
Our second case is the spatially one-dimensional Riemann problems with the initial conditions characterized by the density $\rho=m_0$ and the macroscopic velocity direction $\bar\theta = \arg m_1$.
As demonstrated in \cite{GAMBA2015}, the 1D problems provide ideal situations to test the performance of numerical methods in the hydrodynamic regime where the scaling factor $\epsilon$ in Eq.(\ref{eq:vic}) approaches zero.
In such a scenario, both the density and velocity direction obtained with the kinetic equation are expected to converge, as $\epsilon \to 0$, to those solved from the hydrodynamic model.

In the simulation, the computational domain $x\in [-5,5]$ is divided into 1000 uniform grid cells.
For the initial conditions, we set $(\rho_l,\bar\theta_l)$ for $x<0$ and  $(\rho_r,\bar\theta_r)$ for $x>0$.
Then the initial moments $m_k(0,x)$ are computed with the von Mises distribution determined by the macroscopic quantities. In all problems, we choose $N=12$ for the Poisson-EQMOM and the CFL number of 0.5. Simulations with $N=8$ and $16$ were also performed and the results (not shown) are quite close. The Neumann boundary condition is employed at both ends $x=\pm 5$.

Our initial data follow \cite{GAMBA2015}. The first case is given by
\begin{equation} \label{eq:1D1}
  (\rho_l,\bar\theta_l)=(2, 1.7), \quad
  (\rho_r,\bar\theta_r)=(0.218,0.5),
\end{equation}
in which a rarefaction wave forms in the hydrodynamic model.
The second case generates a shock wave and the initial conditions read as
\begin{equation} \label{eq:1D2}
  (\rho_l,\bar\theta_l)=(1, 1.5), \quad
  (\rho_r,\bar\theta_r)=(2,1.83).
\end{equation}
The third one is a contact discontinuity:
\begin{equation} \label{eq:1D3}
  (\rho_l,\bar\theta_l)=(1, 1), \quad
  (\rho_r,\bar\theta_r)=(1,-1).
\end{equation}

The simulated profiles of $\rho$ and $\bar\theta$ at $t=4$ are presented in Figs.~\ref{fig:1D1}-\ref{fig:1D3} for three values of $\epsilon$. Also shown in each plot is the solution of the hydrodynamic model extracted from \cite{GAMBA2015}.
It is clearly seen that with $\epsilon$ approaching 0, the Poisson-EQMOM predicted $\rho$ and $\bar\theta$ become quite close to the macroscopic solutions in all cases, well capturing the drastic change across the wave fronts. This serves as convincing evidence of the reliability of our computational model. Table.~\ref{tab:time_1D_case3} shows the runtime to complete the simulations for the third case with different values of $N$ and $\epsilon$.

\begin{figure}[htbp]
  \centering
  \subfloat
  {\includegraphics[height=4.5cm]{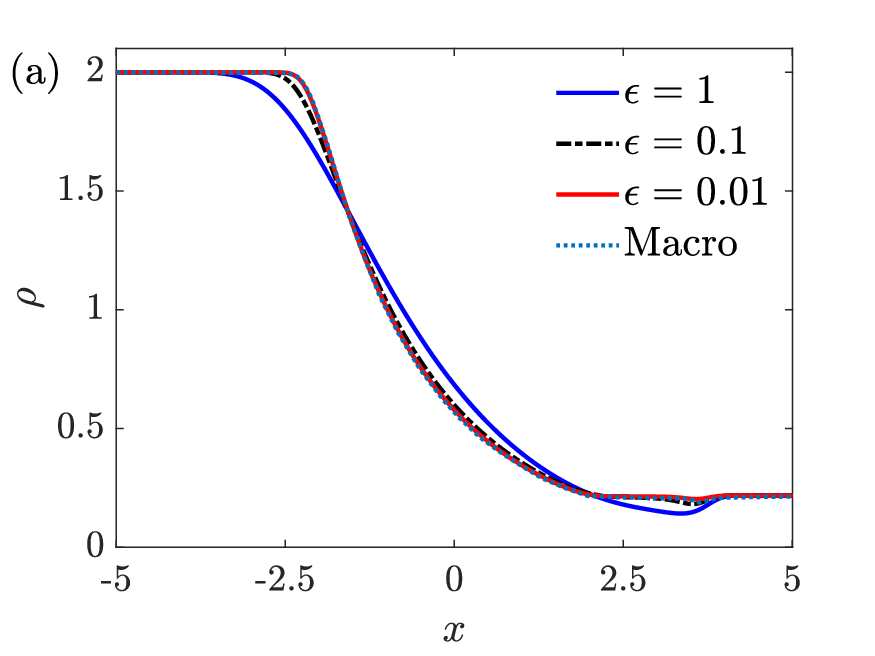}}
  \subfloat
  {\includegraphics[height=4.5cm]{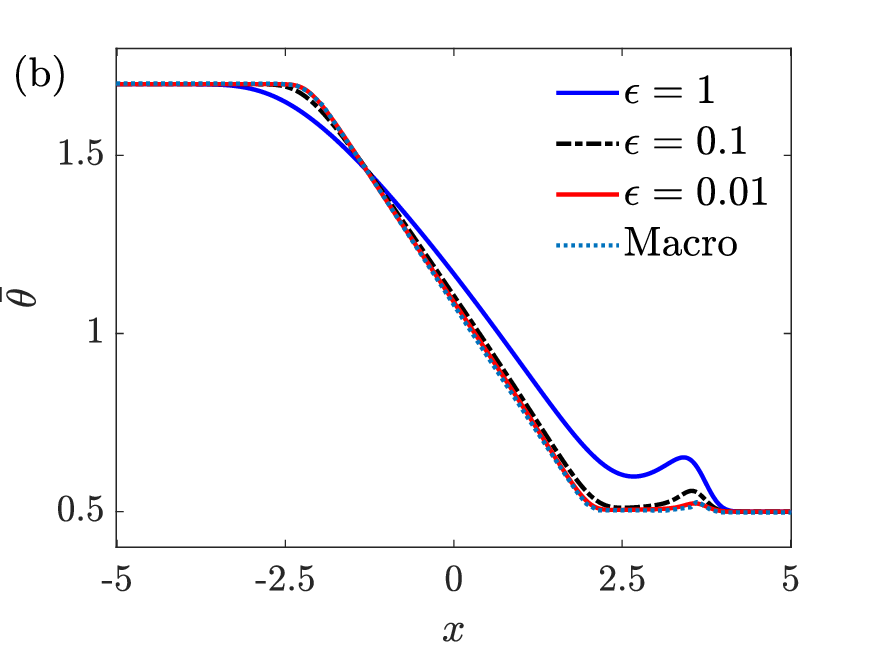}}
  \caption{The 1D Riemann problem (\ref{eq:1D1}) with a rarefaction wave. The profiles of density $\rho$ and velocity direction $\bar\theta$ at $t=4$. The `macro' profiles are extracted from \cite{GAMBA2015}.}
  \label{fig:1D1}
\end{figure}

\begin{figure}[htbp]
  \centering
  \subfloat
  {\includegraphics[height=4.5cm]{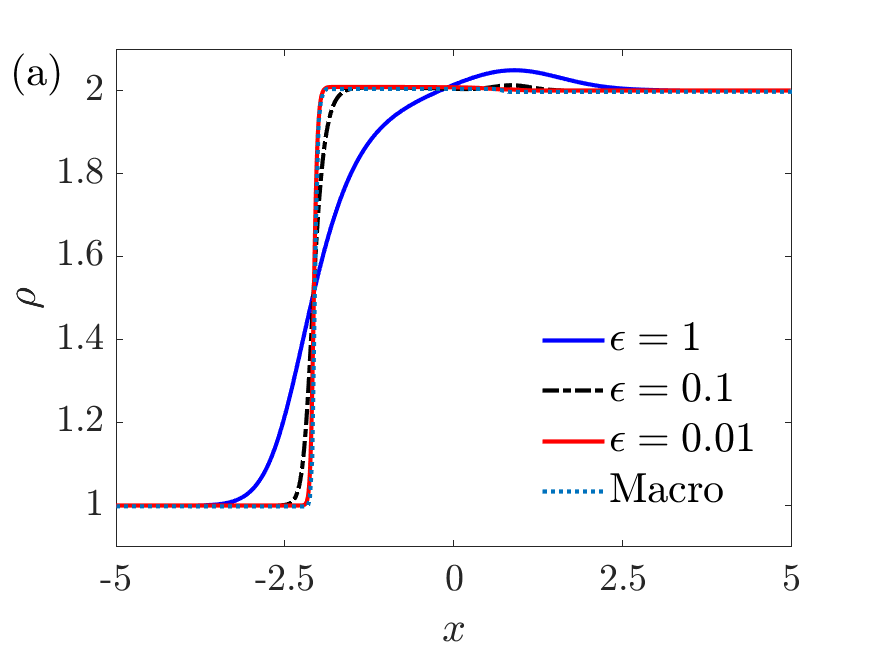}}
  \subfloat
  {\includegraphics[height=4.5cm]{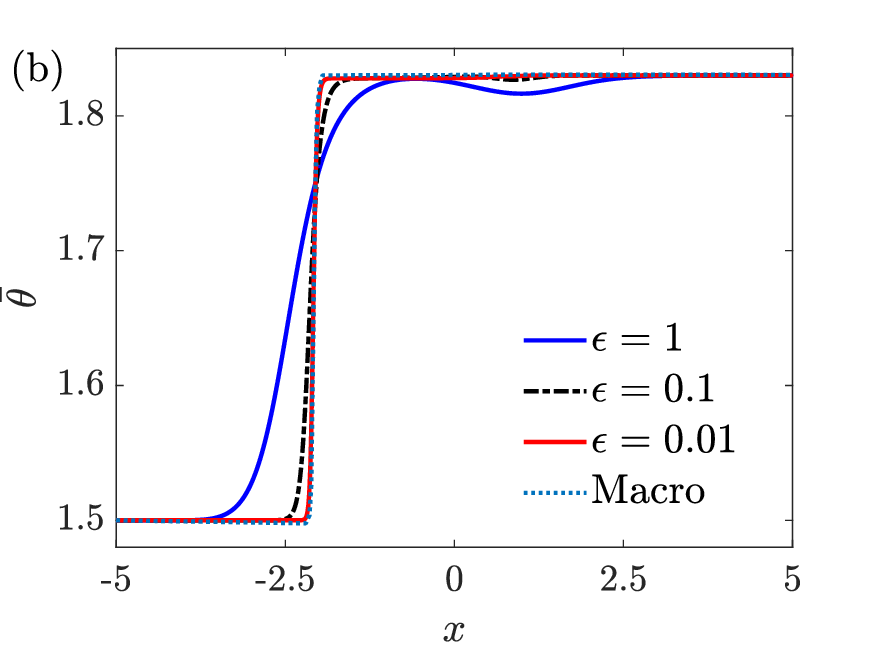}}
  \caption{The 1D Riemann problem (\ref{eq:1D2}) with a shock wave. The profiles of density $\rho$ and velocity direction $\bar\theta$ at $t=4$. The `macro' profiles are extracted from \cite{GAMBA2015}.}
  \label{fig:1D2}
\end{figure}

\begin{figure}[htbp]
  \centering
  \subfloat
  {\includegraphics[height=4.5cm]{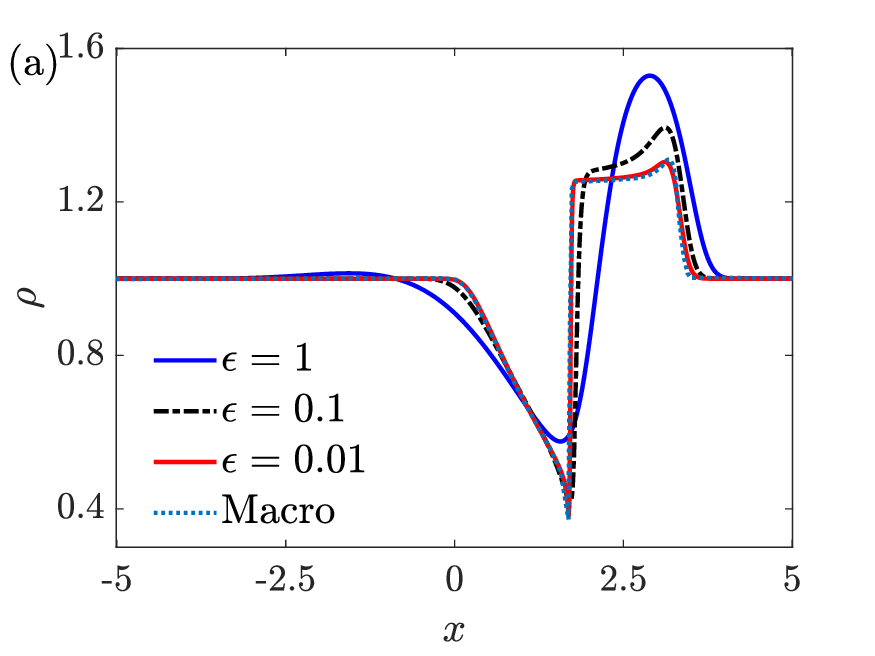}}
  \subfloat
  {\includegraphics[height=4.5cm]{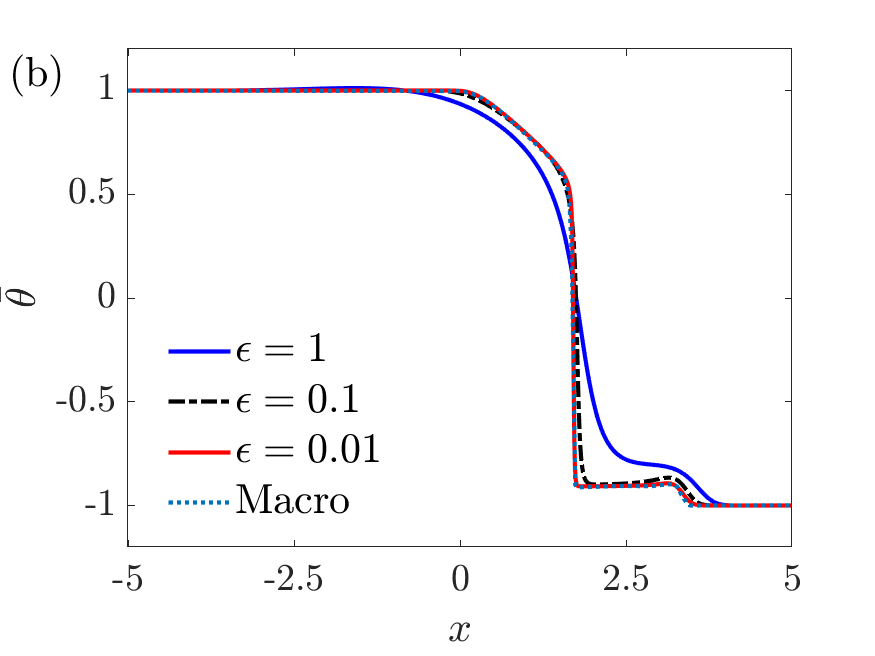}}
  \caption{The 1D Riemann problem (\ref{eq:1D3}) with a contact discontinuity. The profiles of density $\rho$ and velocity direction $\bar\theta$ at $t=4$. The `macro' profiles are extracted from \cite{GAMBA2015}.}
  \label{fig:1D3}
\end{figure}

\begin{table}[htbp]
    \caption{The runtime for the third case (\ref{eq:1D3}) in section \ref{subsec:1DRie} with different values of $N$ and $\epsilon$. The results are obtained by the MATLAB code running on an Intel i7-1065G7 CPU.}
    \label{tab:time_1D_case3}
    \centering
    \begin{tabular}{c|ccc}
         \hline
         Runtime (s) & $\epsilon=1$ & $\epsilon=0.1$ & $\epsilon=0.01$ \\
         \hline
         $N=8$ & 670.84 & 657.58 & 819.10 \\
         $N=12$ & 1384.20 & 1306.16 & 1584.60 \\
         $N=16$ & 2209.37 & 2215.02 & 2735.97 \\
         \hline
    \end{tabular}
\end{table}

\subsection{Spatially 2D cases}
In this subsection, several spatially 2D simulations are performed on a physical domain $(x,y)\in[-5,5]\times [-5,5]$ employing different boundary conditions. The domain is divided into $50\times 50$ uniform grids, resulting in $\Delta x=\Delta y=0.2$.
We set $\epsilon=1$ in Eq.(\ref{eq:vic}) for comparisons with the microscopic simulation results.

The initial condition is generated with a constant density $m_{0,ij}(t=0)=1$, and the higher-order moments are determined by the von Mises distributions with the mean velocity direction $\bar\theta_{ij}$ uniformly distributed on $[-\pi,\pi]$ for all $1\le i,j\le 50$.
Our simulation marches in time to examine specific long-time patterns.
Throughout this subsection we choose $N=4, 8, 16$ for the Poisson-EQMOM and the CFL number of 0.25. The runtime ranges from several hours to days depending on the choice of $N$.

For a comparison, the microscopic Vicsek model is simulated on the same domain using 10 000 particles. A continuous version of the Vicsek model is applied here \cite{degond2008,GAMBA2015}. The particle speed is 1. The radius of interaction is 0.2 and the discrete time step is taken as 0.01. The intensity of noise $\mathtt d=0.2$. Similarly, at the initial state, the particles are uniformly distributed in space and velocity direction.
If the wall condition is present, each impacting particle reflects with its normal velocity component reversed.

\subsubsection{Periodic boundaries}
We first consider the case with four periodic boundary conditions. In this case, for any function $\phi=\phi(t,x,y)$, we have
\[
  \phi(t,x,y)=\phi(t,x+10p,y+10q)
\]
for any $p,q\in\mathbb Z$. Fig.~\ref{fig:2Dp} (left) shows the typical density and velocity distributions at $t=200$ predicted by the 16-node Poisson-EQMOM, whereas the right panel exhibits the coarse-grained results of the particle model averaged from $t=200$ to $t=300$ (to reduce the fluctuation caused by a finite number of particles).
Besides, the $L^2$ errors between the density (or velocity) and the spatially mean value are illustrated in Fig.~(\ref{fig:2D_periodic_errors}) for $N=4,8,16$. It is clearly seen that, under such a noise strength ($\mathtt d=0.2$), the long-time behavior is a global alignment state with nearly homogeneous density in space, while the velocity direction is random since the initial velocity direction is randomly distributed for both the Poisson-EQMOM and the particle simulation.
A density oscillation could be observed initially and wanes effectively over time. As indicated in Fig.~(\ref{fig:2D_periodic_errors}), the cases with $N=8$ and $N=16$ have similar accuracy while both converge to the alignment state faster than that with $N=4$.
Hence, the Poisson-EQMOM gives results consistent with the particle model, and $N=16$ seems to be sufficient for a quantitatively reasonable simulation at the kinetic-macroscopic scales.

\begin{figure}[htbp]
\centering
\subfloat
{\includegraphics[width=0.4\linewidth]{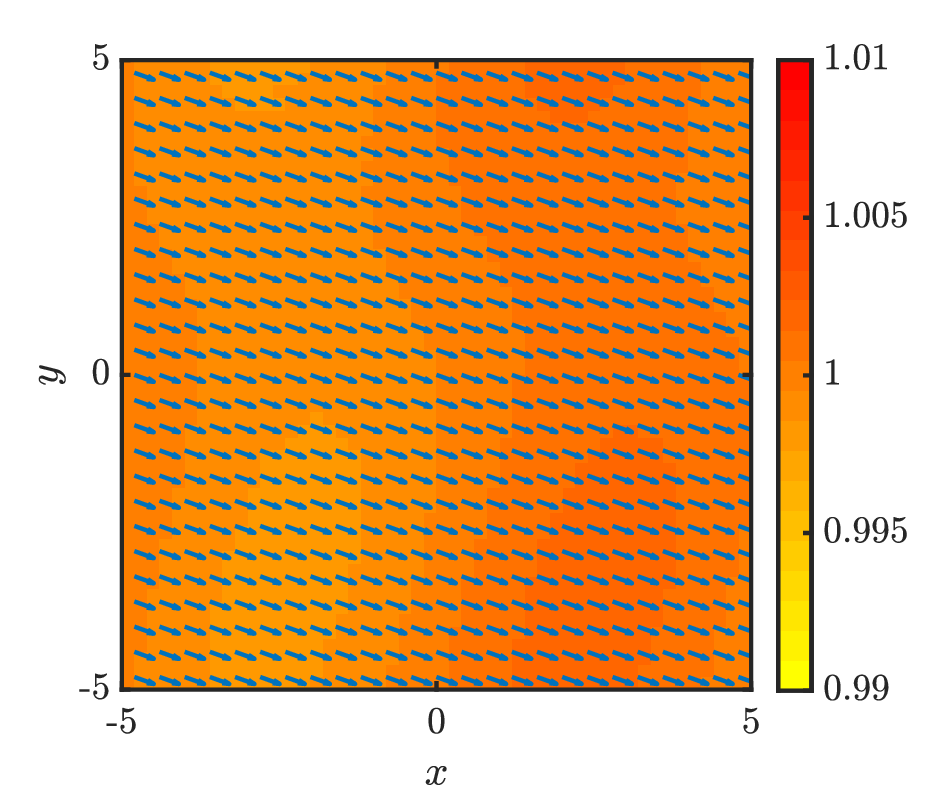}}
\subfloat
{\includegraphics[width=0.4\linewidth]{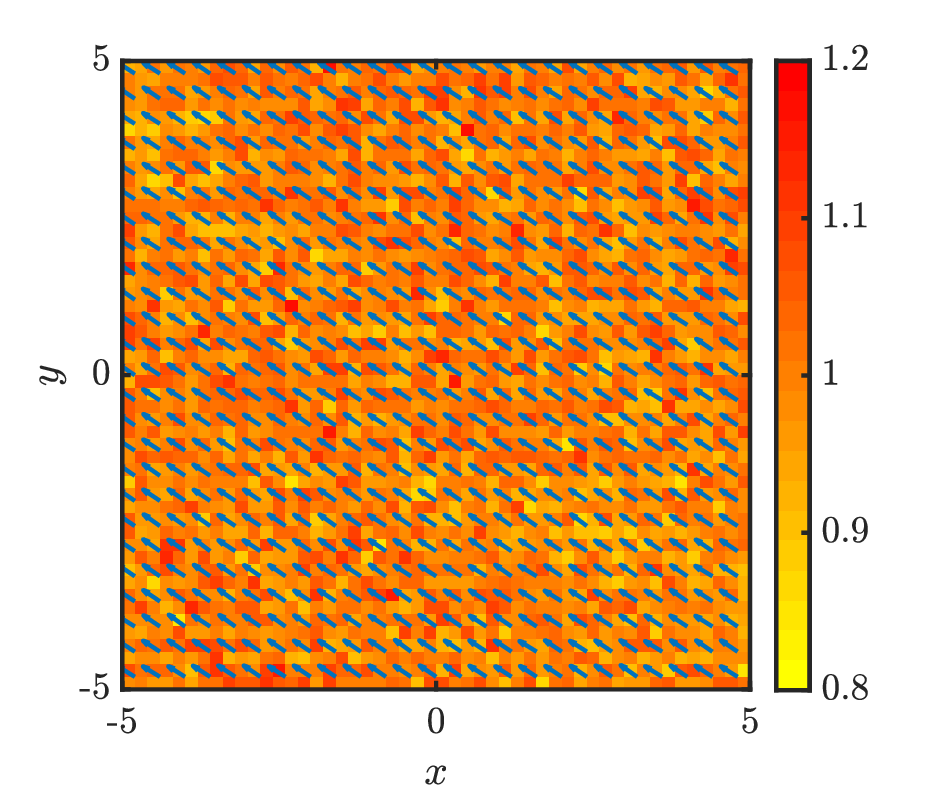}}
\caption{Density and velocity distributions at $t=200$ by the 16-node Poisson-EQMOM (Left) and the Vicsek model (Right) for the 2D case with periodic boundary conditions. The right image is obtained by taking average from $t=200$ to $t=300$ to reduce the fluctuation of the particle model.}
\label{fig:2Dp}
\end{figure}

\begin{figure}[htbp]
\centering
\subfloat
{\includegraphics[width=0.4\linewidth]{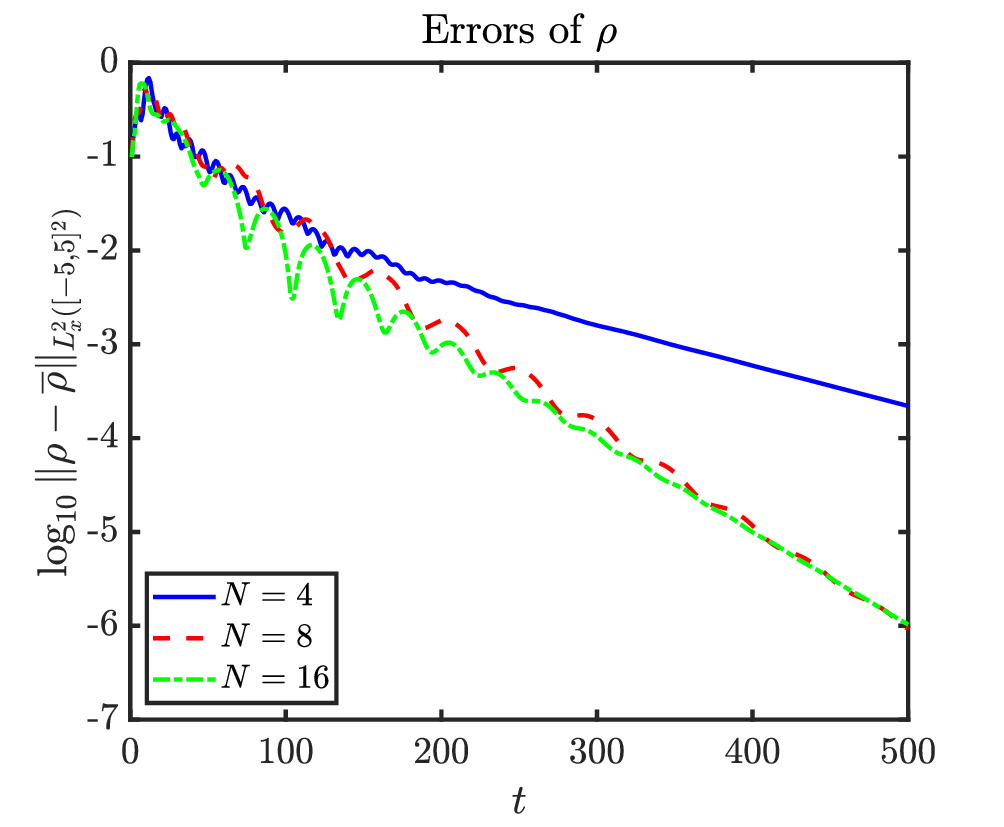}}
\subfloat
{\includegraphics[width=0.4\linewidth]{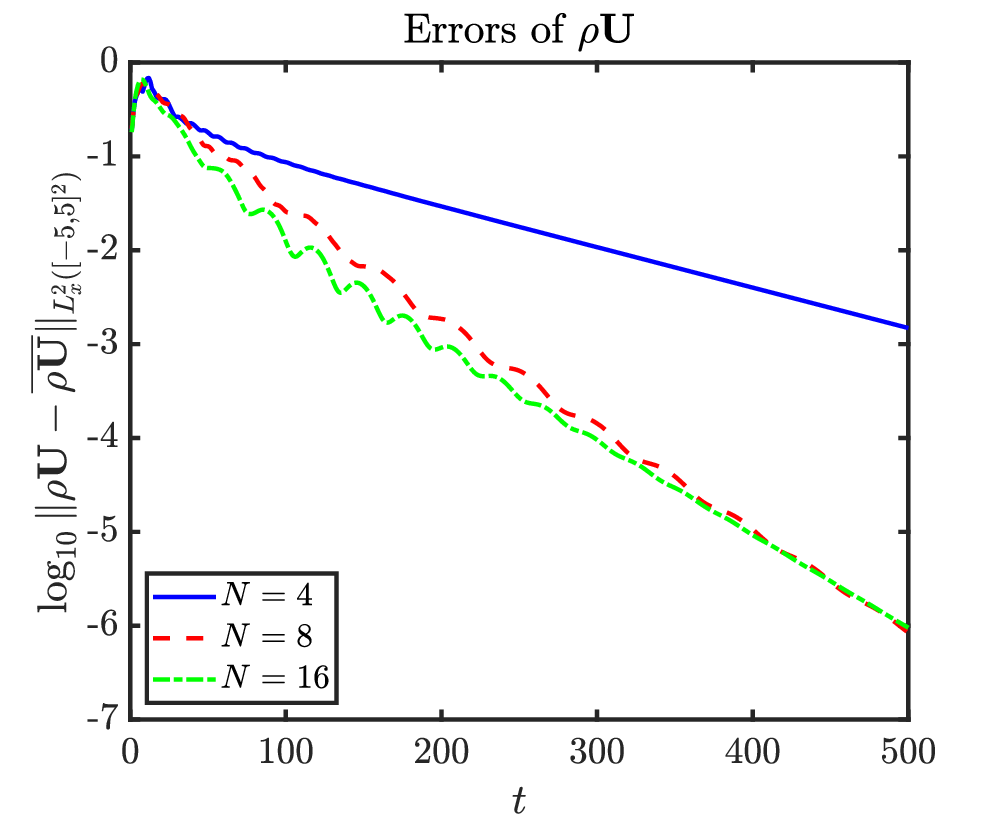}}
\caption{$L^2$ errors of density and velocity by the Poisson-EQMOM with different values of $N$ for the 2D case with periodic boundary conditions.}
\label{fig:2D_periodic_errors}
\end{figure}

\subsubsection{Two reflexive walls} \label{subsubsec:tworeflw}
Our second case sets the boundaries $x=\pm 5$ as reflexive walls, leaving $y=\pm 5$ still as periodic conditions. That is, $\phi(t,x,y)=\phi(t,x,y+10q)$ for any function $\phi$ and $q\in\mathbb Z$.
The time evolution of the density and velocity distributions is plotted in Fig.~\ref{fig:2D2} for different snapshots, showing a more complex pattern than those of the previous case with all periodic conditions.
After a transient period, the system is observed to have converged to a density-oscillating state between the opposite walls with a full cycle lasting about 40 unit time, as revealed at both the kinetic (Figs.~\ref{fig:2D2}a,c,e) and microscopic (Figs.~\ref{fig:2D2}b,d,f) levels. The snapshots in Fig.~\ref{fig:2D2} cover half of the cycle: the bulk density is moving to the left, generating a huge peak adjacent to the left wall in Figs.~\ref{fig:2D2}c\&d, and then goes back towards the right wall.
Fig.~\ref{fig:6_3_2_period_statistic} quantitatively illustrates the local density evolution at $\bm x = (-5, 5)$ predicted by different methods. Clearly, all results reveal `density waves' with nearly the same period and amplitude except for the differences caused by the random initial data.
The consistency of the micro- and meso-scales again justifies our use of the Poisson-EQMOM.

It is interesting to point out some features of the system.
In the near-wall region, the particles move almost along the wall because the normal velocity components are eliminated due to particle rebound and the alignment interaction. By contrast, particles migrate along the $x$-axis far away from the wall, transporting the `density wave front' back and forth at a mean `bulk' velocity about 0.5. The crowded area can be 5-7 times denser than the most sparse region (see Fig.~\ref{fig:6_3_2_period_statistic}).
However, it is postulated that this density inhomogeneity may be distinct from the `phase separation' state \cite{chate2020} which is believed to exist at larger noise strengths. The kinetic simulation implies that the low-density regions here are still aligned effectively and are thus not suitable to be regarded as a disordered state (with zero macroscopic speed). A thorough investigation is beyond our current scope and will be left for future work.

\begin{figure}[htbp]
  \centering
  \subfloat
  {\includegraphics[height=4.8cm]{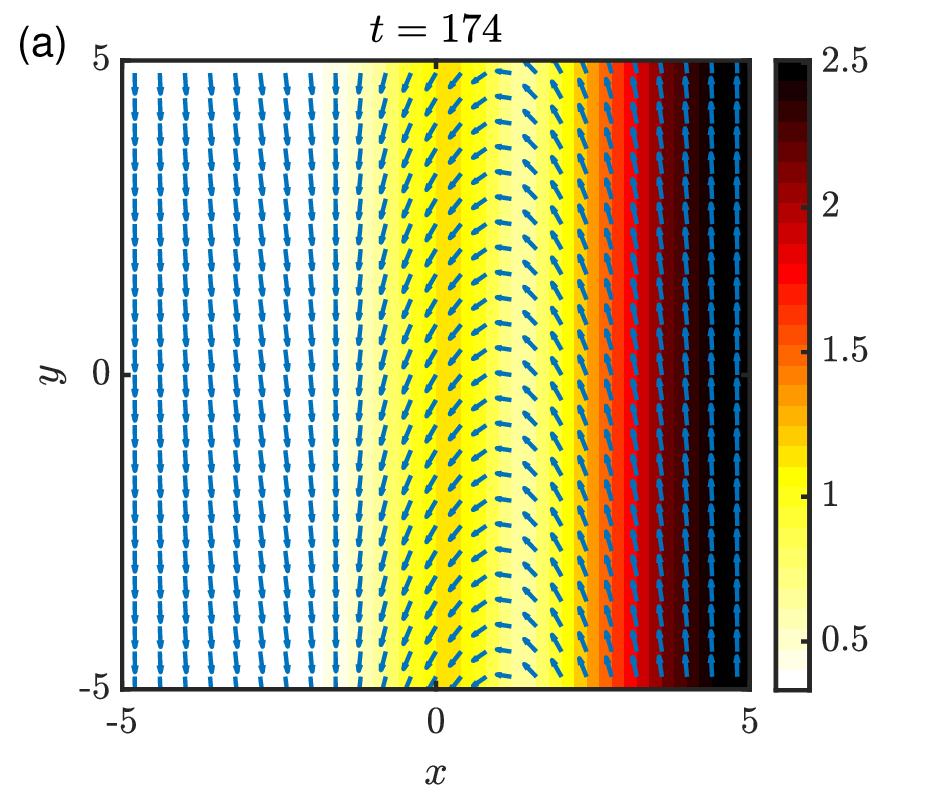}}
  \subfloat
  {\includegraphics[height=4.8cm]{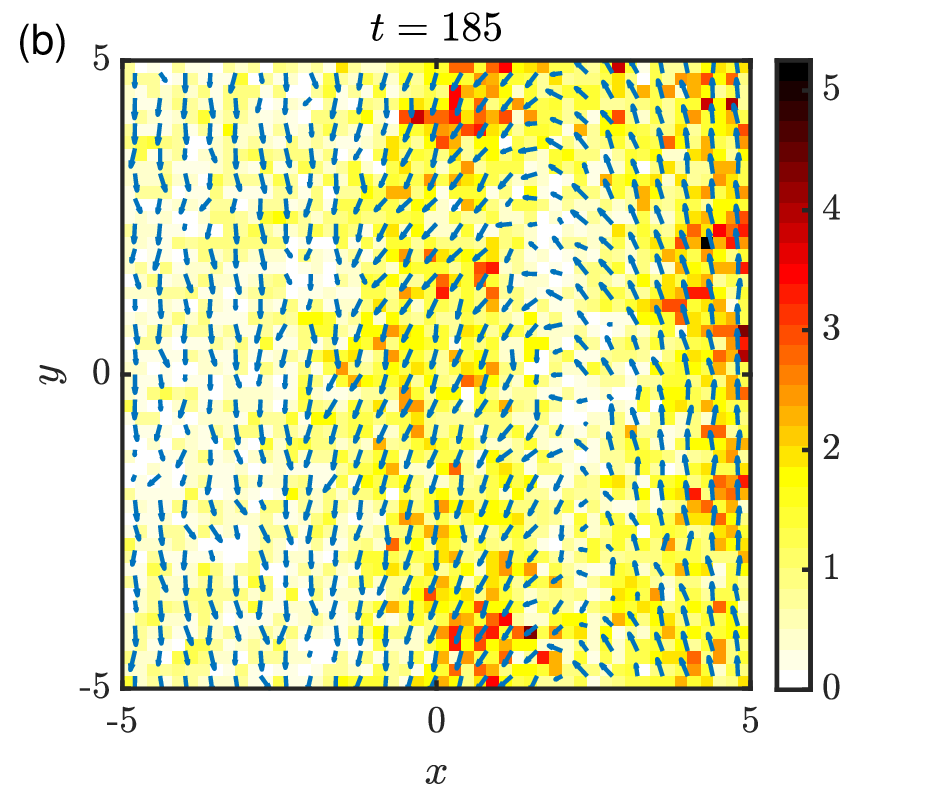}}
  \\
  \subfloat
  {\includegraphics[height=4.8cm]{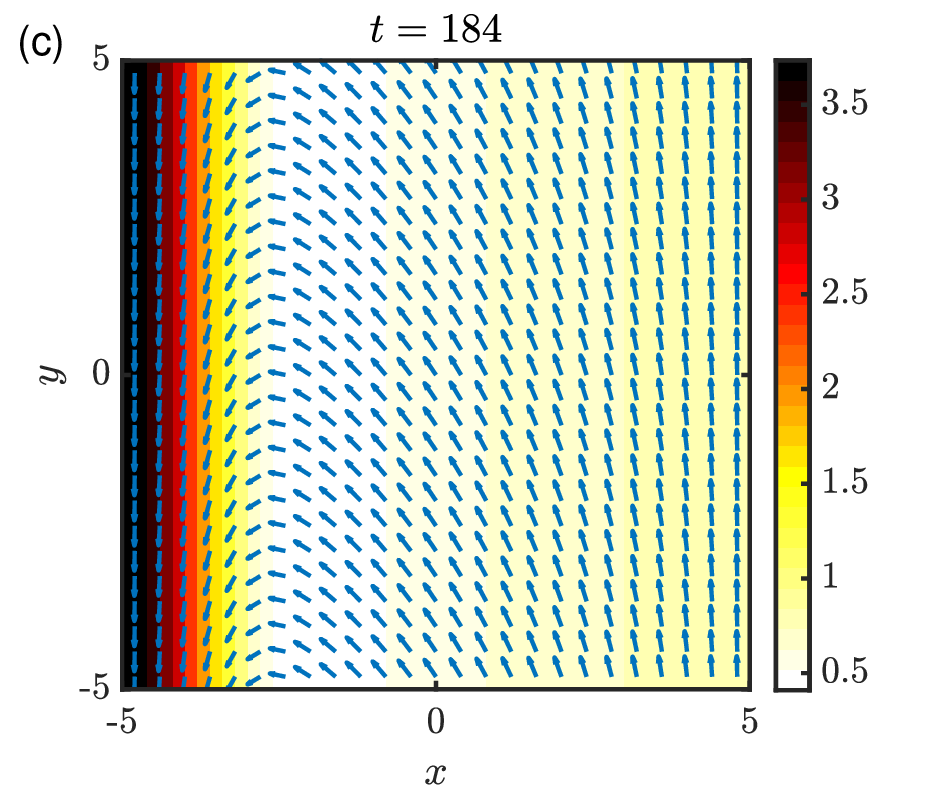}}
  \subfloat
  {\includegraphics[height=4.8cm]{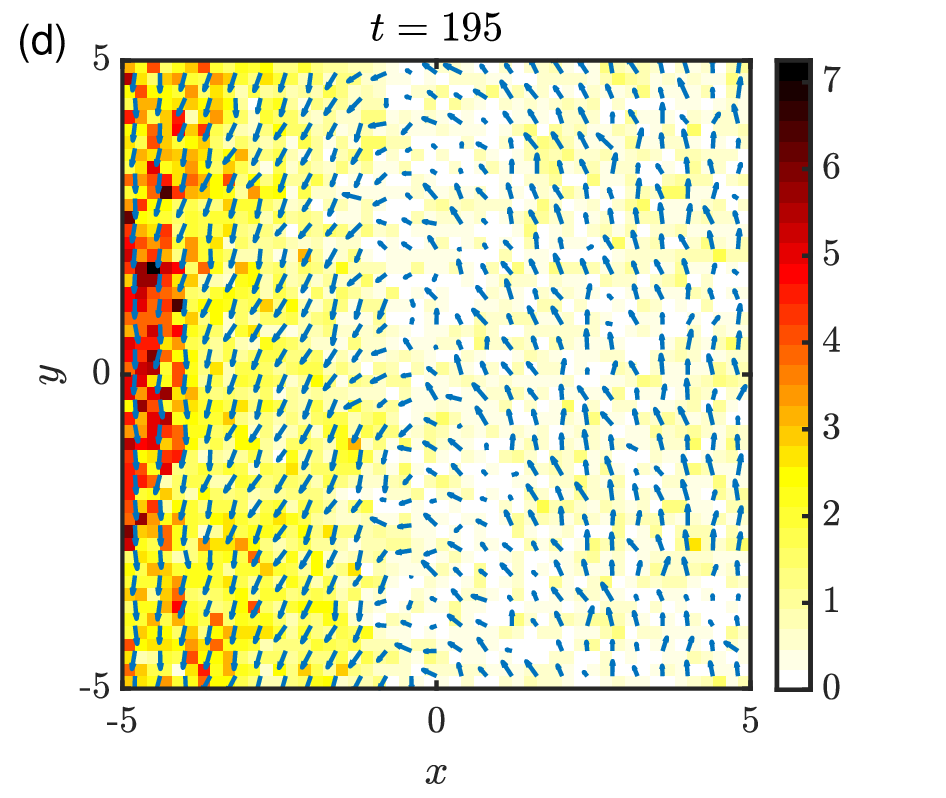}}
  \\
  \subfloat
  {\includegraphics[height=4.8cm]{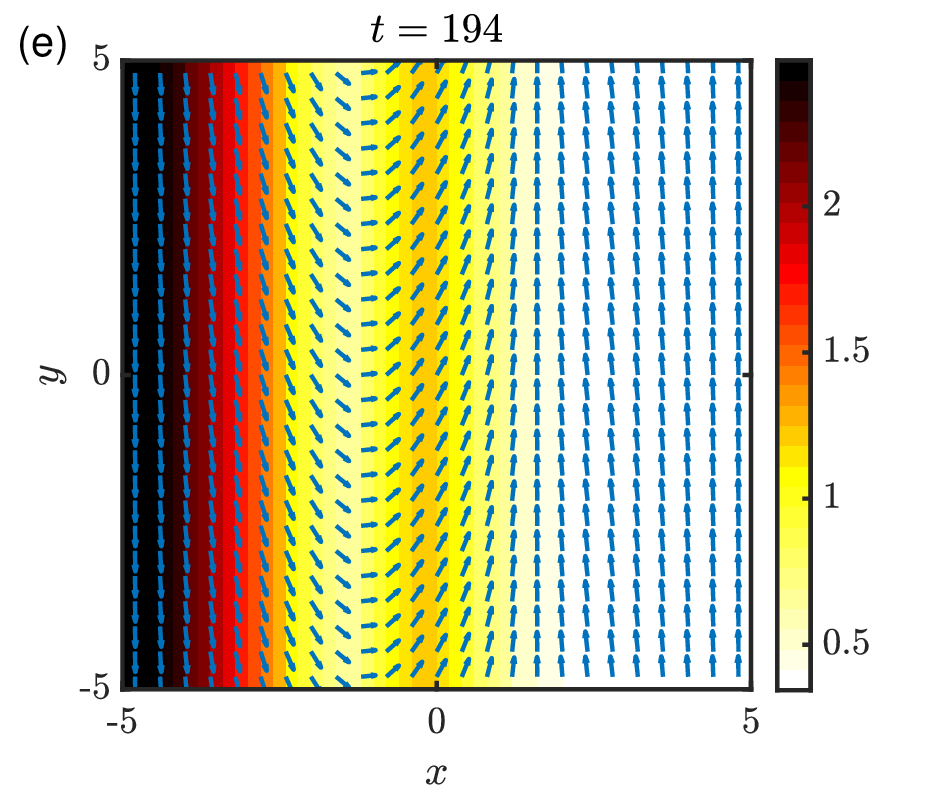}}
  \subfloat
  {\includegraphics[height=4.8cm]{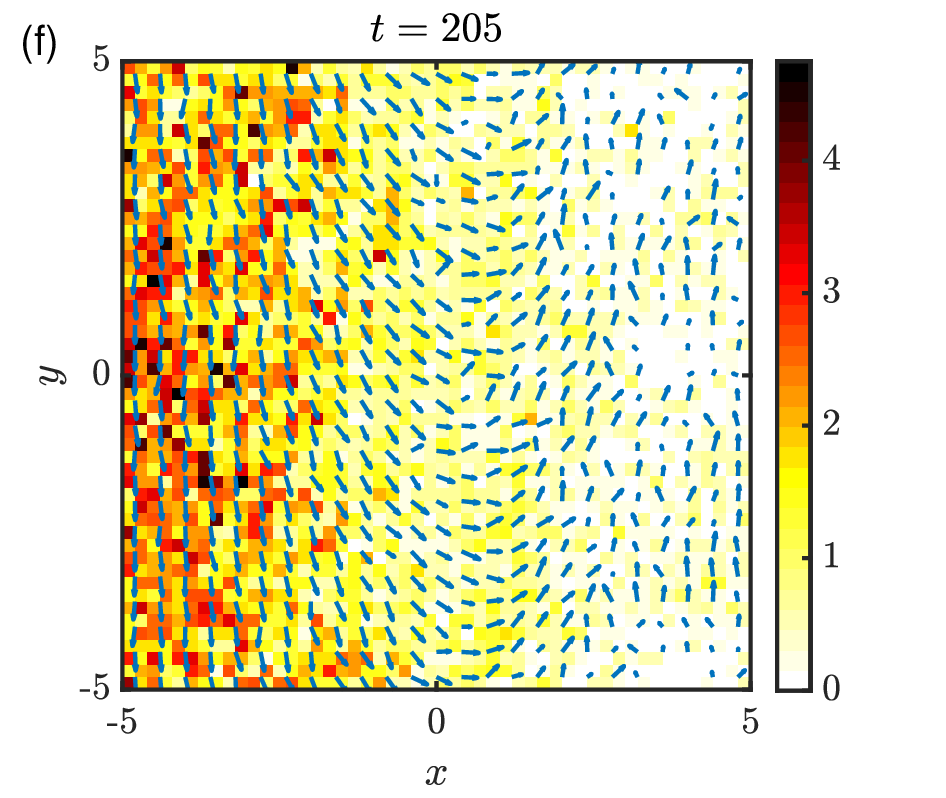}}
  \caption{Density and velocity distributions at different snapshots for the 2D case with two reflexive walls located at $x=\pm 5$. Periodic conditions are imposed on $y=\pm 5$.
  Left column: 16-node Posson-EQMOM;
  Right column: Vicsek model.
  The snapshots cover half of the full density-oscillating cycle.}
  \label{fig:2D2}
\end{figure}

\begin{figure}
    \centering
    \includegraphics[width=0.9\linewidth]{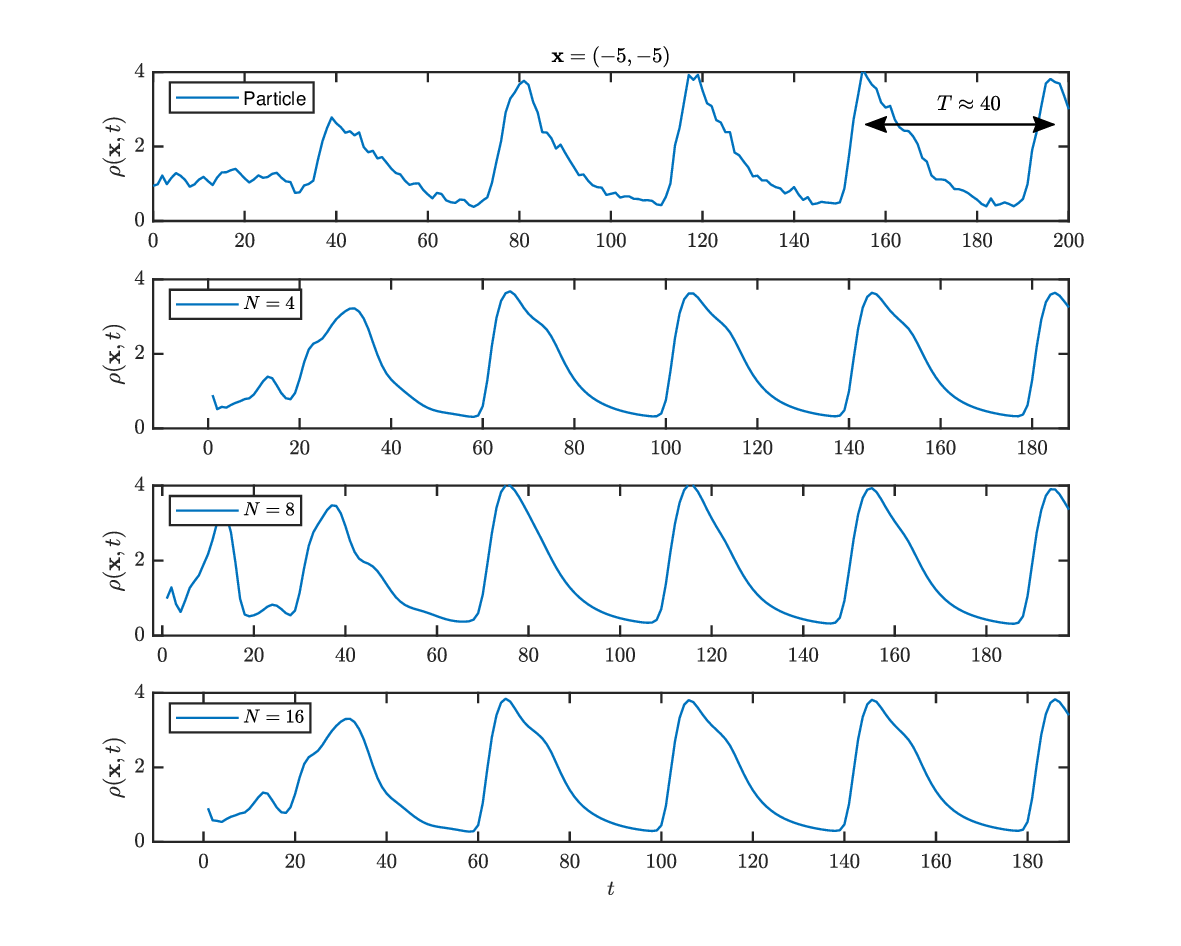}
    \caption{Density oscillations $\rho(\bm x,t)$ for the case in subsection \ref{subsubsec:tworeflw}. The particle result (top panel) is obtained by taking the average on $x=-5$. The latter three results are from the Poisson-EQMOM with different values of $N$ and $\bm x=(-5,-5)$.}
    \label{fig:6_3_2_period_statistic}
\end{figure}

\subsubsection{Four walls with vortex formation}
\label{subsubsec:fourwall}
Having witnessed the intriguing pattern induced by walls, we further assume in this case that the domain is entirely bounded by reflexive walls.
As shown in Fig.~\ref{fig:2D3}, after a transient period, both the kinetic and particle simulations predict a stable bulk vortex rotating around the origin with the particle density concentrating adjacent to the walls. As a result, the center is nearly empty. The same pattern was reported in \cite{GAMBA2015}.
The vortex forms because each wall drives the particles to stream along. Meanwhile, there is a lack of mechanism to effectively transport particles away from the wall, leading to mass accumulation near the wall and a central void, which in turn strengthen the alignment in the `tangential' direction. We can tell from the velocity field that the particles are locally aligned while having a global flow structure.
Fig.~\ref{fig:6_3_3_period} gives the local density evolution at $\bm x = (-5, 5)$ predicted by the Poisson-EQMOM with different values of $N$. The period of rotation is around 30-40 unit time.

\begin{figure}[htbp]
  \centering
  \subfloat
  {\includegraphics[height=4.8cm]{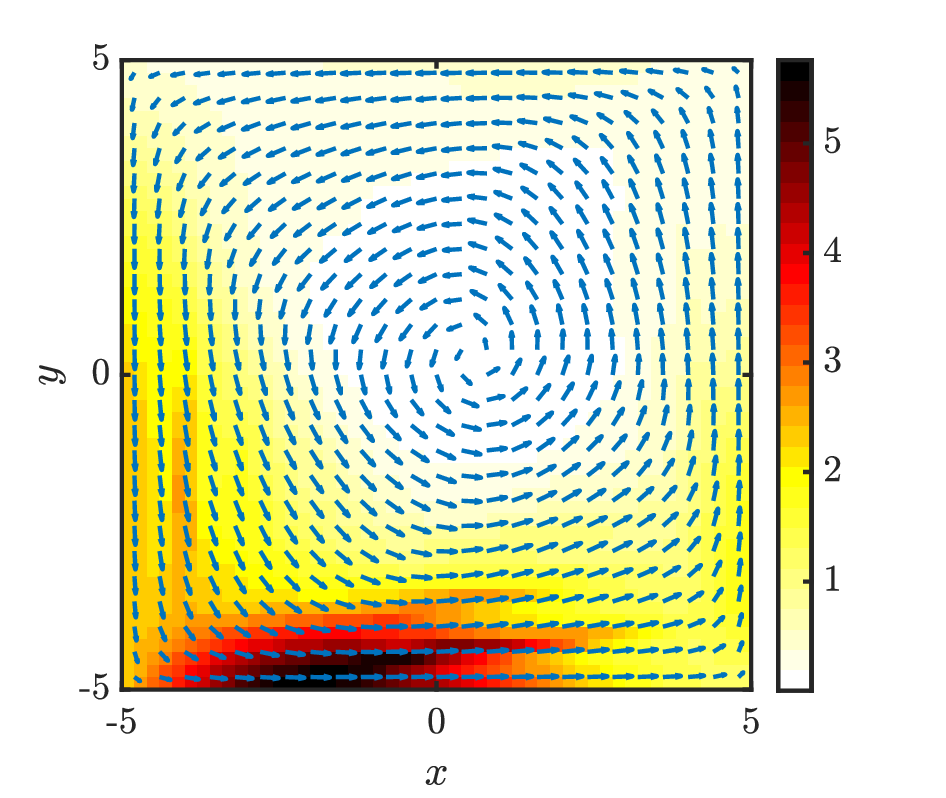}}
  \subfloat
  {\includegraphics[height=4.8cm]{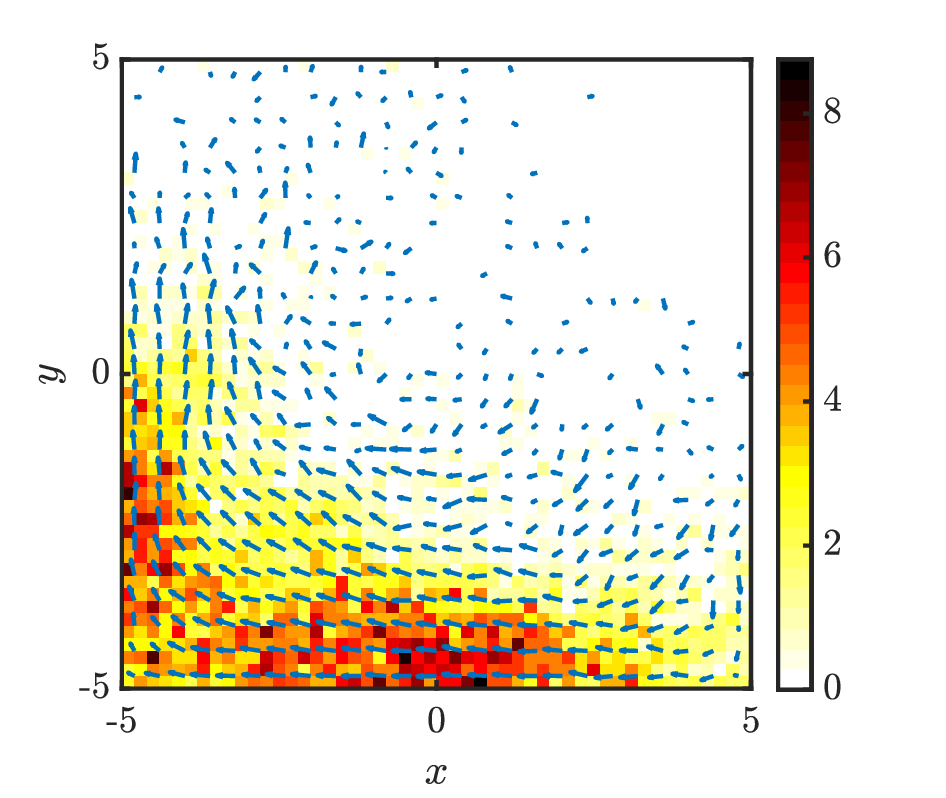}}
  \caption{Density and velocity distributions at $t=200$ for the 2D case with four reflexive walls: (Left) 16-node Poisson-EQMOM; (Right) Vicsek model.}
  \label{fig:2D3}
\end{figure}

\begin{figure}
    \centering    \includegraphics[width=0.9\linewidth]{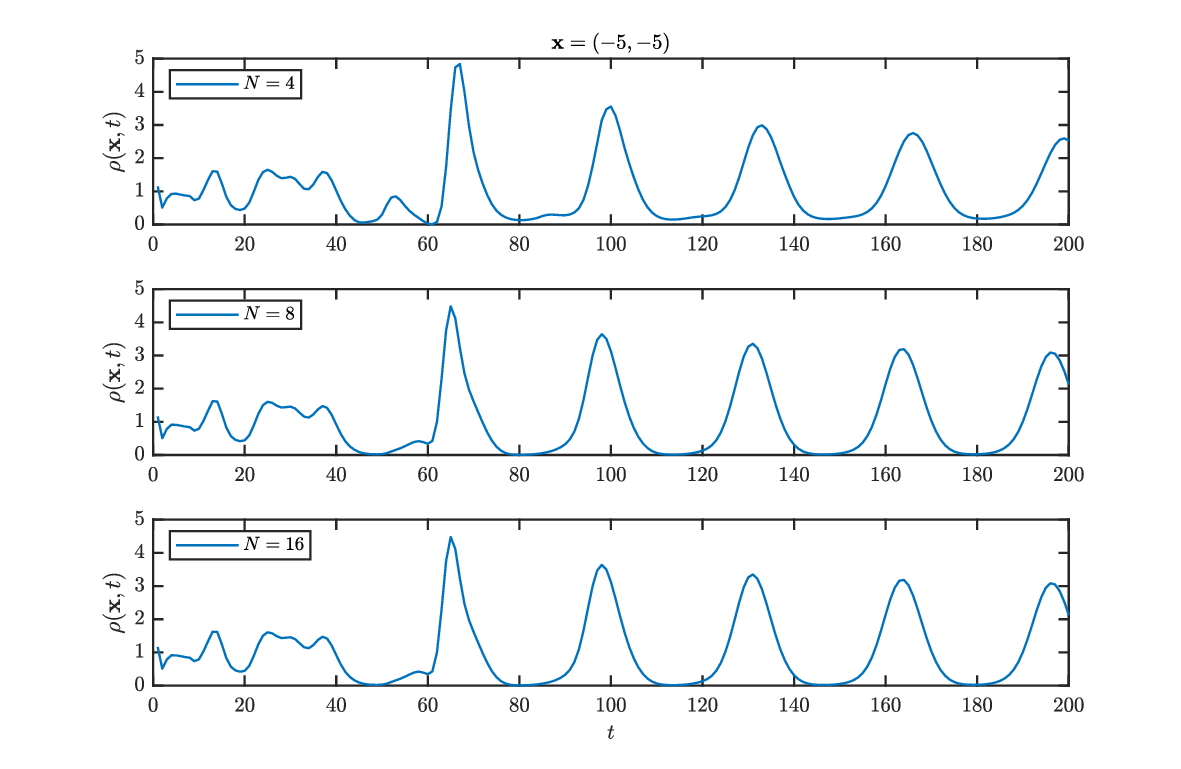}
    \caption{Density oscillations $\rho(\bm x,t)$ for the case in subsection \ref{subsubsec:fourwall}. The results are from the Poisson-EQMOM with different values of $N$ and $\bm x=(-5,-5)$.}
    \label{fig:6_3_3_period}
\end{figure}

\section{Conclusions} \label{sec:con}

This paper develops a novel moment method, called the Poisson-EQMOM, for 2D kinetic equations with velocity of constant magnitude (KE-VC). The idea is to approximate the distribution function with a convex combination of finite shifted homoscedastic Poisson kernels.
We then apply the Poisson-EQMOM to a KE-VC associated with the seminal Vicsek model to generate a new moment closure system.

Besides being \textit{positivity-preserving} and \textit{in conservative form}, the Poisson-EQMOM is shown to be well defined for \textit{all} physically relevant moments. More importantly, as long as the original distribution has a positive lower bound, the resultant approximation \textit{converges} as the number of moments goes to infinity.
The proofs are quite technical and crucially rely on the choice of Poisson kernel.
Based on these nice properties, we devise a robust moment inversion algorithm.
This algorithm involves a `lifting' technique which efficiently enhances the approximation precision while reducing the computational time (as compared with no lifting). The centers of the Poisson modes are computed with the aid of the orthogonal polynomial theory on the unit circle. The inversion process is shown to be stable for all our practical simulations with moderate numbers of moments.

The Poisson-EQMOM is numerically validated with the moment system based on the Vicsek model, where a delicate treatment of the kinetic-based flux is involved.
For all test cases ranging from spatially homogeneous to 2D, the numerical results are satisfactory.
Precisely, all the numerical results indicate the convergence of our method, well capture the flow discontinuities at the macroscopic level, and reveal fascinating wall-induced patterns (of density waves and bulk vortex) generally in consistent with the microscopic simulations.

As the Poisson-EQMOM is a new method, there are many directions for further research, including: (\textit{i}) exploring its extension to three dimensions; (\textit{ii}) proving hyperbolicity in general cases; (\textit{iii}) analyzing the asymptotic-preserving properties of the numerical schemes; (\textit{iv}) conducting a deeper analysis of the wall-induced density band structures observed in this paper and applying the Poisson-EQMOM to more systems.

\section*{Data availability}
All our simulations are implemented in MATLAB. Our codes are now open-sourced at Github (\href{https://github.com/CheYHinSpark/Poisson-EQMOM}{https://github.com/CheYHinSpark/Poisson-EQMOM}).

\section*{Acknowledgments}
The authors owe their sincere appreciation to Prof. Hui Yu at Xiangtan University for her kind assistance in particle simulations of the Vicsek model. The authors are also grateful to Prof. Pierre Degond at CNRS, Prof. Remi Abgrall at UZH, Prof. Shuiqing Li and Mr. Hanlin Lin at Tsinghua University, and Prof. Xianmin Xu at CAS for the insightful discussions.

\appendix

\section{}
\begin{proof}[Proof of Theorem \ref{thm:lift}]
  For both $f$ and $f_N$, recall the Fourier series
  \[
    f = \frac{1}{2\pi} \sum_{k\in\mathbb Z} \hat f(k) e^{ik\theta}, \quad f_N= \frac{1}{2\pi} \sum_{k\in\mathbb Z} \hat{f_N} (k) e^{ik\theta}
  \]
  with $\hat f(k)= \overline{m_k}$ for all $k\in\mathbb Z$ and $\hat{f_N}(k)=\hat f(k)$ for $|k|=0,\dots,N$.
  On the other hand, we see from Eqs.(\ref{eq:peans} \& \ref{eq:poisdef}) that
  \[
  \begin{aligned}
    f_N &= \left( \sum_{|k|\le N} + \sum_{|k|>N} \right) \frac{r_N^{|k|}}{2\pi} \left( \sum_{\alpha=1}^N \rho_{\alpha,N} e^{ik\phi_{\alpha,N}} \right) e^{-ik\theta} \\
    &= \sum_{|k|\le N} \hat f(k) e^{ik\theta} + \sum_{|k|>N} e^{ik\theta} r_N^{|k|} \cdot O(1),
  \end{aligned}
  \]
  in which the $N$-dependence of $\rho_{\alpha}$, $\phi_{\alpha}$ and $r$ is explicitly presented. The second term of the last expression contains $O(1)$ because $|\sum \rho_{\alpha} e^{ik\phi_{\alpha}}| \le \sum \rho_\alpha = m_0$.

  Then we have, for any $2\le p<\infty$, that
  \begin{equation} \label{eq:normineq}
  \begin{split}
    \| f_N - f \|_{L^p} &\le  \| f_N - \sum_{|k|\le N} \hat f(k) e^{ik\theta} \|_{L^p} + \| \sum_{|k|> N} \hat f(k) e^{ik\theta} \|_{L^p} \\
    &\le C \left( \sum_{k>N} r_N^{kp'} \right)^{\frac{1}{p'}} + \| \sum_{|k|> N} \hat f(k) e^{ik\theta} \|_{L^p}
  \end{split}
  \end{equation}
  with $p'=1/(1-p^{-1}) \in (1,2]$. The last step follows from the Hausdorff-Young's inequality.
  Hereinafter $C$ denotes a finite positive constant.

  We also need the fact that $\lim\limits_{N\to\infty} r_N$ exists because the sequence $\{r_N\} \subset [0,1]$ is increasing. To see this, notice that the minimum eigenvalue of $H_{N-1}(\bm M^*_N(r_{N-1}))$ is zero, implying $\lambda(r_{N-1};N)\le0$. Since $\lambda(r_N;N)=0$ and $\lambda(r;N)$ is strictly increasing on $r$ (Proposition \ref{prop:eiginc}), it is only possible that $r_N\ge r_{N-1}$.

  (\textbf{i}).
  Using $f'\in L^2(-\pi,\pi)$, we obtain from the Parseval's theorem that
  \[
    \infty > \int |f'|^2d\theta =
    \sum_{k\in\mathbb Z} |\hat{f'}(k)|^2 = \sum_{k\in\mathbb Z} k^2 |m_k|^2.
  \]
  As a consequence, the series $\sum\limits_{|k|\le N} \overline{m_k} e^{ik\theta} \to 2\pi f$ uniformly as $N\to\infty$ because
  \[
    | \sum_k \overline{m_k} e^{ik\theta} |^2 \le \left( \sum_k |k \overline{m_k} e^{ik\theta}|^2 \right) \left( \sum_k k^{-2} \right) < \infty.
  \]
  This fact immediately leads to $\| \sum\limits_{|k|>N} \hat f(k) e^{ik\theta} \|_{L^2} \to 0$ as $N\to \infty$. Thanks to Eq.(\ref{eq:normineq}), it then suffices to show
  \begin{equation} \label{eq:rseqvan}
    \sum_{k>N} r_N^{2k} = \frac{r_N^{2N+2}}{1-r_N^2} \le C \frac{r_N^{2N}}{1-r_N} \to 0 \quad \text{as} \quad N\to\infty,
  \end{equation}
  which holds trivially if $\lim r_N <1$. As such, in what follows we shall assume $\lim r_N=1$.

  As we need only to consider very large $N$, one can specify a sufficiently large $n\le N$ such that
  \begin{equation} \label{eq:nconstrain}
    \sum_{|k|< n} \overline{m_k} e^{ik\theta} \ge c\pi \quad \text{and}\quad
    \sum_{|k|\ge n}k^2 |m_k|^2 < \delta_n \to 0^+,
  \end{equation}
  which are both consequences of $f'\in L^2$.

  Take $a=(a_0,\dots,a_N)\in\mathbb C^{N+1}$ and define
  \begin{equation} \label{eq:qnra}
  \begin{split}
    q_N(r;a):&=\bar a^T H_N(\bm M^*_N(r)) a
    = \sum_{k,l=0}^N \overline{a_k}a_l m_{l-k} r^{-|l-k|} \\
    &= \frac{1}{2\pi} \int_{-\pi}^\pi | a(\theta) |^2 \left( \sum_{|k|\le N} r^{-|k|} \overline{m_k} e^{ik\theta} \right) d\theta
  \end{split}
  \end{equation}
  with $a(\theta) = \sum_l a_l e^{il\theta}$.
  The last expression can be verified by applying the orthogonality of the bases $e^{ik\theta}$.
  Since $H_N(\bm M^*_N(r_N))$ is singular, there exists $\tilde a\in\mathbb C^{N+1}$ with $\| \tilde a\|=1$ such that
  \[
    0 = |q_N(r_N;\tilde a)| = |A+B| \ge |A| - |B| .
  \]
  Here we denote
  \[
    A = \frac{1}{2\pi} \int_{-\pi}^\pi | \tilde a(\theta) |^2 \left( \sum_{|k|< n} r_N^{-|k|} \overline{m_k} e^{ik\theta} \right) d\theta.
  \]
  Recall $\lim r_N=1$. Take a sufficiently large $N$ with $r_N$ so close to 1 that the real finite sum $\sum\limits_{|k|< n} r_N^{-|k|} \overline{m_k}e^{ik\theta}$ is close to $\sum\limits_{|k|< n} \overline{m_k}e^{ik\theta}$.
  Considering Eq.(\ref{eq:nconstrain}), one may require $\sum\limits_{|k|< n} r_N^{-|k|} \overline{m_k}e^{ik\theta} \ge c\pi/2$ and hence $A \ge c\pi/2$ (because $\int |\tilde a(\theta)|^2 d\theta=2\pi$).

  For the rest part $B=q_N(r_N;\tilde a)-A$, we have
  \begin{equation} \label{eq:Bestim}
  \begin{split}
    |B|^2 &\le \left( \sum_{|k|=n}^N r_N^{-|k|} |m_k| \right)^2
    \le 4 \left( \sum_{k=n}^N k^{-2} r_N^{-2k} \right) \left( \sum_{k=n}^N k^2|m_k|^2 \right) \\
    &\le 8\delta_n \sum_{k=n}^N \frac{r_N^{-2k}}{k(k+1)}
    \le 8\delta_n \left( \frac{r_N^{-2n}}{n} + (1-r_N^2)r_N^{-2N}\sum_{k=n+1}^N \frac{r_N^{2N-2k}}{k} \right).
  \end{split}
  \end{equation}
  Using $1+r_N \le 2$ and $\sum_{k=n+1}^N k^{-1}r_N^{2N-2k} \le \sum_{k=n+1}^N k^{-1} \le 2\log N$,
  we deduce from the inequality $|A|\le |B|$ that
  \[
    \delta_n \left( C(1-r_N)r_N^{-2N}\log N + O(\frac{1}{n}) \right) \ge \left(\frac{c\pi}{2}\right)^2>0.
  \]
  With both $N\ge n\to\infty$, it is seen that $\delta_n\to 0^+$ and hence
  \begin{equation} \label{eq:ineq1}
    (1-r_N)r_N^{-2N}\log N \to \infty \quad \text{as} \quad N\to \infty.
  \end{equation}

  We then claim that the sequence $\{r_N^N\log N\}$ is uniformly bounded. Otherwise there exists an unbounded subsequence $\{r_T^T\log T \}$ (indexed with $T$) which must satisfy $r_T^T \ge (\log T)^{-1}$ for large $T$. On the other hand, Eq.(\ref{eq:ineq1}) implies that for large $T$, we have $1-r_T \ge r_T^{2T} (\log T)^{-1} \ge (\log T)^{-3}$. This leads to
  \[
  \begin{aligned}
    \log(r_T^T\log T) &\le T \log(1-(\log T)^{-3}) + \log \log T \\
    &\le -T(\log T)^{-3} + \log \log T \to -\infty \quad \text{as}
    \quad T \to \infty,
  \end{aligned}
  \]
  a contradiction to the unboundedness of $\{r_T^T\log T\}$.

  Now the estimation in Eq.(\ref{eq:Bestim}) can be refined by computing
  \[
    \sum_{k=1}^N \frac{r_N^{2N-2k}}{k} = \left( \sum_{k=1}^{\lfloor \frac{N}{2}\rfloor} + \sum_{k=\lfloor \frac{N}{2} \rfloor+1}^N \right) \frac{r_N^{2N-2k}}{k} \le 2(r_N^N \log N + \log 2) < \infty.
  \]
  As a result, the inequality $|A|\le |B|$ gives
  \[
    \delta_n \left( C(1-r_N)r_N^{-2N} + O(\frac{1}{n}) \right) \ge \left(\frac{c\pi}{2}\right)^2>0.
  \]
  By letting $N\ge n\to \infty$, we thus derive the desired estimate for $r_N$ in Eq.(\ref{eq:rseqvan}).

  (\textbf{ii}).
  If $f$ is Lipschitz, $f'$ is uniformly bounded and $f'\in L^2(-\pi,\pi)$. According to Eq.(\ref{eq:normineq}), it suffices to show, for $p'\in(1,2]$,
  \begin{equation} \label{eq:rlp}
    \sum_{k>N}r_N^{kp'} = \frac{r_N^{p'(N+1)}}{1-r_N^{p'}} \le C \frac{r_N^{Np'}}{1-r_N} \to 0 \quad \text{as} \quad N\to\infty.
  \end{equation}
  Similarly, we only need to deal with the case $\lim r_N=1$.

  Our strategy is to establish an estimation related to the lower and upper bounds of the integration $\int |\partial_r q_N(r;a)| dr$. Here $q_N(r;a)$ is defined in Eq.(\ref{eq:qnra}) for $a\in\mathbb C^{N+1}$ and $\|a\|=1$. We compute
  \[
    q_N(1;a) = \frac{1}{2\pi}\int_{-\pi}^\pi |a(\theta)|^2 \left( \sum_{|k|\ge N} \overline{m_k}e^{ik\theta} \right) d\theta
    =\frac{1}{2\pi}\int_{-\pi}^\pi |a(\theta)|^2 f(\theta)d\theta \ge c,
  \]
  where the second equality follows from the orthogonality of the bases $e^{ik\theta}$.
  Specify $a$ such that $q_N(r_N;a)=0$, and then we have
  \begin{equation}\label{eq:qnlowb}
    \int_{r_N}^1 |\partial_r q_N(r;a)| dr \ge |q_N(1;a)-q_N(r_N;a)| \ge c.
  \end{equation}

  The upper bound needs the detailed form of the derivative
  \[
    \partial_r q_N(r;a) = -\frac{r^{-1}}{2\pi} \int_{-\pi}^\pi |a(\theta)|^2 g(\theta) d\theta
  \]
  with
  \[
  \begin{aligned}
    g(\theta) &= \sum_{|k|\le N} |k| r^{-|k|} \overline{m_k}e^{ik\theta} = 2\sum_{k=1}^N kr^{-k} \Re(m_{-k}e^{ik\theta}) \\
    &= 2\sum_{k=1}^N kr^{-k} \int_{-\pi}^\pi f(t+\theta) \cos kt dt
    = -2 \int_{-\pi}^\pi f'(t+\theta) \left( \sum_{k=1}^N r^{-k}\sin kt \right) dt.
  \end{aligned}
  \]
  To derive an upper bound, we compute
  \[
  \begin{aligned}
    \sum_{k=1}^N r^{-k} \sin kt &= \Im \sum_{k=1}^N (r^{-1}e^{it})^k = \Im \frac{1-r^{-N}e^{iNt}}{re^{-it}-1} \\
    &= \frac{(1-r)r^{-N} \sin Nt + r^{1-N}(\sin Nt - \sin (N+1)t) + r \sin t}{1-2r \cos t + r^2}
  \end{aligned}
  \]
  and try to control each term. Here $\Im(z)$ denotes the imaginary part of $z\in\mathbb C$.
  Suppose $r>\frac{1}{2}$, which can be satisfied for large $N$. With the property $\sin (t/2) \ge t/\pi$ for $|t|\le \pi$, it holds that
  \[
    \frac{|\sin t|}{1-2r\cos t+r^2} = \frac{|\sin t|}{(1-r)^2+4r\sin^2\left( \frac{t}{2} \right)} \le \frac{|t|}{(1-r)^2+2\left( \frac{t}{\pi} \right)^2}
    \le \frac{C|t|}{(1-r)^2+t^2},
  \]
  resulting in
  \[
    \int_{-\pi}^\pi \frac{|\sin t|}{1-2r\cos t+r^2} dt \le C \int_0^{\pi} \frac{2t}{(1-r)^2+t^2}dt = C \log \left[ 1+ \left( \frac{\pi}{1-r} \right)^2 \right].
  \]

  Noticing $| \sin Nt - \sin(N+1)t| = 2| \sin \frac{t}{2} \cos(N+\frac{1}{2})t| \le 2| \sin \frac{t}{2}|$, we obtain
  \[
  \begin{aligned}
    |\partial_r q_N(r;a)| &\le \|r^{-1} g\|_{L^\infty} \le C \int_{-\pi}^\pi \left | r^{-1}\sum_{k=1}^N r^{-k} \sin kt \right | dt \\
    & \le Cr^{-N} + C(r^{-N} + 1) \log\left[ 1+ \left( \frac{\pi}{1-r} \right)^2 \right]
  \end{aligned}
  \]
  for $r$ sufficiently close to 1.
  To evaluate the integration, a technical result is needed:
  \[
  \begin{aligned}
    \int_{r_N}^1 \log \left[1+\left( \frac{\pi}{1-r} \right)^2\right] dr
    &= \int_0^{1-r_N} \log \left( 1+\frac{\pi^2}{t^2}\right) dt \\
    &= (1-r_N) \log \left( 1+\frac{\pi^2}{(1-r_N)^2} \right)
    + \int_0^{1-r_N} \frac{2\pi^2}{t^2+\pi^2} dt \\
    &\le (1-r_N) \left[ \log \left( 1+\frac{\pi^2}{(1-r_N)^2} \right) + 2\right],
  \end{aligned}
  \]
  where the integration by parts has been used.
  Therefore, by applying $1\le r^{-N}\le r_N^{-N}$ for $r_N\le r\le 1$, we get
  \begin{equation} \label{eq:qnupb}
  \begin{split}
    c \le \int_{r_N}^1 |\partial_r q_N(r;a)| dr &\le C r_N^{-N} \int_{r_N}^1 \log \left[1+\left( \frac{\pi}{1-r} \right)^2\right] dr\\
    &\le C r_N^{-N} (1-r_N) \log \left[1+\left( \frac{\pi}{1-r_N} \right)^2\right] \\
    &\le C r_N^{-N} (1-r_N)^{\frac{1}{p'}+\epsilon}
  \end{split}
  \end{equation}
  for any $0<\epsilon<1-\frac{1}{p'}$ because $\lim\limits_{x\to 0^+} x^\delta \log x = 0$ for any $\delta>0$.
  Rearranging the above inequality as $r_N^{Np'}(1-r_N)^{-1}\le C (1-r_N)^{\epsilon p'}$ gives Eq.(\ref{eq:rlp}) immediately.

  (\textbf{iii}).
  Suppose for the sake of contradiction that $f_N$ does not converge to $f$ uniformly. Then there exists a subsequence $\{r_T\}$ (indexed with $T$) satisfying
  \[
    r_T^{-T}(1-r_T)\le C.
  \]
  Otherwise we have $r_N^N / (1-r_N) \to 0$ as $N\to \infty$, which implies uniform convergence because $|f_N-f| = |\sum\limits_{|k|>N} e^{ik\theta} r_N^{|k|} O(1)| \le C \sum\limits_{k>N} r_N^k$.
  We shall reveal a contradiction from the integral $\int |\partial_r q_T(r;a)|dr$ in Eq.(\ref{eq:qnupb}).

  For this purpose, it is first seen that $r_T^{-[\alpha T]} (1-r_T) \to 0$ ($T\to\infty$) for any $0<\alpha<1$. The same argument as in Eq.(\ref{eq:qnupb}), with $q_N$ replaced by $q_{[\alpha T]}$, gives
  \[
    \int_{r_T}^1 |\partial_r q_{[\alpha T]}(r;a)|dr \le Cr_T^{-[\alpha T]}(1-r_T)^{\frac{\alpha+1}{2}}
    \le C \left( r_T^{-\frac{2\alpha}{\alpha+1}T}(1-r_T) \right)^{\frac{\alpha+1}{2}} \to 0
  \]
  as $T\to\infty$.

  Before we proceed, a technical result is needed:
  \[
    \left| \int f'\sin kt dt \right| \le \frac{C}{k},
  \]
  which can be verified by the equality
  \[
  \begin{aligned}
    \int_0^{2\pi} f'\sin ktdt &= \sum_{j=0}^{k-1} \left( \int_{\frac{2j\pi}{k}}^{\frac{(2j+1)\pi}{k}} + \int_{\frac{(2j+1)\pi}{k}}^{\frac{(2j+2)\pi}{k}} \right) f'\sin kt dt \\
    &= \int_0^{\frac{\pi}{k}} \sum_{j=0}^{k-1} \left[ f'(t+\frac{2j\pi}{k}) - f'(t+\frac{(2j+1)\pi}{k}) \right] \sin kt dt
  \end{aligned}
  \]
  and the assumption that $f'$ is of bounded variation.

  Then an estimate of $\int |\partial_r \left(q_T-q_{[\alpha T]}\right)|dr$ yields
  \[
  \begin{aligned}
    \int_{r_T}^1 |\partial_r \left(q_T-q_{[\alpha T]}\right)|dr
    &\le C \int \left| f'(t+\theta) \sum_{[\alpha T]<k\le T} r^{-k} \sin kt \right| dt \\
    &\le C \sum_{[\alpha T]<k\le T} r^{-k} \left| \int f'(t+\theta)\sin kt dt \right|
    \le C \sum_{[\alpha T]<k\le T} \frac{r^{-k}}{k} \\
    &\le C r_T^{-T} \cdot \frac{1}{\alpha T} \cdot (1-\alpha)T
    = C r_T^{-T} \frac{1-\alpha}{\alpha} \to 0
  \end{aligned}
  \]
  as $\alpha \to 1$.
  Therefore, we conclude that $\int_{r_T}^1 |\partial_r q_T(r;a)|dr$ can be arbitrarily small as $T\to \infty$ and $\alpha \to 1$, which contradicts the lower bound $c>0$ (see Eq.(\ref{eq:qnlowb})). This completes the proof.
\end{proof}

\bibliographystyle{amsplain}
\bibliography{references}

\providecommand{\bysame}{\leavevmode\hbox to3em{\hrulefill}\thinspace}
\providecommand{\MR}{\relax\ifhmode\unskip\space\fi MR }
\providecommand{\MRhref}[2]{%
  \href{http://www.ams.org/mathscinet-getitem?mr=#1}{#2}
}
\providecommand{\href}[2]{#2}
\begin{thebibliography}{10}

\bibitem{All2019}
Graham~W. Alldredge, Martin Frank, and Cory~D. Hauck, \emph{A regularized
  entropy-based moment method for kinetic equations}, SIAM Journal on Applied
  Mathematics \textbf{79} (2019), no.~5, 1627--1653.

\bibitem{ALL2016}
Graham~W. Alldredge, Ruo Li, and Weiming Li, \emph{Approximating the {m}-2
  method by the extended quadrature method of moments for radiative transfer in
  slab geometry}, Kinet. Relat. Models \textbf{9} (2016), no.~2, 237--249.

\bibitem{bertin2006}
Eric Bertin, Michel Droz, and Guillaume Gr{\'e}goire, \emph{Boltzmann and
  hydrodynamic description for self-propelled particles}, Physical Review E
  \textbf{74} (2006), no.~2, 022101.

\bibitem{Borsche2019}
Raul Borsche, Axel Klar, and Florian Schneider, \emph{Kinetic and moment models
  for cell motion in fiber structures}, pp.~1--38, Springer International
  Publishing, Cham, 2019.

\bibitem{brunner2002}
Thomas~A Brunner, \emph{Forms of approximate radiation transport}, Tech.
  report, Sandia National Lab.(SNL-NM), Albuquerque, NM (United States);
  Sandia~…, 2002.

\bibitem{Cai13}
Zhenning Cai, Yuwei Fan, and Ruo Li, \emph{Globally hyperbolic regularization
  of {G}rad's moment system}, Commun. Pur. Appl. Math. \textbf{67} (2013),
  no.~3, 464--518.

\bibitem{Chalons2010}
C.~Chalons, R.~O. Fox, and M.~Massot, \emph{A multi-{G}aussian quadrature
  method of moments for gas-particle flows in a {LES} framework}, Proceedings
  of the Summer Program 2010, Center for Turbulence Research, Stanford
  University, Stanford, CA, 2010, pp.~347--358.

\bibitem{Chalons2017}
C.~Chalons, R.O. Fox, F.~Laurent, M.~Massot, and A.~Vi\'{e}, \emph{Multivariate
  {G}aussian extended quadrature method of moments for turbulent disperse
  multiphase flow}, Multiscale Model. Simul. \textbf{15} (2017), no.~4,
  1553--1583.

\bibitem{Chalons2012}
Christophe Chalons, Damieh Kah, and Marc Massot, \emph{Beyond pressureless gas
  dynamics: Quadrature-based velocity moment models}, Commun. Math. Sci.
  \textbf{10} (2012), no.~4, 1241--1272.

\bibitem{chanrte}
Subrahmanyan Chandrasekhar, \emph{Radiative transfer}, Courier Corporation,
  2013.

\bibitem{chate2020}
Hugues Chat{\'e}, \emph{Dry aligning dilute active matter}, Annual Review of
  Condensed Matter Physics \textbf{11} (2020), 189--212.

\bibitem{Chen2024}
Yihong Chen, Qian Huang, and Wen-An Yong, \emph{Discrete-velocity-direction
  models of bgk-type with minimum entropy: {I}{I}—weighted models}, Journal
  of Scientific Computing \textbf{99} (2024), 84.

\bibitem{degond2008}
Pierre Degond and S{\'e}bastien Motsch, \emph{Continuum limit of self-driven
  particles with orientation interaction}, Mathematical Models and Methods in
  Applied Sciences \textbf{18} (2008), no.~supp01, 1193--1215.

\bibitem{deshpande1986}
Suresh~M Deshpande, \emph{A second-order accurate kinetic-theory-based method
  for inviscid compressible flows}, Tech. report, 1986.

\bibitem{dvk21}
J.~R. Dorfman, Henk van Beijeren, and T.~R. Kirkpatrick, \emph{Contemporary
  kinetic theory of matter}, Cambridge University Press, 2021.

\bibitem{evans}
Lawrence~C Evans, \emph{Partial differential equations}, 2nd ed., vol.~19,
  American Mathematical Society, 2010.

\bibitem{fan2020}
Yuwei Fan and Julian Koellermeier, \emph{Accelerating the convergence of the
  moment method for the boltzmann equation using filters}, Journal of
  Scientific Computing \textbf{84} (2020), 1--28.

\bibitem{Figalli2018}
Alessio Figalli, Moon-Jin Kang, and Javier Morales, \emph{Global well-posedness
  of the spatially homogeneous kolmogorov–vicsek model as a gradient flow},
  Archive for Rational Mechanics and Analysis \textbf{227} (2018), 869--896.

\bibitem{filbet2018}
Francis Filbet and Chi-Wang Shu, \emph{Discontinuous galerkin methods for a
  kinetic model of self-organized dynamics}, Mathematical Models and Methods in
  Applied Sciences \textbf{28} (2018), no.~06, 1171--1197.

\bibitem{GAMBA2015}
Irene~M. Gamba, Jeffrey~R. Haack, and Sebastien Motsch, \emph{Spectral method
  for a kinetic swarming model}, Journal of Computational Physics \textbf{297}
  (2015), 32--46.

\bibitem{KH_Moment13}
C.~Kristopher Garrett and Cory~D. Hauck, \emph{A comparison of moment closures
  for linear kinetic transport equations: The line source benchmark}, Transport
  Theory and Statistical Physics \textbf{42} (2013), no.~6-7, 203--235.

\bibitem{Grad1949}
H.~Grad, \emph{On the kinetic theory of rarefield gases}, Comm. Pure Appl.
  Math. \textbf{2} (1949), 331--407.

\bibitem{Griette2019}
Quentin Griette and Sebastien Motsch, \emph{Kinetic equations and
  self-organized band formations}, pp.~173--199, Springer International
  Publishing, Cham, 2019.

\bibitem{hauck2011}
C.~D. Hauck, \emph{High-order entropy-based closures for linear transport in
  slab geometry}, Commun. Math. Sci. \textbf{9} (2011), no.~1, 187--205.

\bibitem{Huangjsc}
Qian Huang, Yihong Chen, and Wen-An Yong, \emph{Discrete-velocity-direction
  models of bgk-type with minimum entropy: I. basic idea}, Journal of
  Scientific Computing \textbf{95} (2023), no.~3, 80.

\bibitem{Huang2020}
Qian Huang, Shuiqing Li, and Wen-An Yong, \emph{Stability analysis of
  {Q}uadrature-{B}ased {M}oment {M}ethods for kinetic equations}, SIAM J. Appl.
  Math. \textbf{80} (2020), no.~1, 206--231.

\bibitem{Huang2020k}
Qian Huang, Peng Ma, Linchao Cai, and Shuiqing Li, \emph{Kinetic simulation of
  fine particulate matter evolution and deposition in a 25 kw pulverized coal
  combustor}, Energy \& Fuels \textbf{34} (2020), no.~12, 15389--15398.

\bibitem{ihle2011}
Thomas Ihle, \emph{Kinetic theory of flocking: Derivation of hydrodynamic
  equations}, Physical Review E \textbf{83} (2011), no.~3, 030901.

\bibitem{junk1998}
Michael Junk, \emph{Domain of definition of levermore's five-moment system}, J.
  Stat. Phys. \textbf{93} (1998), 1143–1167.

\bibitem{Laiu2016}
M.~Paul Laiu, Cory~D. Hauck, Ryan~G. McClarren, Dianne~P. O'Leary, and
  Andr\'{e}~L. Tits, \emph{Positive filtered p\$\_n\$ moment closures for
  linear kinetic equations}, SIAM Journal on Numerical Analysis \textbf{54}
  (2016), no.~6, 3214--3238.

\bibitem{Levermore1996}
C.D. Levermore, \emph{Moment closure hierarchies for kinetic theories}, J.
  Stat. Phys. \textbf{83} (1996), 1021–1065.

\bibitem{marchetti}
M~Cristina Marchetti, Jean-Fran{\c{c}}ois Joanny, Sriram Ramaswamy,
  Tanniemola~B Liverpool, Jacques Prost, Madan Rao, and R~Aditi Simha,
  \emph{Hydrodynamics of soft active matter}, Reviews of modern physics
  \textbf{85} (2013), no.~3, 1143.

\bibitem{MarFox2013}
Daniele~L. Marchisio and Rodney~O. Fox, \emph{Computational models for
  polydisperse particulate and multiphase systems}, Cambridge University Press,
  Cambridge, 2013.

\bibitem{Marq13}
W.~Marques and A.R. Méndez, \emph{On the kinetic theory of vehicular traffic
  flow: Chapman–enskog expansion versus grad’s moment method}, Physica A:
  Statistical Mechanics and its Applications \textbf{392} (2013), no.~16,
  3430--3440.

\bibitem{MATLAB}
MATLAB, \emph{version 9.5.0 (r2018b)}, The MathWorks Inc., Natick,
  Massachusetts, 2018.

\bibitem{McD2013}
James McDonald and Manuel Torrilhon, \emph{Affordable robust moment closures
  for cfd based on the maximum-entropy hierarchy}, Journal of Computational
  Physics \textbf{251} (2013), 500--523.

\bibitem{Mc1997}
R.~McGraw, \emph{Description of aerosol dynamics by the quadrature method of
  moments}, Aerosol Sci. Technol. \textbf{27} (1997), 255--265.

\bibitem{mihalas}
Dimitri Mihalas and Barbara Weibel-Mihalas, \emph{Foundations of radiation
  hydrodynamics}, Courier Corporation, 1999.

\bibitem{minerbo1978}
Gerald~N Minerbo, \emph{Maximum entropy eddington factors}, Journal of
  Quantitative Spectroscopy and Radiative Transfer \textbf{20} (1978), no.~6,
  541--545.

\bibitem{Picard23}
{Pichard, Teddy}, \emph{Some recent advances on the method of moments in
  kinetic theory}, ESAIM: ProcS \textbf{75} (2023), 86--95.

\bibitem{Pigou2018}
M.~Pigou, J.~Morchain, P.~Fede, MI. Penet, and G.~Laronze, \emph{New
  developments of the extended quadrature method of moments to solve population
  balance equations}, J. Comput. Phys. \textbf{365} (2018), 243--268.

\bibitem{pomraning2005}
Gerald~C Pomraning, \emph{The equations of radiation hydrodynamics}, Courier
  Corporation, 2005.

\bibitem{PRsemi}
L.~Preziosi and L.~Rondoni, \emph{Discretization of the boltzmann equation and
  the semicontinuous model}, pp.~59--95.

\bibitem{SchTor2015}
Roman~Pascal Schaerer and Manuel Torrilhon, \emph{On singular closures for the
  5-moment system in kinetic gas theory}, Communications in Computational
  Physics \textbf{17} (2015), no.~2, 371–400.

\bibitem{gtm277}
Konrad Schmüdgen, \emph{The moment problem. graduate texts in mathematics, vol
  277}, Springer, Cham, 2017.

\bibitem{Seng1988}
A~Sengupta and C~K Venkatesan, \emph{On a discretised spectral approximation in
  neutron transport theory}, Journal of Physics A: Mathematical and General
  \textbf{21} (1988), no.~6, 1341.

\bibitem{Tau23}
Pierre-Yves~C.R. Taunay and Michael~E. Mueller, \emph{Quadrature-based moment
  methods for kinetic plasma simulations}, Journal of Computational Physics
  \textbf{473} (2023), 111700.

\bibitem{Vicsek95}
Tam\'as Vicsek, Andr\'as Czir\'ok, Eshel Ben-Jacob, Inon Cohen, and Ofer
  Shochet, \emph{Novel type of phase transition in a system of self-driven
  particles}, Phys. Rev. Lett. \textbf{75} (1995), 1226--1229.

\bibitem{vikas2013}
V~Vikas, Cory~D Hauck, Zhi~Jian Wang, and Rodney~O Fox, \emph{Radiation
  transport modeling using extended quadrature method of moments}, Journal of
  Computational Physics \textbf{246} (2013), 221--241.

\bibitem{yang2004}
Ke~Yang, \emph{Using the poisson kernel in model building and selection}, The
  Pennsylvania State University, 2004.

\bibitem{Yuan2011}
C.~Yuan and R.O. Fox, \emph{Conditional quadrature method of moments for
  kinetic equations}, J. Comput. Phys. \textbf{230} (2011), no.~22, 8216--8246.

\bibitem{zhang2023}
Ruixi Zhang, Qian Huang, and Wen-An Yong, \emph{Stability analysis of an
  extended quadrature method of moments for kinetic equations}, SIAM J. Math.
  Anal. \textbf{56} (2024), 4687--4711.

\end{thebibliography}

\end{document}